\tikzstyle{mybox} = [draw=black, very thick, diamond, inner ysep=5pt, inner xsep=5pt]
\tikzstyle{myboxS} = [draw=black, thick, rectangle, rounded corners, inner ysep=5pt, inner xsep=5pt]
\tikzstyle{arrow}=[draw, -latex]
\definecolor{shadecolor}{gray}{0.9}
\tikzstyle{arrow}=[draw, -latex]
\definecolor{BrickRed}{rgb}{0.8, 0.25, 0.33}
\definecolor{OliveGreen}{rgb}{0.33, 0.42, 0.18}
\definecolor{White}{rgb}{1, 1, 1}
\newtheorem{lemma}{Lemma}[section]
\newtheorem{proposition}[lemma]{Proposition}
\newtheorem{corollary}[lemma]{Corollary}
\newtheorem{definition}[lemma]{Definition}
\newtheorem{example}[lemma]{Example}
\newtheorem{remark}[lemma]{Remark}
\begin{document}

\title{ Decision-Making Frameworks for Network Resilience \\ --\\  Managing and Mitigating  Systemic (Cyber) Risk}

	\author[a]{Gregor Svindland}
	\author[a]{Alexander Vo\ss}
	\affil[a]{\normalsize Institute of Actuarial and Financial Mathematics \& House of Insurance, Leibniz University Hannover, Germany}

	\date{\today}
	
\maketitle

\begin{abstract}
 {We introduce a decision-making framework tailored for the management of systemic risk in networks. This framework is constructed upon three fundamental components: (1) a set of acceptable network configurations, (2) a set of interventions aimed at risk mitigation, and (3) a cost function quantifying the expenses associated with these interventions.
While our discussion primarily revolves around the management of systemic cyber risks in digital networks, we concurrently draw parallels to risk management of other complex systems where analogous approaches may be adequate.}\end{abstract}\vspace{0.2cm}
	\textsf{\textbf{Keywords:}} 
{Cyber Risk, Cyber Resilience, Cybersecurity, Systemic Risk, Critical Infrastructures, Risk Management for Networks.}
\\
\textbf{MSC2020 Subject Classification:} 05C82, 05C90, 90B18, 91B05, 91G45, 93B70

\section{Introduction}\label{sec:intro}

 Modern societies and economies are increasingly dependent on systems that are characterized by complex forms of interconnectedness and interactions between different actors and system entities. Examples of such critical infrastructures are energy grids, transportation and communication systems, financial markets,  and digital systems such as the internet.

However, the complexity of interaction channels also constitutes a major source of risk. 
 The term \textit{systemic risk} refers to the risk arising from the internal characteristics of a system. 
Typically, systemic risk is triggered by some initial failure or disruption in some part of the system, and then propagates and amplifies via various types of channels connecting the system entities. 
Therefore, the study of systemic risk involves not only understanding the individual components of a network but also examining its pattern of interactions and feedback mechanisms. 
 

 Our main focus in this paper is management of systemic cyber risk, which constitutes a significant part of the general cyber risk exposure, see \cite{Awiszus2023a} for a discussion of classes of cyber risk. Examples of systemic cyber risk incidences are the WannaCry (May 2017) and NotPetya (June 2017) worm-type malware attacks, or the recent IT outage caused by a security update of the cybersecurity firm CrowdStrike (July 2024).

 A common way to understand complex systems is the representation of their components and interactions as \textit{networks}---a term which we use { synonymously} with \textit{graphs}. The nodes of the network represent the individual entities, and edges depict the connections or relationships between them. In cyber systems, these nodes could be computers, servers, or even individual users, while financial systems might have nodes representing banks, markets, or financial investors. Possible interpretations of nodes and edges in electrical systems or in the case of a public transport system are also obvious.
 Networks with multiple layers are applied to represent different classes of interdependent system entities and to model the risk impact of one class on the other.
For illustration consider the potential threat of cyber risks to financial stability. Macroprudential regulators increasingly apply network models to develop a regulatory machinery to control those risks, see e.g.\ \cite{FED2021}, \cite{Euro2020}, \cite{Euro2022}.
In that case the network consists of a cyber (sub)-network and the  financial (sub)-network. So-called ``cyber mappings'' are used to identify systemically relevant network nodes in both the cyber network and the financial network and to analyse transmission channels of cyber risk to the financial system.  A stylized illustration of a cyber map is given in Figure \ref{fig:cybermap}. 
These approaches are discussed, for example, in \cite{Brauchle2020}, \cite{Euro2022}, \cite{Foroutan2024}, and \cite{Schroder2024}.

	\begin{figure}[h]
		\centering
		\includegraphics[width=0.7\textwidth]{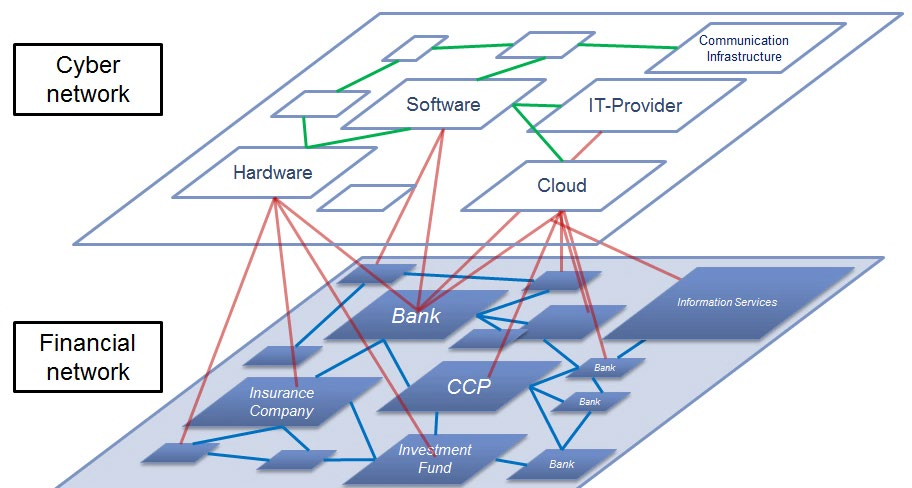}
		\caption{A stylized cyber map consisting of a cyber and a financial (sub-)network. 
		The figure is taken from \cite{Euro2022} and owned by Deutsche Bundesbank. Reproduced with permission.}
	\label{fig:cybermap}
	\end{figure}

The task of a network \textit{supervisor} is to assess the systemic risk of the supervised network and to regulate it in such a way that this risk remains controlled. Network supervisors are, for instance,   
\begin{enumerate}[a)]
    \item a \textit{regulator} or \textit{central planner} managing risk and resilience of (digital) systems and infrastructures from the perspective of society, see the discussion above. 
    
    \item an \textit{insurance company} concerned with the limits of insurability of systemic accumulation risk and the design of policy exclusions.  Indeed insurers act as \textit{private regulators}  by requiring or incentivising security and system standards among their policyholders through contractual obligations, see the discussion in \cite{Awiszus2023b}. 
    \item a \textit{(local) risk manager} or \textit{system administrator} who is concerned with the network protection of a single firm or industry cluster.
\end{enumerate}

In this paper we take the point of view of such a supervisor, and develop a theoretical framework for the assessment and management of systemic (cyber) risk.  To this end, we follow a well-established approach to risk assessment based on three main ingredients:
\begin{itemize}
    \item a set $\mathcal{A}$ of acceptable networks which are deemed to be secure enough,
    \item a set $\mathcal{I}$ of interventions to transform non-acceptable networks into acceptable ones, 
    \item a quantification of the cost $\mathcal{C}$ (not necessarily in monetary terms) of making some network acceptable.
\end{itemize}
The triplet $(\mathcal{A}, \mathcal{I}, \mathcal{C})$ is then called a \textit{decision-making framework for network resilience} (DMFNR). 
In the following we discuss why we identify the three components $\mathcal{A}, \mathcal{I}$, and $\mathcal{C}$ as the central building blocks for assessing systemic risk.

As regards the acceptance set, we find that the most fundamental question any system supervisor has to answer is whether there is a demand for regulation of that system or not. This requires an acceptance criterion as a basis for identifying those networks which the supervisor deems secure enough and those which need to be further secured. Accordingly, the  \textit{acceptance set} contains all secure enough networks. In contrast, networks which do not belong to the acceptance set are those which require further protecting measures. It is up to the supervisor to set the security standards by prescribing the acceptance set. Natural decision criteria in this process are network features such as the presence or absence of systemic parts which may induce concentration risks, or the resilience of the network with respect to (models of) concrete cyber attack scenarios such as the spread of malware (WannaCry, NotPetya) or service outages (CrowdStrike). 

If a given network $G$ does not meet the supervisor's acceptance criterion, that is $G\notin\mathcal{A}$, suitable risk mitigants are needed to secure the network. Those are the supervisor's \textit{interventions} collected in $\mathcal{I}$. In case of cyber risk, well-known examples of risk mitigants in practice are e.g. firewalls, security patches, backups, controlling access to servers or computers, monitoring of data flows, re-configuring network communication and access rules, or diversification of IT-service providers. In our theory, given a network $G$, any measures to improve the network's resilience are interpreted as operations on the given network $G$. The most basic available operations are addition or deletion of nodes and edges. For example, edge hardening measures securing particular network connections (firewalls, increased monitoring of particular network connections, or physical deletions of connections)  
may be identified with the deletion of the corresponding edge from the network, because the increased security erases this risk transmission channel. 
Backups can be modeled by a  combination of node and edge additions, whereas diversification is achieved by node splitting where a node is split into two distinct nodes with the same operational tasks and a redistribution of the original edges. In Section~\ref{sec:topinv} we discuss (theoretical) interventions on the network and their practical interpretations for regulation of systemic cyber risk. Note that a supervisor might typically not have all possible network interventions available, for instance due to legal issues. Therefore, we consider a set of available and admissible network interventions $\mathcal{I}$ from which the supervisor might choose in the regulation process.  
Also note that regulators are increasingly pursuing a ``Zero Trust" principle according to which no one is trusted by default from inside or outside the network. In particular 
no parts of a network are deemed to be more secure than other parts which implies that interactions with other entities of the network cannot be based, for instance, on the security standards of the other. 
As a result, the risk management is left with creating secure and resilient network architectures, see \cite{NIST2020} or \cite[Section 3.5.3]{Nor2023}.



Apart from limitations as regards availability and admissibility of interventions, supervision is also subject to efficiency criteria which we represent by a cost function $\mathcal{C}$. 
We identify and study two main classes of costs, namely costs in terms of monetary investments needed for regulation, and costs in terms of loosing functionality of the network. The latter class of costs is based on well-known measures of network functionality, see Section~\ref{sec:costs}.

 Even though we will mostly limit our discussion to systemic cyber risks,  we want to emphasize that this risk management approach can also be applied to other forms of systemic risk, including those where the trade-off between functionality and risk due to interconnectedness may be different from the cyber risk case. In electrical networks, for instance, an increase in interconnectedness may be desirable from a resilience-building perspective.

\subsection{A Motivating Example}\label{sec:mot}  
Recall the cyber mappings introduced earlier. 
A recent study \cite{Foroutan2024, Schroder2024} provides a detailed cyber mapping analysis of publicly disclosed outsourcing activities by German public retail funds. 
The (layered) network analyzed in this study comprises two main classes of nodes: financial institutions (FIs), representing the financial system, and third-party providers (TPPs), which offer outsourcing services to these FIs. 
We identify two major cyber risks to the financial system that are either caused or exacerbated by outsourcing activities:
\begin{enumerate}
    \item Outages of TPPs, which could severely disrupt the operations of critical parts of the financial system.
    \item Cyberattacks, such as WannaCry or NotPetya, which could spread rapidly through the increased digital interconnections between entities.
\end{enumerate}


Let us discuss different ways a financial system supervisor might decide whether the described network is acceptable or not. One option is to control the number of connections, that is the \textit{degree}, of each node, particularly of the TPP nodes. In the context of risk scenario 1, this strategy ensures that a TPP outage cannot simultaneously disrupt too many connected FIs. For risk scenario 2, controlling the degree helps contain the spread of malware from an infected node to the rest of the system. To manage node degrees effectively, the supervisor could examine the degree distribution of the network. Constraints on the moments of the degree distribution, as a foundation for acceptance sets $\mathcal{A}$, are discussed in detail in Section \ref{sec:degdistrcont}.
On a related note, the supervisor might aim to identify central (systemic) parts of the network that require heightened surveillance. Apart from considering node degrees, there are alternative measures of node centrality which may be useful in this regard, see Section \ref{sec:ncentr}. Acceptance sets based on centrality constraints are discussed in detail in Section~\ref{sec:hub}. Yet a different approach is to simulate  risk scenarios using an external model. The network can then be evaluated based on its performance under such stress tests, see the discussion in Section~\ref{sec:DMFNRstress}.

Regarding the admissible interventions $\mathcal{I}$ of the supervisor, first note that it is reasonable to assume that not every node or connection in the network can be regulated. For example, the European Union's NIS2 Directive \cite{NIS} on the regulation of digital critical infrastructure systems employs a \textnormal{size-cap rule} that limits the directive's scope to entities of medium or large size within the targeted sectors. Appropriate interventions for these nodes might include backup requirements, diversification, or edge hardening measures as discussed above.

In terms of costs, forced diversification or backups on the TPP side are unlikely to affect the overall network functionality, though they may lead to an increase in the monetary cost of outsourced services. In contrast, edge hardening measures could potentially impact network functionality. If costs based on network functionality or true monetary costs do not apply, for instance due to lack of information, a straightforward way to measure costs is to count the number of interventions required to secure the network, aiming to keep this number as low as possible. This approach corresponds to a simplified model of monetary costs, where each intervention is assigned the same symbolic value (e.g., 1 unit). 
However, this model can be refined by assigning different (symbolic) costs to interventions based on factors such as the severity or complexity of the intervention. This would allow for a more nuanced approach to cost modeling, where more invasive or resource-intensive actions incur higher costs. Costs are discussed in Section~\ref{sec:costs}.



Assume now that the regulator has decided on an DMFNR $(\mathcal{A}, \mathcal{I}, \mathcal{C})$. The implementation of this DMFNR can proceed as follows: First, the regulator must assess whether the given network $G$ is acceptable, $G \in \mathcal{A}$,  or if regulation is required, $G \notin \mathcal{A}$. If $G \notin \mathcal{A}$, protective strategies—these are consecutive application of interventions from $\mathcal{I}$—must be found to bring the network into an acceptable state. Suppose that the supervisor has identified various such strategies. Then a reasonable choice amongst those strategies is the most cost-efficient strategy which is determined by means of the cost model $\mathcal{C}$.

\paragraph{Literature}  
As regards the mathematical literature on cyber risk, we refer to the { surveys  \cite{Awiszus2023a} and \cite{eling2020cyber}} for a comprehensive overview. A few recent studies focusing on cyber insurance are \cite{Baldwin2017},  \cite{bessy2020multivariate},  \cite{Boumezoued2023}, \cite{DACOROGNAKRATZ2023}, \cite{Dacorogna2023},  \cite{Hillairet2023}, \cite{zeller2021comprehensive}, \cite{Zeller2023}, and \cite{zellergame2023}.  There is an increasing line of literature {utilizing epidemic processes on networks known from theoretical biology (see, e.g.\,  \cite{Kiss2017,PastorSatorras2015})  to model the spread of contagious cyber risks in digital systems}, see \cite{Antonio2021},  \cite{Chiaradonna2023}, \cite{fahrenwaldt2018pricing}, \cite{Hillairet2021}, \cite{Hillairet2022}, \cite{Jevtic2020}, and \cite{xu2019cybersecurity}. The latter articles are mostly concerned with the pricing of cyber insurance policies.
Further considerations on the significance of systemic cyber risks for the vulnerability of digitally networked societies and economies can be found in \cite{Forscey2022} and \cite{Welburn2022}. 

Systemic risk has long been studied in the context of financial systems, see for instance  \cite{Fouque2013}, \cite{Hurd2016}, and \cite{Jackson2021} for an overview. In particular, models for contagion effects in financial networks have been proposed and studied in \cite{Amini2016},  \cite{Detering2019}, \cite{Detering2019b}, \cite{Eisenberg2001}, and \cite{Gandy2017}, whereas, for example, \cite{Armenti2018},  \cite{Biagini2019}, \cite{Chen2013}, \cite{FoellmerKl2014}, \cite{Foellmer2014}, \cite{Hoffmann2016}, and \cite{Hoffmann2018} consider risk measures for systemic risk in financial markets. 

Various aspects of network resilience and robustness have for example been discussed in \cite{Afrin2018}, \cite{Schaeffer2021} and \cite{Ventresca2014}. In particular, \cite{Afrin2018} emphasizes the need to develop a more general resilience metric for complex networks.

An axiomatic approach to monetary risk measures goes back to \cite{Artzner1999}, see also \cite{Biagini2014}, \cite{FS}, and \cite{Foellmer2015} for an overview of the theory on monetary risk measures. It is from this theory that we adopt the idea to base the risk assessment on a set of acceptable configurations $\mathcal{A}$, a set of controls $\mathcal{I}$, and related costs $\mathcal{C}$.


\paragraph{Outline} Section~\ref{sec:netw} introduces decision-making frameworks for network resilience alongside the formal framework and some foundational definitions. Section~\ref{sec:acceptance} is dedicated to acceptance sets and explores some desirable properties of acceptance sets from the perspective of systemic cyber risk regulation. In Section~\ref{sec:topinv} we discuss theoretical network interventions  $\mathcal{I}$ and their practical counterparts.  Costs $\mathcal{C}$ are introduced and discussed in  Section~\ref{sec:costs}. Examples of DMFNRs are presented in Section \ref{sec:examples}.  Section~\ref{sec:outlook} offers an outlook on open questions and potential areas for future research. Finally, the appendix contains considerations and additional examples for undirected graphs.


\section{Decision-Making Frameworks for Network Resilience}\label{sec:netw}
 In this section we formally define decision-making frameworks for network resilience. To this end, in the following, we first need to specify and briefly discuss their domain of definition. 

\subsection{Graphs and Networks}\label{sec:netdef}
A \textit{graph} $G$ is an ordered pair of sets $G=(\mathcal{V}_G,\mathcal{E}_G)$ where $\mathcal{V}_G$ 
is a finite set of elements, called \textit{nodes} (or \textit{vertices}), and $\mathcal{E}_G\subseteq \mathcal{V}_G\times \mathcal{V}_G$ a set of pairs, called \textit{edges} (or \textit{links}). In the following, we will always assume $(v,v)\notin \mathcal{E}_G$, i.e, we do not allow for self-connections of nodes. Further, for the rest of the paper, we assume that all graphs are defined over the same non-empty countably infinite basis set of nodes $\mathbb{V}$, i.e., that $\mathcal{V}_G\subseteq\mathbb{V}$ for all graphs $G$. We denote the set of such graphs by  $$\mathcal{G}:=\{G\mid G=(\mathcal{V}_G, \mathcal{E}_G),\mathcal{V}_G\subset \mathbb{V},\mathcal{E}_G \subset (\mathcal{V}_G\times \mathcal{V}_G) \cap \mathbb{E} \}$$ where $\mathbb{E}:=\{(v,w)\in \mathbb{V}\times \mathbb{V}\mid v\neq w \}$. The number of nodes $\vert\mathcal{V}_G\vert$ is called the \textit{size} of graph $G$. A graph $G' = (\mathcal{V}', \mathcal{E}')$ with $\mathcal{V}'\subseteq\mathcal{V}_G$, $\mathcal{E}'\subseteq\mathcal{E}_G$ is called a \textit{subgraph} of $G$. 

A \textit{network} is any structure which admits an  abstract representation as a graph: The nodes represent the network's agents or entities, while the connecting edges correspond to a relation or interaction among those entities. In the following, we use the terms ``graph'' and ``network'' interchangeably.

An important subclass of $\mathcal{G}$ is the set of \textit{undirected graphs}: A graph $G= (\mathcal{V}_G, \mathcal{E}_G)\in \mathcal{G}$ is called \textit{undirected} if for all $v,w\in \mathcal{V}_G$ we have $(v,w)\in \mathcal{E}_G \; \Leftrightarrow \; (w,v)\in \mathcal{E}_G$. A graph which is not undirected is often referred to as being \textit{directed}. There is a vast literature on undirected graphs, and many popular network models restrict themselves to this subclass. However, throughout this paper we will work with the complete set of directed graphs $\mathcal{G}$ since the direction of a possible information flow may be essential to understand the associated risk of a corresponding  connection. Nevertheless, our results, or appropriate versions of the results, also hold if we restrict ourselves to the undirected case, see Appendix~\ref{app:undir} for the details. Moreover, in  Appendices ~\ref{sec:epi:thresh} and \ref{sec:specrad} we provide 
examples which are peculiar to an undirected graph framework.  

\begin{example}\label{ex:motdir}
Consider our motivating example from Section~\ref{sec:mot} and the described service outage scenario. In that case the direction of risk propagation is opposite to the direction of outsourcing. Outages originate at the TPP nodes and spread to the FIs that rely on their services. As a result, the network is best viewed as a directed graph with edges running from TPPs to the FIs that outsource to them.
\end{example}

\begin{remark}\label{rem:weighted}(Weighted Graphs)
 In some situations it could be reasonable to consider \textit{weights} assigned to the edges, symbolizing the extent of transmissibility of a shock across each respective edge. Nonetheless, defining the precise nature and magnitude of these weights might pose a challenge. In our motivating example from Section \ref{sec:mot}, only the presence of a connection between an FI and a TPP can be reconstructed from the underlying dataset. There is no data available to meaningfully calibrate potential edge weights in that network. 
 
 Also note that there are cases in which assigning edge weights is not appropriate. Again consider the motivating example. In the event of a TPP outage, all FIs outsourcing to that TPP are equally affected. 
 
 In this paper, we restrict ourselves to directed, but unweighted graphs. We briefly discuss the enhancement of our framework to weighted graphs in the Outlook Section~\ref{sec:outlook}.   
\end{remark}

 In the following, we frequently reference key concepts from network theory, including adjacency matrices, node degrees (both incoming and outgoing), as well as walks and paths between nodes. For the reader not familiar with these concepts, the corresponding definitions and notations are detailed in Appendix~\ref{sec:graph:basics}.

\subsection{Acceptance Sets}

\begin{definition}
An \textit{acceptance set} is any non-empty set $\mathcal{A}\subset \mathcal{G}$.
\end{definition}

The acceptance set is determined by the regulator and comprises all acceptable networks, i.e., the graphs which need no further regulatory interventions. The choice of the correct acceptance set is, of course, a challenging task. We devote Section~\ref{sec:acceptance} to study desirable properties of acceptance sets, and examples are provided in Section~\ref{sec:examples}. Note that, apart from general network characteristics such as the degree distribution of the nodes, acceptability may depend on specific nodes or structure of a concrete network $G\in \mathcal{G}$ which the supervisor is managing. 

\subsection{Interventions}

Interventions are the risk mitigants that are potentially available to the supervisor to secure networks. In general, an intervention can be defined as a mapping that transforms one network into another:
\begin{definition}
A \textit{(network) intervention} is a map $\kappa: \mathcal{G}\to\mathcal{G}$.
\end{definition} 

Examples of interventions are given in Section~\ref{sec:topinv}. Here we also briefly mention, for the case of cyber security, possible interpretations of the presented interventions in terms of real-world measures to protect an IT-system. 

In practice, the supervisor's ability to intervene may be limited. Regulatory approaches often target specific industry sectors and focus on essential network entities, while excluding smaller companies or private users. For instance, the European Union's NIS2 Directive \cite{NIS} on digital critical infrastructure applies a \textnormal{size-cap rule}, restricting its scope to medium and large entities within targeted sectors.
Consequently, risk management is typically constrained to a set of {\em admissible} interventions. Whether an intervention is admissible may depend on the specific graph $G$ that the supervisor must secure.


\begin{definition} Given a non-empty set $\mathcal{I}$ of interventions,  an ($\mathcal{I}$-)\textit{strategy} $\kappa$ is a finite composition of interventions in $\mathcal{I}$, i.e.\ $\kappa = \kappa_1\circ \ldots \circ \kappa_n$ where $\kappa_i\in \mathcal{I}$ for all $i=1,\ldots, n$ and $n\in \mathbb{N}$. We let $[\mathcal{I}]$ denote the set comprising all $\mathcal{I}$-strategies and in addition, if not already included in $\mathcal{I}$, the intervention $\operatorname{id}: G\mapsto G$, leaving the network unaltered. 
The set
\begin{equation*}
    \sigma^{\mathcal{I}} (G) := \{ \kappa (G) \mid  \kappa\in [\mathcal{I}]  \}
\end{equation*}
contains all networks that can be generated from $G$ by applying $\mathcal{I}$-strategies. 
\end{definition}

Of course, the supervisor is constrained to strategies of admissible interventions. The following definition brings together acceptance sets and admissible interventions. 
\begin{definition}\label{def:risk:dec} Let $\mathcal{A}\subset \mathcal{G}$ be an acceptance set and further let $\mathcal{I}$ be a non-empty set of network interventions. $\mathcal{A}$ is said to be \textit{closed} under $\mathcal{I}$ if $\sigma^{\mathcal{I}}(G)\subset \mathcal{A}$ for all $G\in \mathcal{A}$. In that case $\mathcal{I}$ is called \textit{risk-reducing} for $\mathcal{A}$. A network intervention $\kappa$ is called risk-reducing for $\mathcal{A}$  if the intervention set $\{\kappa\}$ is risk-reducing for $\mathcal{A}$.
 \end{definition}

\begin{lemma}\label{lem:partial:order}
Suppose that the non-empty set $\mathcal{I}$ of interventions is not (partially) self-reverse in the sense that for all $G\in\mathcal{G}$ and all $\alpha\in [\mathcal{I}]$ with $\alpha(G)\neq G$, there is no  $\kappa\in[\mathcal{I}]$ such that $ \kappa\circ \alpha(G)=G$. 
    Then $$G\preccurlyeq_{\mathcal{I}} H \quad :\Leftrightarrow \quad H\in \sigma^{\mathcal{I}}(G) $$ defines a partial order on $\mathcal{G}$. In particular, $\mathcal{I}$ is risk-reducing for $\mathcal{A}$ if and only if $\mathcal{A}$  is monotone with respect to $\preccurlyeq_{I}$, that is $G\in \mathcal{A}$ and $G\preccurlyeq_{\mathcal{I}} H $ implies $H\in \mathcal{A}$.
\end{lemma}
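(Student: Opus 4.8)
The plan is to verify the three defining properties of a partial order—reflexivity, transitivity, and antisymmetry—and then to observe that the equivalence between being risk-reducing and being monotone is a direct unfolding of the definitions. The structural fact I will exploit throughout is that $[\mathcal{I}]$ is closed under composition and contains $\operatorname{id}$, so that $\sigma^{\mathcal{I}}$ behaves like the orbit map of a monoid acting on $\mathcal{G}$.

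First I would handle reflexivity: since $\operatorname{id}\in[\mathcal{I}]$ by definition and $\operatorname{id}(G)=G$, we have $G\in\sigma^{\mathcal{I}}(G)$, hence $G\preccurlyeq_{\mathcal{I}}G$. For transitivity, suppose $G\preccurlyeq_{\mathcal{I}}H$ and $H\preccurlyeq_{\mathcal{I}}K$. Then there are strategies $\kappa_1,\kappa_2\in[\mathcal{I}]$ with $\kappa_1(G)=H$ and $\kappa_2(H)=K$. Because $\kappa_2\circ\kappa_1$ is again a finite composition of interventions from $\mathcal{I}$, it lies in $[\mathcal{I}]$, and $(\kappa_2\circ\kappa_1)(G)=K$ yields $K\in\sigma^{\mathcal{I}}(G)$, i.e.\ $G\preccurlyeq_{\mathcal{I}}K$. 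Both steps are routine and do not use the hypothesis.

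The main obstacle—and the only place where the non-self-reverse assumption enters—is antisymmetry. Assume $G\preccurlyeq_{\mathcal{I}}H$ and $H\preccurlyeq_{\mathcal{I}}G$, and suppose toward a contradiction that $G\neq H$. Choose $\alpha,\kappa\in[\mathcal{I}]$ with $\alpha(G)=H$ and $\kappa(H)=G$. Since $H\neq G$ we have $\alpha(G)\neq G$, while $\kappa\circ\alpha(G)=\kappa(H)=G$. This is precisely the configuration excluded by the hypothesis, so no such $\kappa$ can exist; hence $G=H$. I would pause here to check that the hypothesis quantifies over $\alpha,\kappa$ ranging in $[\mathcal{I}]$, which is exactly the range supplied by $G\preccurlyeq_{\mathcal{I}}H$ and $H\preccurlyeq_{\mathcal{I}}G$, so no further closure argument is needed to apply it.

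Finally, the stated equivalence follows by rewriting the definitions. Monotonicity of $\mathcal{A}$ asserts that $G\in\mathcal{A}$ together with $H\in\sigma^{\mathcal{I}}(G)$ implies $H\in\mathcal{A}$, which for each fixed $G\in\mathcal{A}$ is the inclusion $\sigma^{\mathcal{I}}(G)\subset\mathcal{A}$—that is, closedness of $\mathcal{A}$ under $\mathcal{I}$, which by Definition~\ref{def:risk:dec} is exactly the statement that $\mathcal{I}$ is risk-reducing for $\mathcal{A}$. I expect this last equivalence to be immediate and to require neither the non-self-reverse assumption nor antisymmetry; all the genuine content of the lemma sits in the antisymmetry argument above.
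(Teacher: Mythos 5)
Your proof is correct and follows essentially the same route as the paper's: reflexivity via $\operatorname{id}\in[\mathcal{I}]$, transitivity by composing strategies, and antisymmetry as the sole place where the non-self-reverse hypothesis is invoked (you phrase it as a contradiction, the paper argues directly, but it is the same argument). Your additional remark that the risk-reducing/monotonicity equivalence is a pure unfolding of Definition~\ref{def:risk:dec} is also accurate; the paper leaves that part unproved for exactly this reason.
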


\begin{proof}
 $\operatorname{id}\in [\mathcal{I}]$ implies reflexivity of $\preccurlyeq_{\mathcal{I}}$. As regards transitivity, for $H\in\sigma^{\mathcal{I}}(G)$ and an arbitrary $L\in\sigma^{\mathcal{I}}(H)$, we find $\kappa, \tilde{\kappa}\in [\mathcal{I}]$ with $\kappa(G) = H$, and $\tilde{\kappa}(H) = L$, respectively. Then $\tilde{\kappa}\circ\kappa\in [\mathcal{I}]$ and $\tilde{\kappa}\circ\kappa (G) = L$, so that $L\in\sigma^{\mathcal{I}}(G)$. 
     Finally suppose that $G\preccurlyeq_{\mathcal{I}} H$ and $H\preccurlyeq_{\mathcal{I}}  G$, then there are $\mathcal{I}$-strategies $\kappa, \tilde \kappa$ such that $H=\kappa(G)$ and $G=\tilde \kappa(H)=\tilde \kappa \circ \kappa (G)$. Since $\mathcal{I}$ is not partially self-reverse, we must have $H=\kappa(G)=G$. This proves antisymmetry of $\preccurlyeq_\mathcal{I}$. 
\end{proof}


\subsection{Cost Function} 
The final ingredient to our risk management framework is the cost function representing some type of price we pay for altering a given network. The cost function measures the consequences of the supervisors decisions and potentially quantifies the resources needed to achieve acceptability. Therefore, the cost function may depend on the acceptance set and on the interventions applied to achieve acceptability.  
\begin{definition}\label{def:cost}
Given a network acceptance set $\mathcal{A}\subset\mathcal{G}$ and a set of interventions $\mathcal{I}$ which is risk-reducing for $\mathcal{A}$, a map $\mathcal{C}:\mathcal{G}\times\mathcal{G}\to\mathbb{R}\cup\{\infty\}$ with 
\begin{enumerate}
    \item[C1] $\mathcal{C}(G,G) = 0$ for all $G\in\mathcal{G}$
    \item[C2]  $\mathcal{C}(G, H)\geq 0$ whenever $H\in\sigma^{\mathcal{I}}(G)$
\end{enumerate}
is called a \textit{cost function} for $(\mathcal{A}, \mathcal{I})$.
\end{definition}

Property C1 means that not altering the network does not cost anything, while property C2 ensures that any strategy leading to a potential improvement comes at a non-negative cost.

\subsection{Decision-Making Frameworks for Network Resilience}

\begin{definition}
A \textit{decision-making framework for network resilience} (DMFNR) is a triplet $(\mathcal{A}, \mathcal{I}, \mathcal{C})$ where 
\begin{itemize}
\item $\mathcal{A}$ is an acceptance set, 
\item $\mathcal{I}$ is a set of admissible interventions which is risk-reducing for $\mathcal{A}$, 
\item $\mathcal{C}$ is a cost function for $(\mathcal{A}, \mathcal{I})$.
\end{itemize}
\end{definition}

As previously discussed, the acceptance set $\mathcal{A}$ comprises all networks which the supervisor deems secure enough, strategies in $[\mathcal{I}]$ are the risk mitigants of the supervisor in case the network is not acceptable, and the cost function $\mathcal{C}$ determines the cost of securing a network. The requirement that $\mathcal{I}$ be risk-reducing for $\mathcal{A}$ is a minimal consistency between the risk mitigants and the notion of acceptability, meaning that an acceptable network cannot become unacceptable by applying risk mitigants. This could be strengthened, for instance by requirements of the form  $\sigma^{\mathcal{I}}(G)\cap\mathcal{A}\neq \emptyset$ for all networks $G\in\mathcal{G}$. The latter would ensure that any network can be secured, an assumption which we find too strong at this level.  
Property C2 of the cost function is a minimal consistency between strategies securing a network and related costs. 

At this stage the requirements on a DMFNR are rather weak and general. The aim is to provide a framework which we believe is flexible enough, yet structurally reasonable to assess and manage systemic risk. Clearly, a DMFNR is suited for handling systemic risk in many different situations as outlined in the introduction. It is the task of the supervisor to determine a suitable acceptance set, and to identify admissible interventions and related costs to deal with a particular risk situation. The problems here are similar to finding the 'right' risk measure in classical financial regulation (for banks or insurance companies).  In Sections~\ref{sec:acceptance}, \ref{sec:topinv}, and \ref{sec:costs} we will discuss further properties of acceptance sets, interventions, and costs that might be suited to handle certain risk scenarios. The discussion there is, however, mostly tailored to cyber risk in digital networks. In Section~\ref{sec:examples} we provide concrete examples of DMFNRs and we analyse properties of those.   

Note that---in contrast to many approaches to handle risk in the literature---a DMFNR does a priori 
\begin{itemize}
\item[Q1] not attempt to measure the systemic risk in terms of a quantification of risk. In particular, a DMFNR does not answer the question of minimal costs to achieve acceptability (if possible).
\item[Q2] not necessarily allow to compare the risk of two or more networks.
\end{itemize} 
The reason for that is---as is well-known from multivariate risk measure theory in mathematical finance---that a satisfactory answer to Q1 and Q2 might be hard to achieve:
\begin{itemize}
\item As regards Q1 (and also Q2), a reasonable quantification of risk, would be the mentioned minimal costs of achieving acceptability. But that would require complete knowledge of all admissible ways, and the related costs, to transform a network into an acceptable state. For large networks this presents a computability and complexity problem that might be difficult or impossible to solve. Also, depending on the incentives of the supervisor, she might not even be interested in achieving cost efficiency amongst all protection strategies.    \item Dealing with Q2 may require to know the answer to Q1. Eventually it is questionable whether Q2 is really of interest, since in most applications in mind the supervisor is managing a {\em given} network, and is not free to choose an initial configuration to start with.  This given network has to be secured somehow in an admissible and ideally efficient way.
\end{itemize}
A DMFNR sets the framework for the supervisor and can be applied in various ways. As previously emphasized, the acceptance set addresses whether regulation is necessary or not. The admissible risk-reducing interventions include risk mitigation strategies for the supervisor, and the cost function determines the associated costs. If multiple protection strategies are available, the supervisor might opt for the most cost-effective one. However, it may be (computationally) infeasible to determine whether this strategy is globally efficient. Additionally, other factors might be more important than cost efficiency.

However, if we temporarily set aside issues of computability and complexity, 
%
%
a natural way to obtain a (univariate) risk measure based on a DMFNR $(\mathcal{A},\mathcal{I}, \mathcal{C})$ is to let $\rho(G)$ denote the infimal cost of achieving acceptability (if possible):
\begin{equation}\label{eq:uni}
\rho(G) := \inf \{\mathcal{C}(G,\tilde{G}) \mid \tilde{G}\in\sigma^{\mathcal{I}}(G)\cap\mathcal{A}\},\; G\in \mathcal{G}, \quad \mbox{where}\; \inf\emptyset :=\infty. 
\end{equation}

This approach is in line with the construction of monetary risk measures for financial risk management, see \cite{FS} for a comprehensive discussion of monetary risk measures. Constructions of this kind have also been applied to the measurement of systemic financial risks, see for example \cite{Biagini2019}.
\begin{lemma}\label{lem:uni} The univariate measure $\rho$ given in \eqref{eq:uni} satisfies
\begin{enumerate} 
\item $\rho(G)\geq 0$ for $G\notin\mathcal{A}$,
\item $\rho(G) = 0$ whenever $G\in \mathcal{A}$.
\item Suppose, moreover, that there is a minimal cost of alterering a network, that is there exists $\nu>0$ such that $\mathcal{C}(G,\tilde G)\geq\nu$ whenever $G\in \mathcal{G}$ and $\tilde G\in \sigma^{\mathcal{I}}(G)\setminus \{G\}$. Then $\rho (G)\geq \nu$  for all $G\notin\mathcal{A}$ and $\mathcal{A}=\{G\in \mathcal{G}\mid \rho(G)= 0 \}$.
\end{enumerate}
\end{lemma}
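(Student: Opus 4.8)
The plan is to verify the three claims directly from the definition of $\rho$ in \eqref{eq:uni}, using only properties C1 and C2 of the cost function together with the hypothesis that $\mathcal{I}$ is risk-reducing for $\mathcal{A}$. The proof is essentially a sequence of short unwindings of definitions; there is no genuinely hard analytic step, so the main task is to be careful about the infimum conventions (in particular $\inf\emptyset=\infty$) and about when the relevant feasible set is nonempty.

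For claim (2), I would start with the easy observation that if $G\in\mathcal{A}$, then $G\in\sigma^{\mathcal{I}}(G)\cap\mathcal{A}$, since $\operatorname{id}\in[\mathcal{I}]$ guarantees $G\in\sigma^{\mathcal{I}}(G)$. Hence the feasible set is nonempty and contains $G$ itself, so by C1 we have $\mathcal{C}(G,G)=0$, giving $\rho(G)\le 0$. For the reverse inequality, note that every competitor $\tilde G\in\sigma^{\mathcal{I}}(G)\cap\mathcal{A}$ satisfies $\mathcal{C}(G,\tilde G)\ge 0$ by C2, so the infimum is over nonnegative values and $\rho(G)\ge 0$. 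Combining the two gives $\rho(G)=0$.

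For claim (1), let $G\notin\mathcal{A}$. Either the feasible set $\sigma^{\mathcal{I}}(G)\cap\mathcal{A}$ is empty, in which case $\rho(G)=\inf\emptyset=\infty\ge 0$, or it is nonempty, in which case every $\tilde G$ in it lies in $\sigma^{\mathcal{I}}(G)$ and hence $\mathcal{C}(G,\tilde G)\ge 0$ by C2; taking the infimum over nonnegative quantities yields $\rho(G)\ge 0$. So claims (1) and (2) together already give $\rho(G)\ge 0$ for all $G$ and $\rho\equiv 0$ on $\mathcal{A}$.

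For claim (3), the key point is that under $G\notin\mathcal{A}$, any feasible $\tilde G\in\sigma^{\mathcal{I}}(G)\cap\mathcal{A}$ must satisfy $\tilde G\neq G$: indeed $G\notin\mathcal{A}$ but $\tilde G\in\mathcal{A}$, so they differ. Thus every such $\tilde G$ lies in $\sigma^{\mathcal{I}}(G)\setminus\{G\}$, and the additional hypothesis gives $\mathcal{C}(G,\tilde G)\ge\nu$; taking the infimum (and using $\inf\emptyset=\infty\ge\nu$ in the infeasible case) yields $\rho(G)\ge\nu>0$. This shows $\{G\mid\rho(G)=0\}\subseteq\mathcal{A}$, since $\rho(G)=0$ forces $G\in\mathcal{A}$. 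The reverse inclusion $\mathcal{A}\subseteq\{G\mid\rho(G)=0\}$ is exactly claim (2). The only subtle point to flag is that claim (3)'s conclusion $\mathcal{A}=\{G\mid\rho(G)=0\}$ relies on the strict bound $\nu>0$ to separate acceptable from non-acceptable networks; without it, a non-acceptable $G$ could still have $\rho(G)=0$, which is why claims (1) and (2) alone do not characterize $\mathcal{A}$.
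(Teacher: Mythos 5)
Your proposal is correct and follows essentially the same route as the paper's own (much terser) proof: claim (1) from C2, claim (2) from $\operatorname{id}\in[\mathcal{I}]$ together with C1 (plus the C2 lower bound you rightly make explicit), and claim (3) from the observation that any acceptable $\tilde G\in\sigma^{\mathcal{I}}(G)\cap\mathcal{A}$ with $G\notin\mathcal{A}$ must differ from $G$, so the minimal-cost bound $\nu$ applies. Your handling of the $\inf\emptyset=\infty$ convention and of both inequality directions in claim (2) simply fills in details the paper leaves to the reader.
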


\begin{proof}
1.\ immediately follows from property \textit{C2} of the cost function. 2.\ is a consequence of $\operatorname{id}\in [\mathcal{I}]$, thus $G\in\sigma^{\mathcal{I}}(G)$, and of $\mathcal{C}(G, G) = 0$ by property \textit{C1}. 3. is easily verified.
\end{proof}


Finally, note that, apart from complexity issues, in order to apply \eqref{eq:uni} one would have to check whether the infimum is a minimum whenever $\rho(G)<\infty$, that is whether there is a ``best'' acceptable version of the network $G$. 

\section{Properties of Acceptance Sets for Systemic Cyber Risk Management}\label{sec:acceptance}
Consider an acceptance set $\mathcal{A}$. In this section we present selected properties of $\mathcal{A}$ that are tailored to the management of systemic cyber risk.
 
\begin{shaded}
\paragraph{P1} 
\textit{For all $\mathcal{V} \subset \mathbb{V}$ such that $\vert \mathcal{V} \vert \geq 2$, we have $G = (\mathcal{V}, \emptyset) \in \mathcal{A}$.}. 

\hspace{3mm}
\end{shaded}

Property 1 indicates that edgeless graphs should be acceptable due to the absence of contagion channels  or feedback mechanisms. Note the restriction $\vert\mathcal{V}\vert \geq 2$. Indeed, if $\vert\mathcal{V}\vert = 1$, the edgeless graph of size 1 is also a star graph and complete, see below. To maintain consistency with later axioms, we exclude graphs of size 1 from $\mathcal{A}$. This is also justified by observing that occurrence of an incident at the single node $v\in\mathcal{V} = \{ v\}$ means that the full system is  affected, which is worst case from a systemic point of view.

\begin{shaded}
\paragraph{P2} 
\textit{There exists an $N_0 \in \mathbb{N}$ such that for all $N \geq N_0$, the acceptance set $\mathcal{A}$ contains a  weakly connected network with $N$ nodes.}

\hspace{3mm}
\end{shaded}

In view of the functionality of a system, interconnectedness should be acceptable up to a certain degree. We refer the reader not familiar with the notion of weakly and strongly connected networks to Appendix~\ref{sec:connectivity}. The size restriction in P2 accounts for the observation that with an increasing number of nodes, maintaining functionality, that is the network being connected, while controlling the risk is more feasible.





Next we address system vulnerability by specifying worst-case arrangements of networks that exceed any reasonable limit of risk. For the notion of in- and out-neighborhoods of nodes $v\in \mathcal{V}_G$, denoted by $\mathcal{N}_{v}^{G, in}$ and $\mathcal{N}_{v}^{G, out}$, respectively,  we refer to Appendix \ref{sec:neighbor}. 

\begin{definition}
Consider a network $G\in\mathcal{G}$ and a node $v^\ast\in\mathcal{V}_G$.
\begin{enumerate}
\item We call $v^\ast$ a \textit{super-spreader} if $\mathcal{N}_{v^\ast}^{G, out} = \mathcal{V}_G\setminus\{v^\ast\}$.
\item If, moreover, $\mathcal{N}_{v^\ast}^{G, in} = \mathcal{N}_{v^\ast}^{G, out} = \mathcal{V}_G\setminus\{v^\ast\}$, then $v^\ast$ is called a \textit{star node}.
\end{enumerate}
\end{definition}

\hspace{3mm}

\begin{figure}[h]
\begin{center}
			
		\begin{minipage}[t]{0.49\linewidth}
			\centering

			{\includegraphics[trim={1.2cm 5.5cm 0cm 4cm},clip,width=0.65\textwidth]{./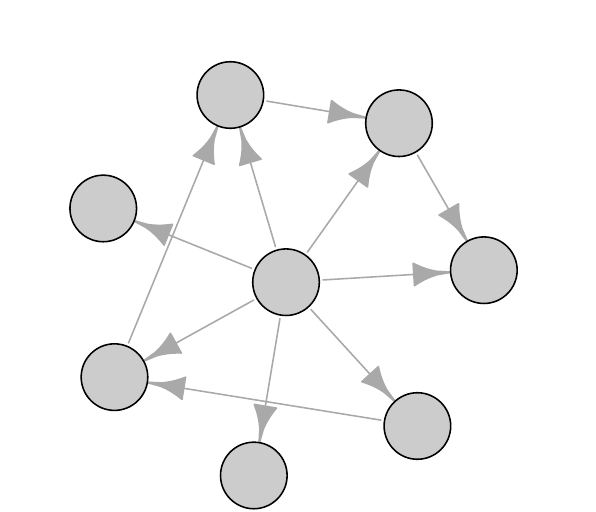}}
		\end{minipage}
  \begin{minipage}[t]{0.49\linewidth}
			\centering
            {\includegraphics[trim={1.2cm 5.5cm 0cm 4cm},clip,width=0.65\textwidth]{./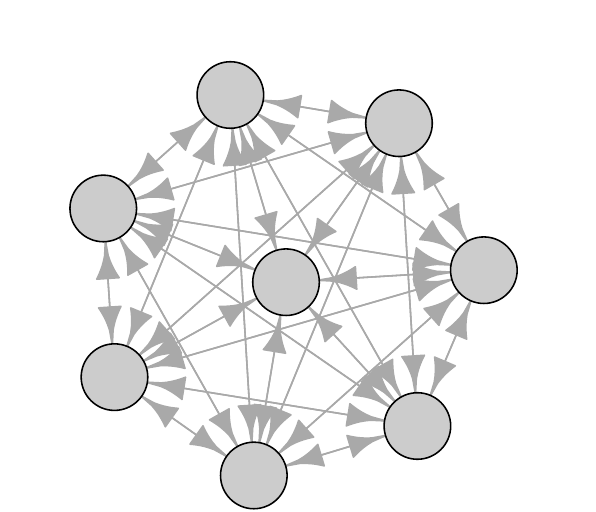}}	
		\end{minipage}
		\end{center}
		\caption{A graph that contains a super-spreader node (left), and a complete graph, where any node is a star node.}
\end{figure}


\begin{shaded}
\paragraph{P3}  \textit{$\mathcal{A}$ does not contain any network with a super-spreader}.

\hspace{3mm}
\end{shaded}

P3 aims to exclude networks where the infection or outage of a single node may affect all others.  Note that P3 also excludes all networks with a star node, which have the additional disadvantage that they themselves may be infected by any other node, and then potentially pass on that infection. Specifically, this rules out any \textit{complete graph} $G^c = (\mathcal{V}_{G^c}, (\mathcal{V}_{G^c}\times\mathcal{V}_{G^c})\cap\mathbb{E})$, where each node is effectively a star node. 
If we do not endow the single nodes $v\in\mathcal{V}_G$ of a network $G$ with a particular meaning, one might identify networks which are structurally equal: Two graphs $G_1 = (\mathcal{V}_1, \mathcal{E}_1)$ and $G_2=(\mathcal{V}_2, \mathcal{E}_2) \in\mathcal{G}$ are isomorphic if there is a bijection $\pi: \mathcal{V}_1\to\mathcal{V}_2$ such that
\begin{equation*}\label{eq:iso}
    (v,w)\in\mathcal{E}_{1} \Leftrightarrow (\pi (v), \pi (w))\in\mathcal{E}_{2} \text{ for all } v, w \in\mathcal{V}_1.
\end{equation*}
$\pi$ is called a \textit{graph isomorphism} of $G_1$ and $G_2$. 

\begin{shaded}
\paragraph{P4} \textit{Topological invariance: $\mathcal{A}$ is closed under graph isomorphisms, i.e., for any two networks $G, \tilde{G}\in\mathcal{G}$ which are isomorphic, we have $G\in\mathcal{A}$ if and only if $\tilde{G}\in\mathcal{A}$.}

\hspace{3mm}
\end{shaded}

P4 resembles the concept of distribution-invariant  risk measures common in financial economics. Indeed, if graphs are isomorphic, then their degree distributions coincide. Note that in scenarios where network components represent specific entities, some of which may need to be prioritized for protection---such as those belonging to critical infrastructures---topological invariance is not suitable. For instance, topology-invariant acceptance sets may not be appropriate in our motivating example from Section~\ref{sec:mot}, where we fundamentally distinguish between TPP and FI nodes. 

\section{Network Interventions}\label{sec:topinv}

In this section, we present various types of interventions, beginning with elementary ones. Some of these manipulations have been previously applied in the literature on network risk management; see, for example, \cite{Awiszus2023b} and \cite{Chernikova2022} for the case of cyber networks. In \cite{Louzada2013}, edge shifts are evaluated for enhancing the robustness and resilience of air transportation networks. Throughout this section, \(G = (\mathcal{V}_G, \mathcal{E}_G) \in \mathcal{G}\) denotes a generic graph.

\subsection{Elementary Interventions on Edges and Nodes}\label{sec:elem}

\begin{itemize}
    \item[$\mathcal{I}_{e\_ del}$] \textit{Edge Deletion}: Consider nodes $v,w\in\mathbb{V}$. The intervention 
\begin{equation*}
    \kappa^{(v,w)}_{e\_ del}:
    G\mapsto 
    (\mathcal{V}_G, \mathcal{E}_G\setminus\{ (v,w)\})
\end{equation*}
deletes the edge $(v,w)$ if present. Indeed $\kappa^{(v,w)}_{e\_ del} (G) = G$ if and only if $(v,w)\notin\mathcal{E}_G$. We set $$\mathcal{I}_{e\_ del}:=\{\kappa^{(v,w)}_{e\_ del}\mid v,w\in\mathbb{V}\}.$$ 
Operational interpretations of edge deletion for controlling cyber risk are 
   physical deletion of connections, such as access to servers, or, if not possible, edge hardening, i.e., a strong protection of network connections via firewalls, the closing of open ports, or the monitoring of data flows using specific detection systems.

\item[$\mathcal{I}_{e\_ add}$] \textit{Edge Addition}: The addition of edge $(v,w)$ for $v,w\in\mathbb{V}$ with $v\neq w$ is given by 

\begin{equation*}
    \kappa^{(v,w)}_{e\_ add}:
    G\mapsto 
    (\mathcal{V}_G, \mathcal{E}_G\cup (\{ (v,w)\}\cap \mathcal{V}_G\times\mathcal{V}_G)),
\end{equation*}
where $ \kappa^{(v,w)}_{e\_ add}(G) = G$ is satisfied if either we already have $(v,w)\in\mathcal{E}_G$ or at least one of the nodes $v$ and $w$ is not present in the network $G$. We set $$\mathcal{I}_{e\_ add}:=\{\kappa^{(v,w)}_{e\_ add}\mid v,w\in\mathbb{V}\}.$$ Edge addition is, of course, the reverse of edge deletion and comes with the opposite operational meaning. In the context of electrical networks, edge addition may be interpreted as physical addition of power lines between nodes, thereby securing power supply.

\item[$\mathcal{I}_{n\_ del}$] \textit{Node Deletion}: The deletion of a node $v\in\mathbb{V}$ is given by the intervention
\begin{equation*}
    \kappa^{v}_{n\_ del}:
     G\mapsto 
     (\mathcal{V}_G\setminus\{v\}, \mathcal{E}_G\cap (\mathcal{V}_G\setminus\{v\}\times \mathcal{V}_G\setminus\{v\}))
\end{equation*}
where $\kappa^{v}_{n\_ del} (G) = G$ if and only if $v\notin\mathcal{V}_G$. We set $$\mathcal{I}_{n\_ del}:=\{\kappa^{v}_{n\_ del}\mid v\in\mathbb{V}\}.$$
Operationally node deletion corresponds, for instance, to removal of redundant servers or other access points. 

\item[$\mathcal{I}_{n\_ add}$] \textit{Node Addition}: An (isolated) node $v\in\mathbb{V}$ can be added to a network via
\begin{equation*}
    \kappa^v_{n\_ add} : G\mapsto (\mathcal{V}_G\cup\{v\}, \mathcal{E}_G)
\end{equation*}
where $\kappa^v_{n\_ add}(G) = G$ if and only if we already have $v\in\mathcal{V}_G$. We set $$\mathcal{I}_{n\_ add}:=\{\kappa^{v}_{n\_ add}\mid v\in\mathbb{V}\}.$$ Node addition, such as adding servers that serve as backups or take over tasks from very central servers, may improve network resilience to the spread of infectious malware when combined with edge addition or edge shifts (see below).
\end{itemize}

More complex interventions such as discussed in the following can be obtained by a consecutive application of the elementary interventions presented so far.   

\subsection{Isolation of Network Parts}
Critical nodes or subgraphs can be isolated from the rest of the network by application of the following edge deletion procedures:
\begin{itemize}
\item[$\mathcal{I}_{s\_ iso}$] \textit{Node/Subgraph Isolation}: We isolate the subgraph $(\mathcal{W}\cap \mathcal{V}_G, \mathcal{E}_G\cap (\mathcal{W}\times \mathcal{W}))$ given by a set $\mathcal{W}\subset\mathbb{V}$ (or a node in case $\mathcal{W}=\{v\}$ for some $v\in \mathbb{V}$) from the rest of the network by 
\begin{equation*}
    \kappa^{\mathcal{W}}_{iso}: G\mapsto  (\mathcal{V}_G,  \mathcal{E}_G\setminus [\mathcal{W}\times (\mathcal{V}_G\setminus\mathcal{W}) \cup  (\mathcal{V}_G\setminus\mathcal{W})\times\mathcal{W}]), \quad \mathcal{I}_{s\_ iso}:=\{\kappa^\mathcal{W}_{iso}\mid \mathcal{W}\subset \mathbb{V}\}.\end{equation*}
Node or subgraph isolation corresponds to fully protecting that part of the network from the rest.
\end{itemize}


\subsection{Shifting, Diversification and Centralization of Network Connections} 
Network tasks or data flows can be redistributed among existing nodes by utilizing interventions of the following type:
\begin{itemize}
    \item[$\mathcal{I}_{shift}$] \textit{Edge Shift}: If $(v,w)\in\mathcal{E}_G$ and $(q,r)\in\mathbb{E}\setminus\mathcal{E}_G$, $q,r\in \mathcal{V}_G$, then the edge $(v,w)$ can be shifted to $(q,r)$ by
\begin{equation*}
    \kappa_{shift}^{(v,w), (q,r)}:  G\mapsto \begin{cases} (\mathcal{V}_G, \mathcal{E}_G\setminus\{ (v,w)\} \cup \{(q,r)\}), &\textit{if } (v,w)\in\mathcal{E}_G, q,r\in \mathcal{V}_G, (q,r)\in\mathbb{E}\setminus\mathcal{E}_G \\
    G, &\textit{else.}\end{cases}
\end{equation*} We set $$\mathcal{I}_{shift}:=\{\kappa_{shift}^{(v,w), (q,r)}\mid v,w,q,r\in \mathbb{V}\}.$$
\end{itemize}
Combining node adding and edge shift operations, we can create more complex interventions which aim for risk diversification by separating critical contagion channels that pass through a chosen node $v$:  

\begin{itemize}
    \item[$\mathcal{I}_{split}$] \textit{Node Splitting}: The split of a node $v$ by rewiring edges contained in $\mathcal{L}\subset \mathbb{E}$ from $v$ to a newly added node $\tilde{v}\in \mathbb{V}$ is described by 
\begin{equation*}
 \kappa_{split}^{\mathcal{L}, v, \tilde{v}}: G\mapsto \begin{cases} (\mathcal{V}_G\cup\{\tilde{v}\}, \mathcal{E}_G\setminus (\mathcal{L}\cap\mathcal{E}_G^v) \cup \tilde{\mathcal{L}}^G_{\tilde v}), & \textit{if } \tilde{v}\notin\mathcal{V}_G\\
 G, &\textit{else,}\end{cases}
\end{equation*}
with $\tilde{\mathcal{L}}^{G}_{\tilde v} := \{(\tilde{v}, w)\vert (v, w)\in\mathcal{L}\cap\mathcal{E}_G\}\cup\{(w, \tilde{v})\vert (w, v)\in\mathcal{L}\cap\mathcal{E}_G\}$. A true node splitting only takes place when $\tilde v\not \in \mathcal{V}_G$ and $\mathcal{L}\cap\mathcal{E}_G^v\neq\emptyset$. Let $$\mathcal{I}_{split}:=\{\kappa_{split}^{\mathcal{L}, v, \tilde{v}}\mid v,\tilde v\in \mathbb{V}, \mathcal{L}\subset \mathbb{E}\}.$$
Note that node splitting enters the practical discussion when, for instance, a required diversification of (outsourcing) service providers is considered by the regulator. In view of the motivating example from Section~\ref{sec:mot} such an intervention could reduce the impact of an outage or infection of a TTP to the financial system. 
\end{itemize}
Node splits can be reversed by \textit{merging} nodes $v$ and $\tilde{v}$. This operation corresponds to a centralization of network connections, the formalism is left to the reader.
\begin{figure}[h]
\begin{center}
\begin{minipage}[t]{0.35\linewidth}
\centering
	{\includegraphics[trim = 20mm 8.5mm 12mm 12mm, clip, width=1.2\textwidth]{./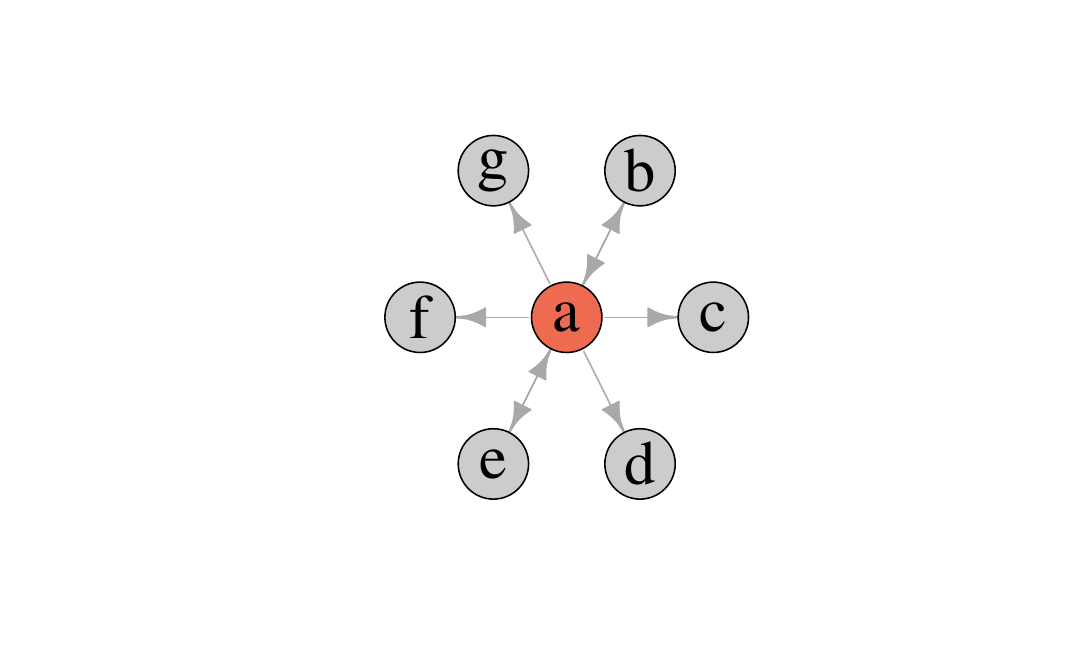}}
\end{minipage}
\begin{minipage}[t]{0.15\linewidth}
			\centering
			\vspace{-2.8cm}
			\begin{tikzpicture}[scale=0.9] 
			\tikzset{edge/.style = {->,> = latex'}}
\draw[arrow, line width=0.7mm] (-2,2) --  node[anchor=south] {split}   (0.5,2) ;
\draw[arrow, line width=0.7mm] (0.5, 1.4) -- node[anchor=north] {merge}  (-2,1.4) ;


\end{tikzpicture}
\end{minipage}
\begin{minipage}[t]{0.35\linewidth}
	{\includegraphics[trim = 35mm 5mm 12mm 20mm, clip, width=0.96\textwidth]{./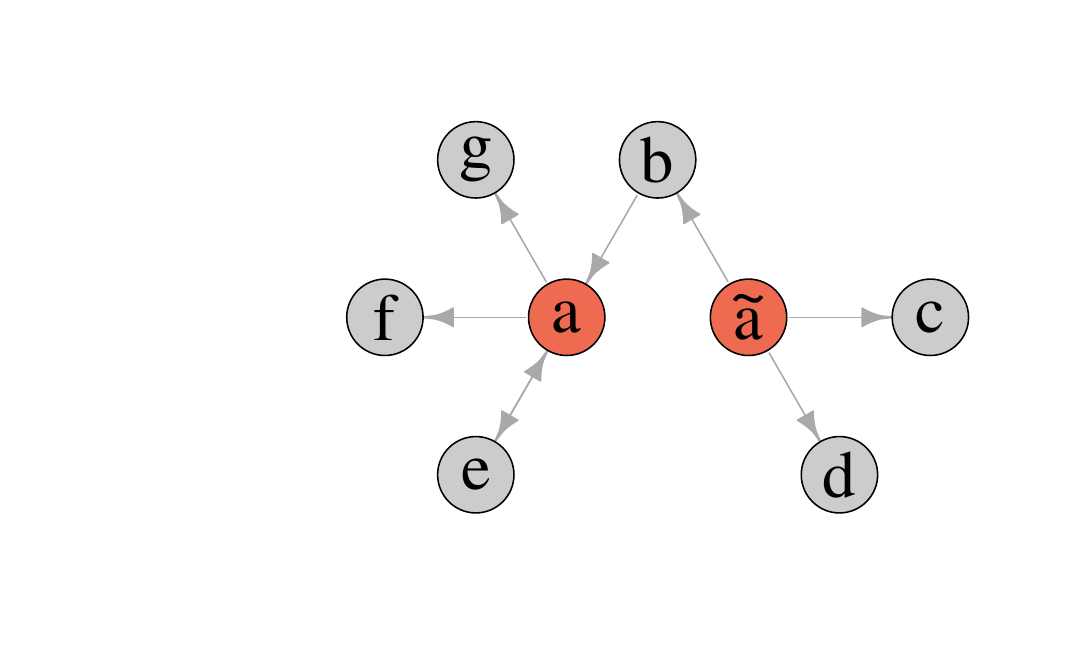}}
	\end{minipage}
\begin{minipage}{0.09\linewidth}
\hspace{15mm}
\end{minipage}
\end{center}
\vspace{-10mm}
\caption{Splitting (left to right) and merging (right to left) as mutually reverse operations. Targeted nodes are coloured in red.}		
\end{figure}


Creating node copies has an obvious practical interpretation as installing backups: 
\begin{itemize}
\item[$\mathcal{I}_{n\_ copy}$]  \textit{Node Copying:} A node $v\in\mathcal{V}_G$ can be copied by 
\begin{equation*}
\kappa^{v, w}_{copy}: G\mapsto \begin{cases} (\mathcal{V}_G\cup\{ w\},\textit{}\mathcal{E}_G\cup\{ (u, w)\vert (u,v)\in\mathcal{E}_G\}\cup \{ (w, u)\vert (v, u)\in\mathcal{E}_G\}), &\textit{ if } v\in\mathcal{V}_G, w\notin\mathcal{V}_G  \\
G, &\textit{else.}\end{cases}
\end{equation*}
Note that a copy of node $v$ can be stored at node $w$ only if $v\in\mathcal{V}_G$ and $w\notin\mathcal{V}_G$. We set $$\mathcal{I}_{n\_ copy}:=\{\kappa^{v, w}_{copy}\mid v,w\in \mathbb{V}\}.$$
\end{itemize}

\section{Cost Functions}\label{sec:costs}

The following two types of cost functions appear naturally:
\begin{enumerate}
    \item A quantification of the \textit{monetary} cost of transforming a graph $G$ into $H\in \sigma^{\mathcal{I}}(G)$, see Section~\ref{sec:monetary:costs}. 
    \item A quantification of the cost of transforming a graph $G$ into $H\in \sigma^{\mathcal{I}}(G)$ in terms of the corresponding  \textit{impact on network functionality}, see Section~\ref{sec:net:func}.
\end{enumerate}

In the case studies in Sections~\ref{sec:monetary:costs} and \ref{sec:net:func} we will also consider further properties of a cost function which we introduce next. To this end, let $\mathcal{A}$ be an acceptance set, $\mathcal{I}$ be a set of risk reducing interventions for $\mathcal{A}$, and suppose that $\mathcal{C}$ is a cost function for $(\mathcal{A}, \mathcal{I})$.
\begin{enumerate}
\item[C3] Every non-trivial network intervention comes at a real cost, i.e., \[\mathcal{C}(G,H) = 0\Rightarrow G=H \text{ for all } H, G\in\mathcal{G}.\]
\item[C4] The absolute cost of a direct transformation of $G$ to $H$ should not be more expensive than first transforming to an intermediate configuration $M$, i.e., for all $G, H,M\in\mathcal{G}$ such that $M\in\sigma^{\mathcal{I}}(G)$, and $H\in\sigma^{\mathcal{I}}(M)$ 
we have 
\[\mathcal{C}(G, H) \leq \mathcal{C}(G,M) + \mathcal{C}(M, H).\]
\end{enumerate}

\subsection{Monetary Cost Functions}\label{sec:monetary:costs}
Consider any acceptance set \(\mathcal{A} \subset \mathcal{G}\) and a set of interventions \(\mathcal{I}\) that is risk-reducing for \(\mathcal{A}\). In the following, we illustrate monetary cost functions by presenting an example of a cost function constructed based on the costs of the basic interventions \(\kappa \in \mathcal{I}\). These costs are assumed to be initially given.

Suppose we associate an initial cost $\tilde p(\kappa)\in [0,\infty)$ with each 
    $\kappa\in \mathcal{I}$ and let $\tilde p(\operatorname{id})=0$. For any $\mathcal{I}$-strategy $\kappa$ set \begin{equation}\label{eq:p} p(\kappa) :=\inf \left\{ \sum_{i=1}^n \tilde p(\kappa_i)\mid \kappa=\kappa_n\circ\cdots\circ\kappa_1, \kappa_i\in \mathcal{I}, i=1,\ldots, n, n\in \mathbb{N} \right\}.\end{equation} Then consider the minimal total cost for the transformation of $G$ into $H$: 
    \begin{equation}\label{eq:costmon}
    \mathcal{C}(G,H):=\inf\{p(\kappa)\mid \kappa \in [\mathcal{I}],  \kappa(G)=H\},  \quad \inf\emptyset:=\infty.
    \end{equation}
Note that $p$ conincides with $\tilde p$ on $\mathcal{I}$ if and only  if  $\tilde p$ is {\em consistent}, that is if $\kappa\in \mathcal{I}$ satisfies $\kappa=\kappa_n\circ\cdots\circ\kappa_1$ for some interventions $\kappa_i\in \mathcal{I}$, $i=1,\ldots, n$, $n\in \mathbb{N}$, we must have $ \tilde p(\kappa)=\sum_{i=1}^n \tilde p(\kappa_i)$. Of course, consistency of $\tilde p$ on $\mathcal{I}$ is automatically satisfied in case that  
$\kappa=\kappa_n\circ\cdots\circ\kappa_1$ for some $\kappa, \kappa_1,\ldots, \kappa_n\in \mathcal{I}$ implies that $\kappa_j=\kappa$ for some $j\in \{1,\ldots, n\}$ and $\kappa_i=\operatorname{id}$ for all $i\neq j$. The following lemma is easily verified. 

\begin{lemma}\label{lem:p}
$p$ is subadditive in the sense that for all $\kappa, \alpha \in [\mathcal{I}]$ we have  $p(\kappa\circ\alpha)\leq p(\kappa) + p(\alpha)$.
\end{lemma}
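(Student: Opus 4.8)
The plan is to exploit the fact that $p$ is defined in \eqref{eq:p} as an infimum over decompositions into basic interventions, together with the elementary observation that concatenating a decomposition of $\alpha$ with a decomposition of $\kappa$ produces a decomposition of $\kappa\circ\alpha$ whose total cost is exactly the sum of the two individual decomposition costs. Subadditivity then drops out by passing to the infimum.

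First I would dispose of the trivial cases. If $p(\kappa)=\infty$ or $p(\alpha)=\infty$, then the right-hand side equals $\infty$ and the inequality holds vacuously; if either strategy equals $\operatorname{id}$, the claim becomes an equality since $p(\operatorname{id})=0$ by $\tilde p(\operatorname{id})=0$. Hence I may assume that both $p(\kappa)$ and $p(\alpha)$ are finite. Since $\kappa$ and $\alpha$ are finite compositions of interventions in $\mathcal{I}$, their composition $\kappa\circ\alpha$ is again a finite $\mathcal{I}$-strategy, so $p(\kappa\circ\alpha)$ is well defined.

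Next, I would fix $\varepsilon>0$ and, by definition of the infimum in \eqref{eq:p}, select decompositions $\kappa=\kappa_n\circ\cdots\circ\kappa_1$ and $\alpha=\alpha_m\circ\cdots\circ\alpha_1$ with all $\kappa_i,\alpha_j\in\mathcal{I}$ such that
\[
\sum_{i=1}^n \tilde p(\kappa_i)\leq p(\kappa)+\varepsilon
\qquad\text{and}\qquad
\sum_{j=1}^m \tilde p(\alpha_j)\leq p(\alpha)+\varepsilon.
\]
Concatenation yields the decomposition $\kappa\circ\alpha=\kappa_n\circ\cdots\circ\kappa_1\circ\alpha_m\circ\cdots\circ\alpha_1$ into interventions from $\mathcal{I}$, which is admissible in the infimum defining $p(\kappa\circ\alpha)$. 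Consequently
\[
p(\kappa\circ\alpha)\leq \sum_{i=1}^n \tilde p(\kappa_i)+\sum_{j=1}^m \tilde p(\alpha_j)\leq p(\kappa)+p(\alpha)+2\varepsilon .
\]
Letting $\varepsilon\downarrow 0$ gives $p(\kappa\circ\alpha)\leq p(\kappa)+p(\alpha)$, as claimed.

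There is no genuine obstacle in this argument; it is a routine gluing of decompositions. The only two points that deserve a word of care are the well-definedness of $p(\kappa\circ\alpha)$ (addressed above) and the use of the $\varepsilon$-approximation, which is necessary precisely because the infimum in \eqref{eq:p} need not be attained, so one cannot simply pick ``optimal'' decompositions outright.
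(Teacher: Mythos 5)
Your proof is correct: the paper itself offers no argument for this lemma (it is stated as ``easily verified''), and your $\varepsilon$-approximation of near-optimal decompositions of $\kappa$ and $\alpha$, concatenated into an admissible decomposition of $\kappa\circ\alpha$, is exactly the intended routine verification. The two points you flag --- well-definedness of $p(\kappa\circ\alpha)$ and the need for the $\varepsilon$-argument because the infimum in \eqref{eq:p} need not be attained --- are handled appropriately, so nothing is missing.
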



\begin{proposition}\label{prop:mon:costs} 
Suppose that  $\mathcal{C}$ is given by  \eqref{eq:costmon}.
Then: 
\begin{enumerate}
\item $\mathcal{C}(G,H)<\infty$ if and only if $H\in \sigma^{\mathcal{I}}(G)$.
\item $\mathcal{C}$  is a cost function for $(\mathcal{A}, \mathcal{I})$ which satisfies \textit{C4}.  
\item Further suppose that $\inf\{ \tilde p(\kappa)\mid \kappa \in \mathcal{I}\setminus \{\operatorname{id} \}\}>0 $, then \textit{C3} is satisfied. 

\item Moreover, if $\mathcal{I}$ is finite and $\tilde p(\kappa)>0$ for all $\kappa\in \mathcal{I}$ such that $\kappa\neq \operatorname{id}$, then the infimum in \eqref{eq:costmon} is attained whenever $H\in \sigma^{\mathcal{I}}(G)$. In other words, if $H\in \sigma^{\mathcal{I}}(G)$, then there are optimal $\mathcal{I}$-strategies $\kappa$  for transforming $G$ into $H$ in the sense that $\kappa(G)=H$ and $p(\kappa)=\mathcal{C}(G,H)$.  
\end{enumerate}
\end{proposition}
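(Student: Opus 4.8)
The plan is to prove the four assertions in order, and to base everything on the observation that the two nested infima in \eqref{eq:costmon} collapse into a single one over finite intervention sequences:
\[
\mathcal{C}(G,H)=\inf\Big\{{\textstyle\sum_{i=1}^n}\tilde p(\kappa_i)\ \Big|\ n\in\mathbb{N},\ \kappa_i\in\mathcal{I},\ (\kappa_n\circ\cdots\circ\kappa_1)(G)=H\Big\}.
\]
Indeed, a choice of $\kappa\in[\mathcal{I}]$ with $\kappa(G)=H$ together with a decomposition of $\kappa$ into members of $\mathcal{I}$ is exactly a finite $\mathcal{I}$-sequence steering $G$ to $H$, and conversely. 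First I would establish claim 1: if $H\in\sigma^{\mathcal{I}}(G)$ there is $\kappa\in[\mathcal{I}]$ with $\kappa(G)=H$, and since any decomposition of $\kappa$ is finite with each $\tilde p(\kappa_i)<\infty$, we get $p(\kappa)<\infty$ and hence $\mathcal{C}(G,H)<\infty$; conversely $\mathcal{C}(G,H)<\infty$ forces the index set in \eqref{eq:costmon} to be non-empty (recall $\inf\emptyset=\infty$), so some $\kappa\in[\mathcal{I}]$ satisfies $\kappa(G)=H$, i.e.\ $H\in\sigma^{\mathcal{I}}(G)$.

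For claim 2 I would first note that claim 1 already guarantees $\mathcal{C}$ takes values in $[0,\infty]$, so it is a well-defined map into $\mathbb{R}\cup\{\infty\}$. Property \textit{C1} follows from $\operatorname{id}\in[\mathcal{I}]$, $\operatorname{id}(G)=G$ and $p(\operatorname{id})=\tilde p(\operatorname{id})=0$, combined with $p\ge 0$; property \textit{C2} is immediate because $\tilde p\ge 0$ renders every admissible sum, hence every $p(\kappa)$ and their infimum, non-negative. For \textit{C4} I would fix $\alpha,\beta\in[\mathcal{I}]$ with $\alpha(G)=M$ and $\beta(M)=H$ (these exist since $M\in\sigma^{\mathcal{I}}(G)$ and $H\in\sigma^{\mathcal{I}}(M)$), observe $(\beta\circ\alpha)(G)=H$, and invoke Lemma~\ref{lem:p} to obtain $\mathcal{C}(G,H)\le p(\beta\circ\alpha)\le p(\alpha)+p(\beta)$; taking the infimum first over $\beta$ and then over $\alpha$ yields \textit{C4}, all quantities being finite by claim 1.

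Claim 3 is then short: with $\nu:=\inf\{\tilde p(\kappa)\mid\kappa\in\mathcal{I}\setminus\{\operatorname{id}\}\}>0$, any finite $\mathcal{I}$-sequence steering $G$ to some $H\neq G$ must contain at least one factor distinct from $\operatorname{id}$ (otherwise its composition fixes $G$), so its total cost is at least $\nu$; passing to the infimum gives $\mathcal{C}(G,H)\ge\nu>0$ when $H\in\sigma^{\mathcal{I}}(G)$ and $H\neq G$ (and $\mathcal{C}(G,H)=\infty$ otherwise), whence $\mathcal{C}(G,H)=0$ forces $G=H$, which is \textit{C3}.

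Claim 4 is the crux and where I expect the real work. Fix $H\in\sigma^{\mathcal{I}}(G)$ and put $c:=\mathcal{C}(G,H)<\infty$; the case $H=G$ is handled by $\operatorname{id}$, so assume $H\neq G$, whence $\mathcal{I}\setminus\{\operatorname{id}\}\neq\emptyset$ and, by finiteness of $\mathcal{I}$, $\nu:=\min\{\tilde p(\kappa)\mid\kappa\in\mathcal{I}\setminus\{\operatorname{id}\}\}>0$. Since identity factors neither change the graph nor contribute cost, it suffices to minimise the total cost over finite sequences drawn from the finite set $\mathcal{I}\setminus\{\operatorname{id}\}$ that steer $G$ to $H$. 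The key estimate is that every such sequence of cost at most $c+1$ has at most $N:=\lfloor (c+1)/\nu\rfloor$ terms, because each term costs at least $\nu$; hence all sequences of cost $\le c+1$ lie in the \emph{finite} set $S_N$ of admissible sequences of length at most $N$, and $S_N\neq\emptyset$ since $c$ is approximated to within $1$. Over the finite set $S_N$ the total cost attains a minimum $c^\ast\ge c$, and I would close the argument by showing $c^\ast=c$: a sequence of cost strictly below $c^\ast$ (which must exist if $c^\ast>c$, by definition of the infimum) would have cost $<c^\ast\le c+1$, hence belong to $S_N$, contradicting minimality of $c^\ast$. The minimising sequence composes to a strategy $\kappa^\ast\in[\mathcal{I}]$ with $\kappa^\ast(G)=H$ and $p(\kappa^\ast)\le c=\mathcal{C}(G,H)\le p(\kappa^\ast)$, so $p(\kappa^\ast)=\mathcal{C}(G,H)$ and the infimum in \eqref{eq:costmon} is attained. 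The main obstacle is exactly this reduction to a finite search: one must argue carefully that discarding identity factors is harmless and that the length bound, resting on the strict positivity $\nu>0$ and the finiteness of $\mathcal{I}$, really confines an optimising sequence to a finite set on which the minimum is genuinely realised.
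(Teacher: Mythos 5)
Your proposal is correct and follows essentially the same route as the paper: claims 1--3 are handled by the identical observations ($p(\operatorname{id})=0$, non-negativity of $\tilde p$, subadditivity of $p$ from Lemma~\ref{lem:p}, and the lower bound $\nu>0$ on non-trivial costs), and your claim 4 is the paper's argument in sequence form --- use $\nu=\min\{\tilde p(\kappa)\mid\kappa\in\mathcal{I}\setminus\{\operatorname{id}\}\}>0$ together with finiteness of $\mathcal{I}$ to bound the length of any candidate decomposition below a fixed cost ceiling (you use $c+1$, the paper uses $p(\kappa)$ for one reference strategy $\kappa$), reduce to a finite search set, and conclude the infimum is a minimum. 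The only cosmetic difference is that you collapse the two nested infima into one infimum over finite $\mathcal{I}$-sequences at the outset, which the paper leaves implicit.
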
 
 
 \begin{proof}
 1. is obvious.
 
 \smallskip\noindent
 2. $\mathcal{C}$ is a cost function for $(\mathcal{A}, \mathcal{I})$ which satisfies \textit{C4}: \textit{C1} follows from $p(\operatorname{id})=0$, whereas \textit{C2} is implied by $\tilde p(\kappa)\geq 0$  for each $\kappa\in \mathcal{I}$. As for \textit{C4} consider $G,M,H\in \mathcal{G}$ and suppose that there is $\kappa, \alpha \in [\mathcal{I}]$ with $\alpha (G)=M$ and $\kappa(M)=H$.  Then $\kappa\circ \alpha(G)=H$, and  by Lemma~\ref{lem:p} above we have that $\mathcal{C}(G,H) \leq \mathcal{C}(G,M) + \mathcal{C}(M,H)$.
 
 \smallskip\noindent
 3. Clearly, if $\nu:=\inf\{ p(\kappa)\mid \kappa \in \mathcal{I}\setminus \{\operatorname{id} \}\}>0 $, then $p(\kappa)>\nu$ for all $\kappa\in [\mathcal{I}^\downarrow]$ such that $\kappa\neq \operatorname{id}$. Hence, $\mathcal{C}(G,H)=0$ is only possible if $H=G$. 
 
 \smallskip\noindent
4. In order to show that there are optimal $\mathcal{I}$-strategies in case $H\in \sigma^{\mathcal{I}}(G)$, pick any $\kappa\in [\mathcal{I}]$ such that $\kappa(G)=H$.  Note that $\nu=\min \{\tilde p(\kappa)\mid \kappa\in \mathcal{I}\setminus\{\operatorname{id}\}\}>0$, and let $m\in \mathbb{N}$ be such that $m\nu > p(\kappa)$. Denote by $[\mathcal{I}]_m$ the set of all $\mathcal{I}$-strategies which allow for a decomposition $\alpha=\alpha_k\circ \ldots \alpha_1$ where $\alpha_1,\ldots, \alpha_k\in \mathcal{I}$ and $k\leq m$. Note that $p(\alpha)\geq m\nu >p(\kappa)$ for all $\alpha\in [\mathcal{I}]\setminus [\mathcal{I}]_m$. Therefore, it follows that  $$\mathcal{C}(G,H)=\inf\{p(\alpha)\mid \alpha \in [\mathcal{I}]_m,  \alpha(G)=H\}.$$ Now the assertion follows from $[\mathcal{I}]\setminus [\mathcal{I}]_m$ being a finite set.
  \end{proof}

\subsection{Examples of Cost Functions based on Loss of Network Functionality}\label{sec:net:func}
In this section we consider cost functions which quantify the difference in functionality $\mathcal{F}:\mathcal{G}\to\mathbb{R}$ between two networks:
\begin{equation}\label{eq:cost}
     \mathcal{C}(G, H) = h( \mathcal{F}(G) - \mathcal{F}(H))
\end{equation}
where $h:\mathbb{R}\to\mathbb{R}$. Note that the Properties \textit{C1} and \textit{C4} solely depend on the choice of $h$, and one easily verifies the following result:
\begin{lemma} Consider a cost function of type \eqref{eq:cost}. Then 
\begin{enumerate}
    \item Property \textit{C1} is satisfied if and only if $h(0) = 0$.
    \item If $h$ is sub-additive, then \textit{C4} holds.
\end{enumerate}
\end{lemma}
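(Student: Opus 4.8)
The plan is to prove the two equivalences directly from the definition of the cost function in \eqref{eq:cost}, treating each property in turn. Since $\mathcal{C}(G,H) = h(\mathcal{F}(G) - \mathcal{F}(H))$, the key observation is that both \textit{C1} and \textit{C4} are conditions that only constrain the values of $\mathcal{C}$ through the outer function $h$ applied to differences of functionality values, so the verification reduces to elementary properties of $h$ on $\mathbb{R}$.

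For the first claim, I would argue as follows. Property \textit{C1} requires $\mathcal{C}(G,G) = 0$ for all $G \in \mathcal{G}$. By \eqref{eq:cost} we have $\mathcal{C}(G,G) = h(\mathcal{F}(G) - \mathcal{F}(G)) = h(0)$, so \textit{C1} holds if and only if $h(0) = 0$. The ``if'' direction is immediate; for ``only if'' one simply evaluates at any single $G \in \mathcal{G}$ (the set $\mathcal{G}$ is non-empty), which forces $h(0)=0$.

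For the second claim, I would fix arbitrary $G, M, H \in \mathcal{G}$ with $M \in \sigma^{\mathcal{I}}(G)$ and $H \in \sigma^{\mathcal{I}}(M)$, and write $a := \mathcal{F}(G) - \mathcal{F}(M)$ and $b := \mathcal{F}(M) - \mathcal{F}(H)$. Then $\mathcal{F}(G) - \mathcal{F}(H) = a + b$, so that $\mathcal{C}(G,H) = h(a+b)$ while $\mathcal{C}(G,M) + \mathcal{C}(M,H) = h(a) + h(b)$. Sub-additivity of $h$ gives $h(a+b) \leq h(a) + h(b)$, which is exactly the inequality \textit{C4}. Note that this is a genuine one-directional implication (sub-additivity is sufficient, not claimed necessary), matching the asymmetric phrasing of part~2.

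The argument is almost entirely routine; the only point requiring a little care is to make sure the quantifier structure is respected. For \textit{C1} the ``only if'' direction needs just one evaluation point, and for \textit{C4} one must check that the intermediate differences $a$ and $b$ range over the correct values—but since sub-additivity is assumed to hold on all of $\mathbb{R}$, no restriction on the attainable values of $a,b$ is needed, and there is no real obstacle. The lemma is stated as ``easily verified,'' and indeed the main content is simply recognizing that the telescoping identity $\mathcal{F}(G)-\mathcal{F}(H) = (\mathcal{F}(G)-\mathcal{F}(M)) + (\mathcal{F}(M)-\mathcal{F}(H))$ converts \textit{C4} into sub-additivity of $h$.
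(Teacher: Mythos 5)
Your proof is correct and is exactly the elementary verification the paper intends: the paper states this lemma as ``easily verified'' without giving details, and your argument—evaluating $\mathcal{C}(G,G)=h(0)$ for C1, and using the telescoping identity $\mathcal{F}(G)-\mathcal{F}(H)=(\mathcal{F}(G)-\mathcal{F}(M))+(\mathcal{F}(M)-\mathcal{F}(H))$ together with sub-additivity of $h$ for C4—is the canonical route. Nothing is missing; the quantifier care you note (one evaluation point for the ``only if'' of C1, sub-additivity on all of $\mathbb{R}$ for C4) is handled appropriately.
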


Network functionality is typically measured by considering the average node distance within the network: A smaller average value corresponds to a faster or more efficient data flow. In following we recall two common concepts to define the distance between nodes. To this end, we refer the reader not familiar with basic notions for graphs such as paths and adjacency matrices to Appendix~\ref{sec:graph:basics}.

\begin{itemize}
    \item \textit{Shortest path length:}  Given two nodes $v$ and $w$ in a network $G$, we denote by $l_{vw}^G$ the length of a path from node $v$ to $w$ that passes through a minimum number of edges. If there is no path from $v$ to $w$, then we set $l_{vw}^G = \min\emptyset :=\infty$. However, note that it has been emphasized, see e.g.\ \cite{Estrada2012}, that considering shortest paths only provides an incomplete picture of the flow of information, data, or goods in a network since that may take place through many different routes. 
    

    \item \textit{Communicability:} Compared to shortest path length, \textit{communicability} attempts to provide a more holistic view of the communication channels between nodes. Consider a graph \(G\) of size \(N\) with enumerated vertices \(v_1, \ldots, v_N\). The communicability \(\mathfrak{c}_{v_i v_j}^G\) between nodes \(v_i\) and \(v_j\) is defined as the \((i,j)\)-th entry of the matrix exponential \(e^{A_G}\) of the adjacency matrix \(A_G\), i.e.,
\begin{equation*}
    \mathfrak{c}_{v_iv_j}^G := \sum_{k=0}^\infty \frac{(A_G^k) (i,j)}{k!} = e^{A_G} (i,j),\; \mbox{where}\; e^{A_G} := I_N + A_G + \frac{A^2_G}{2!}+\frac{A_G^3}{3!}+\cdots = \sum_{k=0}^\infty \frac{A_G^k}{k!}
\end{equation*}
and $I_N$ is the identity matrix in $\mathbb{R}^{N\times N}$. Note that  $A_G^k (i,j)$ gives the number of walks of length $k$ from node $v_i$ to $v_j$, and therefore $\mathfrak{c}_{v_iv_j}^G$ is a weighted sum of all walks from $v_i$ to $v_j$. 
\end{itemize}


\subsubsection{Functionality Based on Shortest Paths} A global measure of network functionality based on shortest paths is given by the \textit{graph efficiency}  as defined in \cite{Latora2001}:
		\begin{equation}\label{eq:greff}
			\mathcal{F}(G) =  \frac{1}{|\mathcal{V}_G|(|\mathcal{V}_G|-1)} \sum_{v,w, v\neq w}\frac{1}{l^G_{vw}} \qquad \Big(\frac{1}{\infty}:=0\Big).
		\end{equation}

{ 
\begin{proposition}\label{prop:edelshort}
Consider an acceptance set $\mathcal{A}$ and a set of admissible interventions $\mathcal{I}$ which is risk-reducing for $\mathcal{A}$ such that  $\mathcal{I}\subset\mathcal{I}_{e\_ del} \cup   \mathcal{I}_{s\_iso}$. Then $\mathcal{C}$ given by \eqref{eq:cost} and \eqref{eq:greff} and a strictly increasing function $h$ with $h(0) = 0$ is a cost function for $(\mathcal{A},\mathcal{I})$. Moreover, $\mathcal{C}$ satisfies the following constrained version of \textit{C3}: 
\begin{equation}\label{eq:C3}
\forall G \in \mathcal{G}, \, \forall H \in \sigma^{\mathcal{I}}(G): \quad \mathcal{C}(G, H) = 0 \, \Rightarrow \, G = H.
\end{equation}
\end{proposition}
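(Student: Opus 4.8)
The plan is to reduce both claims to a single monotonicity property of the graph efficiency $\mathcal{F}$ under the admissible interventions, and then exploit that $h$ is strictly increasing with $h(0)=0$. Since $h$ is strictly increasing, $h(x)\geq 0$ holds precisely when $x\geq 0$, and $h$ is injective; property \textit{C1} then follows immediately from $h(0)=0$ by the preceding lemma. Thus everything hinges on understanding how $\mathcal{F}$ behaves when we apply strategies built from $\mathcal{I}\subset \mathcal{I}_{e\_ del}\cup\mathcal{I}_{s\_ iso}$.

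First I would record the structural consequence of the restriction on $\mathcal{I}$: every intervention in $\mathcal{I}_{e\_ del}$ and $\mathcal{I}_{s\_ iso}$ leaves the node set untouched and only removes edges. Hence for any $H\in\sigma^{\mathcal{I}}(G)$ we have $\mathcal{V}_H=\mathcal{V}_G$ and $\mathcal{E}_H\subseteq\mathcal{E}_G$. In particular the normalising prefactor $\tfrac{1}{|\mathcal{V}_G|(|\mathcal{V}_G|-1)}$ in \eqref{eq:greff} is identical for $G$ and $H$, which is the point that makes a term-by-term comparison of the two efficiency sums meaningful.

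Next I would invoke the standard fact that deleting edges can only lengthen shortest paths: since every directed path in $H$ is also a path in $G$, we get $l^G_{vw}\leq l^H_{vw}$ for all ordered pairs $v\neq w$ (with the convention $1/\infty=0$ covering the case where a path disappears entirely). Consequently $1/l^G_{vw}\geq 1/l^H_{vw}$ for every pair, so the two efficiency sums satisfy $\mathcal{F}(G)\geq\mathcal{F}(H)$, i.e.\ $\mathcal{F}(G)-\mathcal{F}(H)\geq 0$. Applying the monotone $h$ yields $\mathcal{C}(G,H)=h(\mathcal{F}(G)-\mathcal{F}(H))\geq h(0)=0$, which is property \textit{C2}; together with \textit{C1} this shows $\mathcal{C}$ is a cost function for $(\mathcal{A},\mathcal{I})$.

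Finally, for the constrained version \eqref{eq:C3} of \textit{C3}, suppose $H\in\sigma^{\mathcal{I}}(G)$ with $G\neq H$. Then there is an edge $(v,w)\in\mathcal{E}_G\setminus\mathcal{E}_H$, for which $l^G_{vw}=1$ (the direct edge is a shortest path) while $l^H_{vw}\geq 2$ since that edge is gone; thus $1/l^G_{vw}=1>1/l^H_{vw}$ strictly, whereas all remaining terms satisfy the weak inequality from the previous step. Summing gives $\mathcal{F}(G)>\mathcal{F}(H)$ strictly, and injectivity of the strictly increasing $h$ forces $\mathcal{C}(G,H)=h(\mathcal{F}(G)-\mathcal{F}(H))>h(0)=0$. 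Contrapositively, $\mathcal{C}(G,H)=0$ can only occur when $G=H$. The one point requiring slight care—and the closest thing to an obstacle—is isolating this strict decrease: one must pin the deleted edge's contribution to the value $1$ in $G$ and confirm it strictly drops in $H$, while simultaneously using the shared node set to keep the normalisation constant fixed, so that the strict term-wise gain indeed survives in $\mathcal{F}$.
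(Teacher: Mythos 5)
Your proof is correct, and it rests on the same key fact as the paper's: deleting edges can only lengthen shortest paths, with strict lengthening at a deleted edge itself, so that $\mathcal{F}$ weakly decreases under the admissible interventions and strictly decreases under any genuine alteration. The organizational difference is worth noting. The paper argues intervention by intervention, treating $\kappa^{(q,r)}_{e\_ del}$ directly and reducing $\kappa^{\mathcal{W}}_{iso}$ to a sequence of edge deletions, and then propagates the weak/strict inequalities through the composition defining a strategy. You instead characterize the reachable set globally: every $H\in\sigma^{\mathcal{I}}(G)$ has vertex set $\mathcal{V}_G$ and edge set contained in $\mathcal{E}_G$, so you compare $G$ with the endpoint $H$ in a single step, and $H\neq G$ hands you a witness edge $(v,w)\in\mathcal{E}_G\setminus\mathcal{E}_H$ with $l^G_{vw}=1<2\leq l^H_{vw}$. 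This buys a slightly tighter treatment of \eqref{eq:C3}: the paper's formulation implicitly uses that a strategy altering $G$ must contain at least one step altering its input, so that strictness survives composition, whereas your endpoint comparison makes strictness immediate and keeps the normalizing prefactor in \eqref{eq:greff} manifestly constant. Conversely, the paper's per-intervention scheme is the one that carries over to intervention classes which do not merely shrink the edge set over a fixed vertex set (e.g.\ the node splittings handled via communicability in Proposition~\ref{prop:edelcomm}), where no such global endpoint characterization is available.
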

\begin{proof}
\textit{C1} follows from $h(0)=0$. In order to prove \textit{C2} and \eqref{eq:C3}, we show that $\mathcal{F}(G)$ is non-increasing whenever we apply a intervention in $\mathcal{I}_{e\_ del} \cup   \mathcal{I}_{s\_iso}$ to $G$, and even strictly decreasing if the intervention alters $G$. As a consequence, for all \(\kappa \in [\mathcal{I}_{e\_ \text{del}} \cup \mathcal{I}_{s\_ \text{iso}}]\), and thus for all \(\kappa \in [\mathcal{I}]\), we obtain that \(\mathcal{F}(\kappa(G)) \leq \mathcal{F}(G)\), which implies \(h(\mathcal{F}(G) - \mathcal{F}(\kappa(G))) \geq h(0) = 0\), demonstrating \textit{C2}. Additionally, \(\mathcal{F}(\kappa(G)) < \mathcal{F}(G)\) when \(\kappa(G) \neq G\), implying \(h(\mathcal{F}(G) - \mathcal{F}(\kappa(G))) > h(0) = 0\), which corresponds to \eqref{eq:C3}.

Let us first consider edge deletions: Fix $G\in\mathcal{G}$ and delete an edge $(q,r)\in\mathbb{E}$. Let $\tilde{G} := \kappa^{(q,r)}_{e\_ del}(G)$. The deletion of the edge $(q,r)$ does not create any new paths, and therefore $l_{vw}^G\leq l_{vw}^{\tilde{G}}$ for all nodes $v,w\in\mathcal{V}_G$.  Moreover, if $(q,r)\in\mathcal{E}_G$, then we necessarily have $l_{qr}^G < l_{qr}^{\tilde{G}}$, i.e., there is at least one node pair where the inequality is strict. We thus obtain $\mathcal{F}(\tilde{G})\leq\mathcal{F}(G)$ for all networks $G\in\mathcal{G}$ and $(q,r)\in\mathbb{E}$, and ``='' holds exactly when we have $(q,r)\not\in\mathcal{E}_G$, i.e., if and only if $\tilde{G}= G$. 

Note that the isolation of a subgraph, that is  
$\tilde{G}:=\kappa^{\mathcal{W}}_{iso}(G)$ where $\mathcal{W}\subset\mathbb{V}$, is a sequence of edge deletions, namely deleting all edges connecting nodes in $\mathcal{V}_G\setminus \mathcal{W}$ with some node in $\mathcal{W}$. Hence, also in this case we obtain $\mathcal{F}(\tilde{G})\leq\mathcal{F}(G)$, and ``='' holds exactly when $v\not\in \mathcal{V}_G$ for all $v\in \mathcal{W}$ or when the subgraph given by the nodes $\mathcal{W}\cap \mathcal{V}_G$ is already in  $G$, i.e., if and only if $\tilde{G}= G$.

%
\end{proof}}

However, measuring network functionality in terms of shortest paths may come with some difficulties when allowing for $\mathcal{I}_{\text{n}\_split}\subseteq\mathcal{I}$ as is illustrated by the following example.

\begin{example} \label{ex:nonsplit}
 Let $H$ be the result of a node split in a network $G$ of size $N$. For $\mathcal{F}$ to be monotonically decreasing under this intervention, the shortest path lengths in $H$ must satisfy:
\begin{equation} \label{eq:nonsplit}
    \frac{N-1}{N+1} \sum_{v,w\in\mathcal{V}_H, v\neq w}\frac{1}{l^H_{vw}} \leq \sum_{v,w\in\mathcal{V}_G, v\neq w}\frac{1}{l^G_{vw}}.
\end{equation}

Consider a graph $G=G_1\cup G_2$ consisting of two components, where $G_1$ is as shown in Figure \ref{fig:nonsplit}, and $G_2$ is a graph of $N-3$ isolated nodes such that $\mathcal{V}_{G_1} \cap \mathcal{V}_{G_2} = \emptyset$. Assume node $b \in \mathcal{V}_{G_1}$ is split as in Figure \ref{fig:nonsplit}, transforming $G_1$ into $H_1$.

\begin{figure}[H]
\begin{center}
    \begin{minipage}[t]{0.4\linewidth}
        \centering
        \includegraphics[trim=20mm 25mm 35mm 20mm, clip, width=0.8\textwidth]{./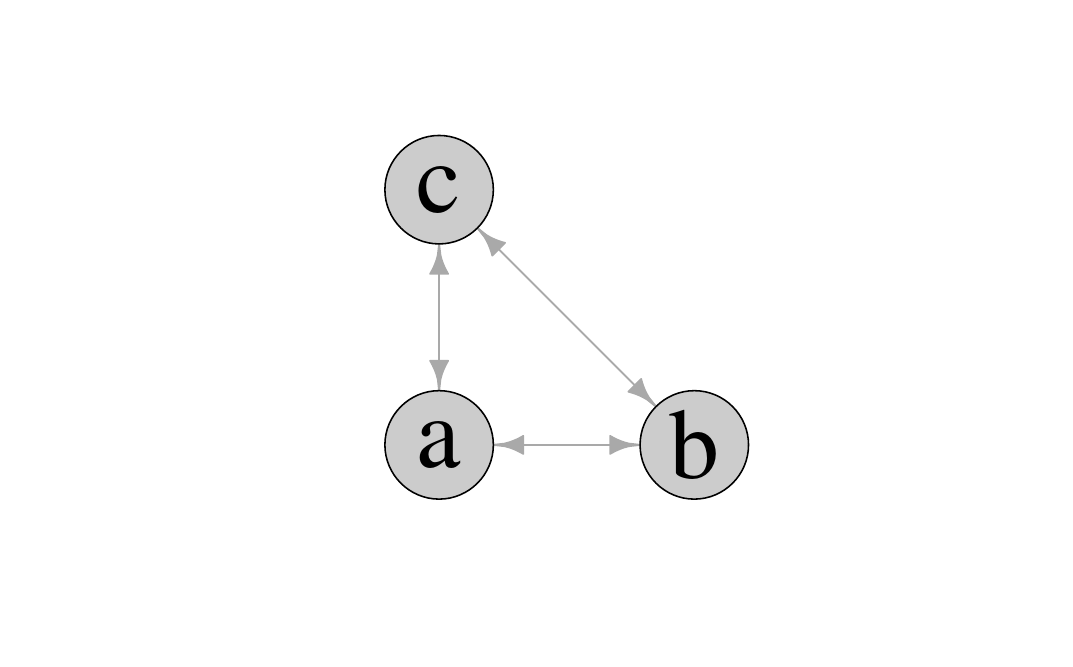}
        \caption*{$G_1$}
    \end{minipage}
    \begin{minipage}[t]{0.4\linewidth}
        \centering
        \includegraphics[trim=20mm 25mm 35mm 20mm, clip, width=0.8\textwidth]{./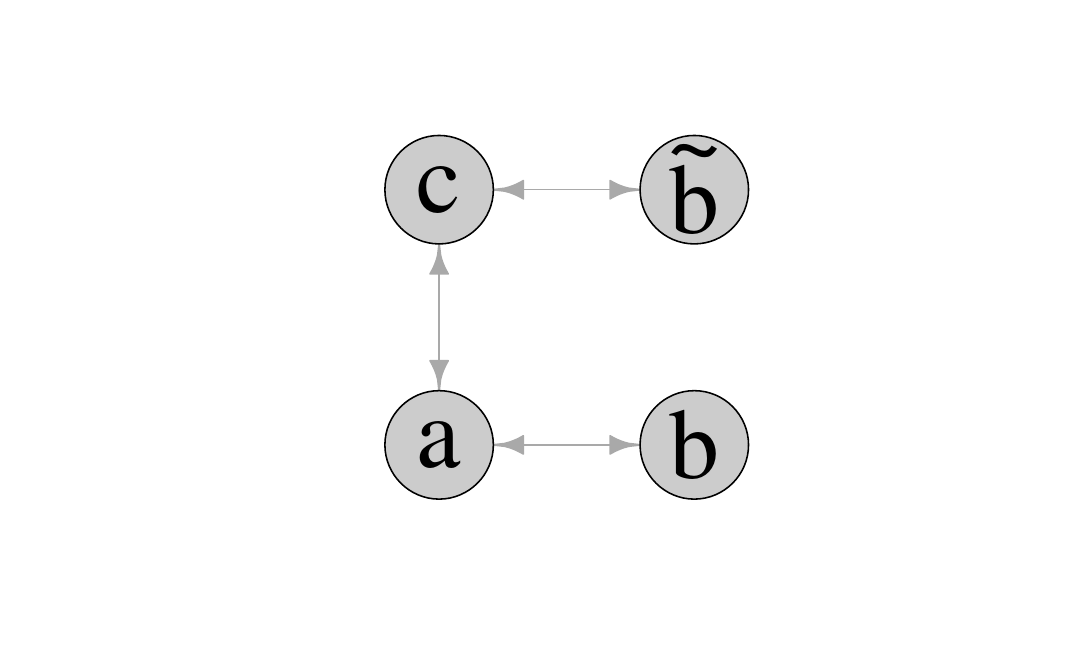}
        \caption*{$H_1$}
    \end{minipage}
\end{center}
\vspace{-5mm}
\caption{Node $b$ is split by adding node $\tilde{b} \notin \mathcal{V}_G$ and replacing edges $\{(b,c), (c,b)\}$ with $\{(\tilde{b}, c), (c, \tilde{b})\}$.}
\label{fig:nonsplit}
\end{figure}

We obtain:
\[
\sum_{v,w \in \mathcal{V}_{H_1}, v \neq w} \frac{1}{l^{H_1}_{vw}} = 2 \cdot \left(1 + 1 + \frac{1}{2} + \frac{1}{2} + \frac{1}{3} + 1\right) = \sum_{v,w \in \mathcal{V}_{G_1}, v \neq w} \frac{1}{l^{G_1}_{vw}} + \frac{8}{3},
\]
and since $G_1$ and $G_2$ are not connected, we have:
\[
\sum_{v,w \in \mathcal{V}_H, v \neq w} \frac{1}{l^H_{vw}} = \sum_{v,w \in \mathcal{V}_{H_1}, v \neq w} \frac{1}{l^{H_1}_{vw}}, \quad \sum_{v,w \in \mathcal{V}_G, v \neq w} \frac{1}{l^G_{vw}} = \sum_{v,w \in \mathcal{V}_{G_1}, v \neq w} \frac{1}{l^{G_1}_{vw}}.
\]
Thus, \eqref{eq:nonsplit} becomes:
\[
\frac{N-1}{N+1} \left( \sum_{v,w \in \mathcal{V}_G, v \neq w} \frac{1}{l^G_{vw}} + \frac{8}{3} \right) \leq \sum_{v,w \in \mathcal{V}_G, v \neq w} \frac{1}{l^G_{vw}}.
\]
This inequality is violated when $N$ is large enough, which can be achieved by increasing the size of $G_2$. Hence, if $h:\mathbb{R} \to \mathbb{R}$ is strictly increasing with $h(0)=0$, condition \textit{C2} is not satisfied.
\end{example}

\subsubsection{Functionality Based on Network Communicability}\label{sec:commun}
Taking the average over all the  communicabilities yields the global measure
\begin{equation}\label{eq:comm}
    \mathcal{F}(G) = \frac{1}{\vert\mathcal{V}_G\vert^2} \sum_{v,w} \mathfrak{c}_{vw}^G
\end{equation}
for a network $G$. 
In contrast to shortest paths, it may be useful to include self-communicabilities, that are entities $\mathfrak{c}_{vv}$, into the consideration since walks of from a node to itself correspond to closed communication loops.  Basing the cost function on communicability we do not only satisfy \textit{C2} in case of edge deletions, but---in contrast to shortest path based costs---also when risk-reduction is achieved by node splitting. 

\begin{proposition}\label{prop:edelcomm}
Consider an acceptance set $\mathcal{A}$ and a set of admissible interventions $\mathcal{I}$ which is risk-reducing for $\mathcal{A}$ such that $\mathcal{I} \subset \mathcal{I}_{e\_ del}\cup \mathcal{I}_{split}$.
 Then $\mathcal{C}$ given by \eqref{eq:cost} and \eqref{eq:comm} and a strictly increasing function $h$ with $h(0) = 0$  is a cost function for $(\mathcal{A},\mathcal{I})$ which satisfies \eqref{eq:C3}. 
\end{proposition}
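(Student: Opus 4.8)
The plan is to reduce the whole statement to a monotonicity property of the functionality $\mathcal{F}$ from \eqref{eq:comm}. Property \textit{C1} is immediate from $h(0)=0$. Since $h$ is strictly increasing with $h(0)=0$, we have $h(x)\geq 0\Leftrightarrow x\geq 0$ and $h(x)=0\Leftrightarrow x=0$, so both \textit{C2} and \eqref{eq:C3} follow once we establish that $\mathcal{F}(\kappa(G))\leq\mathcal{F}(G)$ for every $\kappa\in\mathcal{I}$, with strict inequality whenever $\kappa(G)\neq G$. Indeed, writing a strategy as $\kappa=\kappa_n\circ\cdots\circ\kappa_1$ and setting $G_0=G$, $G_i=\kappa_i(G_{i-1})$, the chain $\mathcal{F}(G_0)\geq\mathcal{F}(G_1)\geq\cdots\geq\mathcal{F}(G_n)=\mathcal{F}(H)$ gives \textit{C2}, and if the endpoints agree then every step is an equality, forcing $G_{i-1}=G_i$ and hence $H=G$, which is \eqref{eq:C3}. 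It therefore suffices to treat a single edge deletion and a single node split. Throughout I work with the total communicability $S(G):=\mathbf{1}^\top e^{A_G}\mathbf{1}=\sum_{v,w}\mathfrak{c}^G_{vw}$, so that $\mathcal{F}(G)=S(G)/|\mathcal{V}_G|^2$, and I recall that $A_G^k(i,j)$ counts the walks of length $k$.

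For an edge deletion $\tilde G=\kappa^{(q,r)}_{e\_ del}(G)$ the argument parallels Proposition~\ref{prop:edelshort}: since $A_{\tilde G}\leq A_G$ entrywise and both matrices are nonnegative, an induction on $k$ yields $A_{\tilde G}^k\leq A_G^k$ entrywise, hence $e^{A_{\tilde G}}\leq e^{A_G}$ entrywise and $S(\tilde G)\leq S(G)$. As the node set is unchanged, $\mathcal{F}(\tilde G)\leq\mathcal{F}(G)$. If $(q,r)\in\mathcal{E}_G$, the length-one walk from $q$ to $r$ disappears, so the $(q,r)$-entry strictly drops and $S(\tilde G)<S(G)$; otherwise $\tilde G=G$.

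The node split is the main obstacle, because it changes the normalisation: with $N=|\mathcal{V}_G|$, a split $H=\kappa^{\mathcal{L},v,\tilde v}_{split}(G)$ with $\tilde v\notin\mathcal{V}_G$ has $|\mathcal{V}_H|=N+1$. The key structural fact is that the folding map $\phi:\mathcal{V}_H\to\mathcal{V}_G$ with $\phi(\tilde v)=v$ and $\phi=\operatorname{id}$ otherwise is a graph homomorphism, since in $H$ the edges incident to $v$ are merely redistributed between $v$ and $\tilde v$, and there is no edge between $v$ and $\tilde v$ (no self-loop at $v$). Thus $\phi$ maps every $H$-walk to a $G$-walk of the same length. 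The crucial claim is that $\phi$ is \emph{injective} on walks of length $\geq 1$: because $v$ and $\tilde v$ have disjoint in- and out-neighbourhoods in $H$, at every position of a walk whose $\phi$-image equals $v$ an adjacent walk-edge forces whether the lift sits at $v$ or at $\tilde v$, so a walk is uniquely determined by its image. Consequently $W_k^H\leq W_k^G$ for the total walk counts $W_k^\bullet=\mathbf{1}^\top A_\bullet^k\mathbf{1}$ for all $k\geq 1$, while $W_0^H=N+1=W_0^G+1$; weighting by $1/k!$ and summing gives $S(H)\leq S(G)+1$.

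It remains to absorb the changed normalisation. Using $S(H)\leq S(G)+1$, it suffices to verify $\tfrac{S(G)+1}{(N+1)^2}<\tfrac{S(G)}{N^2}$, which is equivalent to $S(G)>\tfrac{N^2}{2N+1}$. This holds because $e^{A_G}\geq I+A_G$ entrywise yields $S(G)\geq N+|\mathcal{E}_G|\geq N$, and $N>\tfrac{N^2}{2N+1}$ for $N\geq 1$. Hence $\mathcal{F}(H)<\mathcal{F}(G)$ for every true split, giving the required strict decrease. Note that strictness is in fact delivered entirely by the normalisation step, so the non-strict bound $S(H)\leq S(G)+1$ already suffices. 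I expect the injectivity of $\phi$ on walks (equivalently, the bound $S(H)\leq S(G)+1$) to be the delicate point, whereas the closing inequality for the normalisation is routine.
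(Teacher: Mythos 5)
Your proof is correct. The overall skeleton (reduce \textit{C1}, \textit{C2} and \eqref{eq:C3} to the statement that $\mathcal{F}$ is non-increasing under each single intervention and strictly decreasing when the intervention alters $G$) and the edge-deletion case (entrywise comparison of powers of adjacency matrices, strict drop at the $(q,r)$-entry) coincide with the paper's proof. Where you genuinely diverge is the node-splitting case. The paper argues via the reverse operation: it encodes the merge of $v$ and $\tilde v$ as $A_G = \hat{M} A_H \hat{M}^T$, uses the entrywise bound $\hat{M}^T\hat{M}\geq I$ together with nonnegativity to obtain $(A_G)^k \geq \hat{M}(A_H)^k\hat{M}^T$ entrywise, sums the entries, and lets the change of size produce the strict inequality for the averages. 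You instead argue combinatorially and in the forward direction: the folding homomorphism $\phi$, combined with the disjointness of the in- and out-neighbourhoods of $v$ and $\tilde v$ in $H$ (and the absence of an edge between them), makes $\phi$ injective on walks of length $k\geq 1$, hence $W_k^H\leq W_k^G$, and you then absorb the surplus length-zero walk ($W_0^H = W_0^G+1$) in the normalisation via $S(G)\geq N > N^2/(2N+1)$. Both routes prove the same family of inequalities for $k\geq 1$, one algebraically and one by counting walks.

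Worth noting: your handling of the $k=0$ term is in fact more careful than the paper's. The paper's entrywise inequality is valid only for $k\geq 1$; for $k=0$ it would read $I \geq \hat{M}\hat{M}^T$, which fails at the merged diagonal entry (where $\hat{M}\hat{M}^T$ equals $2$). Nevertheless the paper's displayed inequality \eqref{eq:commproof} sums the full exponential series, identity term included, and as your bound $S(H)\leq S(G)+1$ makes explicit, that unnormalised inequality can genuinely fail --- for instance, a split that rewires all edges of a node, or one that merely adds an isolated node, increases the total communicability by roughly $1$. The proposition survives only because of the $1/\vert\mathcal{V}\vert^2$ normalisation, which is exactly what your closing computation verifies. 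So your argument is not just an alternative route; it closes a small gap that the paper's matrix-algebra proof glosses over.
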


\begin{proof} \textit{C1} follows from $h(0) = 0$. As in the proof of Proposition~\ref{prop:edelshort} we prove \textit{C2} and \eqref{eq:C3} by showing that $\mathcal{F}(G)$ is non-increasing whenever we apply a intervention in $\mathcal{I}_{e\_ del}\cup \mathcal{I}_{split}$ to $G$, and even strictly decreasing if the intervention alters $G$.

\smallskip\noindent
Let $G\in \mathcal{G}$ with  adjacency matrix $A_G$, given some enumeration of the nodes $\mathcal{V}_G$. Let us first consider edge deletions. To this end, let $\tilde{G} = \kappa^{(v_i, v_j)}_{e\_ del}(G)$ where $v_i,v_j\in \mathcal{V}_G$. If $(v_i, v_j)\in\mathcal{E}_G$, then  $A_{\tilde G}(i,j)<A_G(i,j)$, and indeed $A^k_{\tilde G}(l, m)<A^k_G(l, m)$ for every $k\geq 1$ and $l,m$ such that $(v_i,v_j)$ is element of a walk of length $k$ from $v_l$ to $v_m$. 
Hence, $\mathfrak{c}^{\tilde{G}}_{v w} < \mathfrak{c}^G_{v w}$  for all $v, w\in\mathcal{V}_G$ where $(v_i, v_j)$ lies on a walk from $v$ to $w$, yielding $\mathcal{F}(\tilde{G}) < \mathcal{F}(G)$. Therefore, we conclude that $\mathcal{F}(\tilde{G}) \leq \mathcal{F}(G)$ and $\mathcal{F}(\tilde{G}) = \mathcal{F}(G)$ if and only if $G=\tilde G$. Conversely, if $\tilde G$ is obtained by adding an edge to the network $G$, and thus $G$ is obtained by an edge deletion in $\tilde G$, the previous arguments show that $\mathfrak{c}^{\tilde{G}}_{v w} > \mathfrak{c}^G_{v w}$, and thus $\mathcal{F}(\tilde{G})>\mathcal{F}(G)$ unless $G=\tilde G$. 

Next let us consider node splittings. If $\tilde{G}$ results from a node splitting intervention in $G$, then, conversely, $G$ can be obtained by a node merging intervention in $\tilde{G}$. 
Therefore, it suffices to prove that $\mathcal{F}$ is increasing under node merging.
So let $G=\kappa_{merge}^{v_i,v_j}(\tilde G)$ be the graph obtained after merging two nodes $v_i,v_j\in\mathcal{V}_{\tilde{G}}$ in a network $\tilde{G}$ of size $N$. In case this merge operation is non-trivial, that is $G\neq \tilde G$, assuming $i<j$, this intervention can be described by a matrix operation on the adjacency matrix $A_{\tilde{G}}$, i.e., $A_{G} = \hat{M}\cdot A_{\tilde{G}}\cdot \hat{M}^T$ with  $\hat{M}\in \{0,1\}^{(N-1)\times N}$ defined as
   \begin{equation*}
\hat{M}(l,m) =
        \begin{cases}
               1,& \text{for } l=m<j\\
               1,& \text{for } l=i, m=j\\
               1,& \text{for } j\leq l\leq N-1, m=l+1\\
               0,& \text{else}.
        \end{cases}
    \end{equation*}
One verifies that the product $\hat M^T \hat M$ equals $1$ on the diagonal. Let $I_N\in \mathbb{R}^{N\times N}$ denote the identity matrix. Then it follows that $\hat M^T \hat M\geq I_N$ in the sense that componentwise we have $(\hat M^T \hat M)(l,m)\geq I_N(l,m)$ for all $l,m=1,\ldots,N$.
Since all entries in $A_{\tilde{G}}$ are non-negative we estimate
\begin{align*}
\begin{split}
    \big(A_{G}\big)^k &= \big(\hat{M}\cdot A_{\tilde{G}}\cdot \hat{M}^T\big)^k\\ &= \hat{M}\cdot A_{\tilde{G}}\cdot \hat{M}^T\cdot\hat{M}\cdot A_{\tilde{G}}\cdot\hat{M}^T\cdot\hat{M}\cdot A_{\tilde{G}}\cdots A_{\tilde{G}}\hat{M}^T \\
    &\geq \hat{M}\cdot A_{\tilde{G}}\cdot I_N\cdot A_{\tilde{G}}\cdot I_N\cdot A_{\tilde{G}}\cdots A_{\tilde{G}}\hat{M}^T \\ &= \hat{M}\cdot (A_{\tilde{G}})^k\cdot \hat{M}^T.
    \end{split}
\end{align*}
Thus we obtain
\begin{equation*}
    \sum_{l=1}^{N-1}\sum_{m=1}^{N-1} (A_{G})^k (l,m) \geq \sum_{l=1}^{N-1}\sum_{m=1}^{N-1} (\hat{M}\cdot (A_{\tilde{G}})^k\cdot \hat{M}^T) (l,m) = \sum_{l=1}^{N}\sum_{m=1}^{N} (A_{{\tilde{G}}})^k (l,m)
\end{equation*}
for all $l,m =1,\cdots , N$, and therefore  
\begin{equation}\label{eq:commproof}
    \sum_{v_i,v_j\in\mathcal{V}_G} \mathfrak{c}_{v_i v_j}^G \geq \sum_{v_i,v_j\in\mathcal{V}_{\tilde{G}}} \mathfrak{c}_{v_i v_j}^{\tilde{G}}
\end{equation}
for the sum of all communicabilities. Further, since $\tilde{G}$ is larger than $G$, this yields strict inequality when taking the average value, i.e., $\mathcal{F}(\tilde{G}) < \mathcal{F}(G)$. Therefore, we conclude that $\mathcal{F}(\tilde{G}) \leq \mathcal{F}(G)$ for all $\tilde G$ obtained by performing a node splitting on $G$, and $\mathcal{F}(\tilde{G}) = \mathcal{F}(G)$ if and only if $G=\tilde G$


%
\end{proof}

\begin{remark}
Note that we find ``$>$'' in \eqref{eq:commproof} if and only if there is a matrix entry $(l,m)$ with $A_{\tilde{G}} \hat{M}^T \hat{M} A_{\tilde{G}} (l,m) > A_{\tilde{G}} I_N A_{\tilde{G}} (l,m)$, which means that
\begin{equation*}
    A_{\tilde{G}} \hat{M}^T \hat{M} A_{\tilde{G}} (l,m) = \sum_{p=1}^N\sum_{o=1}^N A_{\tilde{G}} (l,o) \hat{M}^T \hat{M}(o,p) A_{\tilde{G}}(p,m) > \sum_{o=1}^N A_{\tilde{G}}(l,o) A_{\tilde{G}}(o,m).
\end{equation*}
This is the case if and only if $A_{\tilde{G}}(l,i)>0$ and $A_{\tilde{G}}(j, m)>0$, or $A_{\tilde{G}}(l,j)>0$ and $A_{\tilde{G}}(i, m)>0$, which means that $\mathcal{N}^{{\tilde{G}}, in}_{v_i}, \mathcal{N}^{{\tilde{G}}, out}_{v_j}\neq\emptyset$, or  $\mathcal{N}^{{\tilde{G}}, out}_{v_i}, \mathcal{N}^{\tilde{G}, in}_{v_j}\neq\emptyset$. In other words, a node split reduces some communicabilities in a network if and only if the split corresponds to an actual separation of contagion channels.
\end{remark}

\section{Examples} \label{sec:examples}
The examples presented in this section will all be based on network acceptance sets $\mathcal{A}$ of the form 
\begin{equation}\label{eq:accset}
  \mathcal{A} = \{G\in \mathcal{G} \mid Q(G) \leq l_{|\mathcal{V}_G|}\}, 
\end{equation}
where a specific network quantity $Q:\mathcal{G}\to\mathbb{R}\cup \{-\infty, \infty\}$ needs to be bounded for acceptability of  a network $G$, and the bounds $l_N\in \mathbb{R}$, $N\in\mathbb{N}$, may depend on the network size $N$. 
\begin{definition}
Let $\mathcal{I}$ be a non-empty set of network interventions and let $Q:\mathcal{G}\to\mathbb{R}\cup \{-\infty, \infty\}$. We call $Q$ $\mathcal{I}$-monotone if  $Q(\kappa(G))\leq Q(G)$ for all $G\in\mathcal{G}$ and $\kappa\in \mathcal{I}$.
\end{definition}

The following lemma is easily verified. 

\begin{lemma}
Suppose that $Q$ is $\mathcal{I}$-monotone. Then
\begin{enumerate}
\item $Q(\kappa(G))\leq Q(G)$ for all $G\in\mathcal{G}$ and $\kappa \in [\mathcal{I}]$, 
 \item $\mathcal{I}$ is risk-reducing for $\mathcal{A}$.
 \end{enumerate}
 If, moreover, $\mathcal{I}$ is not partially self-revers, then $\mathcal{I}$-monotonicity is the same as $Q$ being decreasing with respect to the partial order $\preccurlyeq_{\mathcal{I}}$ given in Lemma~\ref{lem:partial:order}.
\end{lemma}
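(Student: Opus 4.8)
The plan is to establish the three assertions in sequence, using the first as the engine for the other two.

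The first assertion I would prove by induction on the number of factors of an $\mathcal{I}$-strategy. Any $\kappa\in[\mathcal{I}]$ is either $\operatorname{id}$, for which $Q(\operatorname{id}(G))=Q(G)\le Q(G)$ is trivial, or of the form $\kappa=\kappa_n\circ\cdots\circ\kappa_1$ with $\kappa_i\in\mathcal{I}$. Setting $G':=(\kappa_{n-1}\circ\cdots\circ\kappa_1)(G)$, the induction hypothesis gives $Q(G')\le Q(G)$, and applying $\mathcal{I}$-monotonicity to the network $G'$ and the single intervention $\kappa_n\in\mathcal{I}$ yields $Q(\kappa(G))=Q(\kappa_n(G'))\le Q(G')\le Q(G)$. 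This is the only step involving any manipulation, and it is routine.

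For the second assertion, I would take $G\in\mathcal{A}$ and an arbitrary $H\in\sigma^{\mathcal{I}}(G)$, so that $H=\kappa(G)$ for some $\kappa\in[\mathcal{I}]$. Assertion~1 together with the definition of $\mathcal{A}$ in \eqref{eq:accset} immediately gives $Q(H)\le Q(G)\le l_{|\mathcal{V}_G|}$. The point demanding care, and the main obstacle here, is that membership $H\in\mathcal{A}$ requires the \emph{size-dependent} bound $Q(H)\le l_{|\mathcal{V}_H|}$: whenever the interventions in $\mathcal{I}$ preserve the node count (as is the case for the edge-based interventions used with degree- and centrality-type quantities $Q$) we have $l_{|\mathcal{V}_H|}=l_{|\mathcal{V}_G|}$ and the conclusion is immediate, while for interventions that change the size one must additionally use that $N\mapsto l_N$ moves compatibly with the induced change in $|\mathcal{V}|$, so that $l_{|\mathcal{V}_H|}\ge l_{|\mathcal{V}_G|}$. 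I would make this compatibility explicit and then conclude $H\in\mathcal{A}$, whence $\sigma^{\mathcal{I}}(G)\subset\mathcal{A}$ and $\mathcal{I}$ is risk-reducing for $\mathcal{A}$.

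For the final equivalence, I would first invoke Lemma~\ref{lem:partial:order}: since $\mathcal{I}$ is assumed not partially self-reverse, $\preccurlyeq_{\mathcal{I}}$ is a genuine partial order, so the phrase ``$Q$ decreasing with respect to $\preccurlyeq_{\mathcal{I}}$'', meaning $G\preccurlyeq_{\mathcal{I}} H\Rightarrow Q(H)\le Q(G)$, is well posed. The forward direction is then just a restatement of assertion~1: if $Q$ is $\mathcal{I}$-monotone and $G\preccurlyeq_{\mathcal{I}} H$, i.e.\ $H\in\sigma^{\mathcal{I}}(G)$, then $Q(H)\le Q(G)$. For the converse, I would specialise to single interventions: for any $G\in\mathcal{G}$ and $\kappa\in\mathcal{I}\subset[\mathcal{I}]$ we have $\kappa(G)\in\sigma^{\mathcal{I}}(G)$, hence $G\preccurlyeq_{\mathcal{I}}\kappa(G)$, and decreasingness gives $Q(\kappa(G))\le Q(G)$, which is precisely $\mathcal{I}$-monotonicity. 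I expect this part to be entirely straightforward; note in particular that antisymmetry of $\preccurlyeq_{\mathcal{I}}$ is used only to guarantee that ``decreasing with respect to the partial order'' is meaningful, not in the equivalence itself.
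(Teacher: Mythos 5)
Your proposal is correct, and in parts 1 and 3 it is exactly the routine verification the paper has in mind: the paper offers no proof at all (the lemma is dismissed as ``easily verified''), and your induction over the factors of a strategy, together with the specialisation to single interventions $\kappa\in\mathcal{I}\subset[\mathcal{I}]$ for the converse of the final equivalence, is the only natural argument. Your closing remark about antisymmetry is also accurate: it is needed only so that ``decreasing with respect to the partial order $\preccurlyeq_{\mathcal{I}}$'' is well posed; the equivalence itself uses nothing beyond the definition of $\sigma^{\mathcal{I}}$.

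The substantive point is your treatment of part 2, and there you have caught something the paper glosses over. With $\mathcal{A}=\{G\in\mathcal{G}\mid Q(G)\le l_{|\mathcal{V}_G|}\}$ and genuinely size-dependent bounds, $\mathcal{I}$-monotonicity of $Q$ does \emph{not} imply that $\mathcal{I}$ is risk-reducing for $\mathcal{A}$ once $\mathcal{I}$ contains size-changing interventions. A counterexample inside the paper's own framework: take $Q(G)=\mathbb{E}\big[K_G^{out}\big]$ and $\mathcal{I}\subset\mathcal{I}_{split}$, which is $\mathcal{I}$-monotone by Proposition~\ref{prop:average}; choose $l_2=1$ and $l_3=0$; let $G$ consist of two nodes joined by a single directed edge, so that $Q(G)=1/2\le l_2$ and $G\in\mathcal{A}$; splitting the tail node so that the edge is rewired to the new node yields $H\in\sigma^{\mathcal{I}}(G)$ with three nodes, one edge, and $Q(H)=1/3>0=l_3$, hence $H\notin\mathcal{A}$. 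So the compatibility condition you insist on --- $l_{|\mathcal{V}_H|}\ge l_{|\mathcal{V}_G|}$ whenever $H\in\sigma^{\mathcal{I}}(G)$, which is automatic when the admissible interventions preserve the node count or the bounds are constant in $N$ --- is not pedantry but a missing hypothesis of the lemma as stated (and of the later corollaries that invoke it with $\mathcal{I}_{split}$). Your proof is correct under that hypothesis; the paper's unproved claim, read literally, is not.
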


\subsection{DMFNR Based on Moments of the Degree Distribution }\label{sec:degdistrcont}
In this section we consider examples of the function $Q$ in \eqref{eq:accset} depending on moments of the degree distribution. To this end, we write $k_v^{G, in}= \vert\mathcal{N}^{G, in}_v\vert$ and $k_v^{G, out}= \vert\mathcal{N}^{G, out}_v\vert$ for the in- and out-degrees of nodes $v\in \mathcal{V}_G$, see also Section~\ref{sec:neighbor}. 
Given a network $G\in\mathcal{G}$,  we let $P^{out}_G (k)$ denote the fraction of nodes $v$ with out-degree $k$, or, equivalently, the probability that a node $v$ which is chosen uniformly at random comes with out-degree $k$. The corresponding  distribution $P^{out}_G$ over all possible out-degrees is called the \textit{out-degree distribution} of the network $G$. Analogously, we can define the \textit{in-degree distribution} $P^{in}_G$. Let $K^{out}_G$ denote a random variable which represents the out-degree of a randomly chosen node given the probability distribution $P^{out}_G$, i.e.\ $\mathbb{P}(K^{out}_G=k)=P^{out}_G (k)$ for all $k=0,1,\ldots, \vert\mathcal{V}_G\vert -1$. 
We define $K^{in}_G$ analogously.
The \textit{n-th moment} of the in- and out-degree distributions are given by 
\begin{equation*}
\mathbb{E}\big[(K^{in}_G)^n\big] := \sum_{k\in\mathbb{N}_0} k^n P^{in}_G(k),\quad \mathbb{E}\big[(K^{out}_G)^n\big] := \sum_{k\in\mathbb{N}_0} k^n P^{out}_G(k),\quad n\in\mathbb{N}.
\end{equation*}

Clearly any network acceptance set such that $Q$ only depends on the degree distribution is topological invariant:

\begin{lemma}\label{lem:linv}
If $\mathcal{A}$ is given by  \eqref{eq:accset} where $Q$ only depends on the in-, out-degree distribution, 
then $\mathcal{A}$ satisfies P4. 
\end{lemma}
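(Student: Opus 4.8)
The plan is to unwind the definition of $\mathcal{A}$ in \eqref{eq:accset} and reduce everything to the single fact that a graph isomorphism preserves both the size of a network and its in- and out-degree distributions. First I would fix two isomorphic networks $G_1=(\mathcal{V}_1,\mathcal{E}_1)$ and $G_2=(\mathcal{V}_2,\mathcal{E}_2)$ together with an isomorphism $\pi:\mathcal{V}_1\to\mathcal{V}_2$. Since $\pi$ is a bijection between the node sets, we immediately obtain $|\mathcal{V}_1|=|\mathcal{V}_2|=:N$, and hence $l_{|\mathcal{V}_1|}=l_{|\mathcal{V}_2|}$, which disposes of the size-dependent bound appearing in \eqref{eq:accset}.

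The core step is to verify that $\pi$ preserves degrees node by node. For $v\in\mathcal{V}_1$, the defining property $(v,w)\in\mathcal{E}_1\Leftrightarrow(\pi(v),\pi(w))\in\mathcal{E}_2$ shows that $w\mapsto\pi(w)$ restricts to a bijection from $\mathcal{N}_{v}^{G_1, out}$ onto $\mathcal{N}_{\pi(v)}^{G_2, out}$, so $k_v^{G_1, out}=k_{\pi(v)}^{G_2, out}$; running the same argument with the two coordinates interchanged gives $k_v^{G_1, in}=k_{\pi(v)}^{G_2, in}$. Because $\pi$ is a bijection on the node sets, for each $k$ the number of nodes of out-degree $k$ is the same in $G_1$ and $G_2$, and dividing by the common size $N$ yields $P^{out}_{G_1}=P^{out}_{G_2}$; likewise $P^{in}_{G_1}=P^{in}_{G_2}$.

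Finally, since by assumption $Q$ is a function of the in- and out-degree distributions alone, the equalities $P^{in}_{G_1}=P^{in}_{G_2}$ and $P^{out}_{G_1}=P^{out}_{G_2}$ force $Q(G_1)=Q(G_2)$. Combining this with $l_{|\mathcal{V}_1|}=l_{|\mathcal{V}_2|}$, the acceptance condition $Q(G_1)\le l_{|\mathcal{V}_1|}$ holds if and only if $Q(G_2)\le l_{|\mathcal{V}_2|}$, that is $G_1\in\mathcal{A}\Leftrightarrow G_2\in\mathcal{A}$, which is precisely P4. There is no genuine obstacle here; the only point requiring care is the degree-preservation argument, and in particular the passage from equal degree \emph{counts} to equal degree \emph{fractions}, which is legitimate exactly because the isomorphism guarantees the common node count $N$.
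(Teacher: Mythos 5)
Your proof is correct and is exactly the argument the paper has in mind: the paper states this lemma without proof, remarking only that isomorphic graphs have identical degree distributions, which is precisely the fact you verify in detail (isomorphism preserves size, hence the bound $l_{|\mathcal{V}_G|}$, and restricts to bijections between in-/out-neighborhoods, hence preserves $P^{in}$ and $P^{out}$, hence $Q$). Nothing further is needed.
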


\begin{remark}\label{rem:incoming}
As regards in- and out-degree distributions, note that the following examples indicate that for the management of systemic cyber risk, the distribution of outgoing, not incoming, node degrees is the relevant entity. 
\end{remark}

\subsubsection{Variance of Degree Distributions}
Obviously, not all functions $Q$ based on moments of node degrees are suited risk controls. Indeed, in the following we show that the variance of either in- or out-degree distributions  is not a satisfactory control. 
\begin{proposition}
Suppose that the network acceptance set $\mathcal{A}$ satisfies \eqref{eq:accset} where $Q$ equals the variance of either the in- or out-degree distribution, and that $l_N\geq 0$ for all $N\in \mathbb{N}$. In both cases, $\mathcal{A}$ satisfies P1, but violates P3. 
Moreover, $Q$ is not $\mathcal{I}$-monotone for any choice of $\mathcal{I}\subset \mathcal{I}_{e\_ del}\cup \mathcal{I}_{e\_add} \cup \mathcal{I}_{n\_del} \cup \mathcal{I}_{n\_add} $ such that $\mathcal{I}\neq \emptyset$, that is for no non-trivial set of elementary network interventions. 
\end{proposition}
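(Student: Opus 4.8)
The plan is to treat the three assertions in turn, relying throughout on the elementary fact that the variance of a degree distribution vanishes exactly when the corresponding degree (in- or out-) is constant across all nodes, together with the hypothesis $l_N\geq 0$, which forces every such degree-homogeneous graph into $\mathcal{A}$.

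For P1, I would simply observe that the edgeless graph $(\mathcal{V},\emptyset)$ has all in-degrees and all out-degrees equal to $0$, so $Q(G)=0\leq l_{|\mathcal{V}|}$ in both variants, giving $(\mathcal{V},\emptyset)\in\mathcal{A}$ whenever $|\mathcal{V}|\geq 2$. To refute P3 it suffices to display one accepted graph containing a super-spreader: the complete graph $G^c$ on $N\geq 2$ nodes has every in- and out-degree equal to $N-1$, so $Q(G^c)=0\leq l_N$ and $G^c\in\mathcal{A}$, while every node of $G^c$ is a star node and in particular satisfies $\mathcal{N}_{v}^{G^c, out}=\mathcal{V}_{G^c}\setminus\{v\}$, i.e.\ is a super-spreader. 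The same graphs settle both the in-degree and the out-degree variant by symmetry.

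For the failure of $\mathcal{I}$-monotonicity I would reduce to a single intervention: since $\mathcal{I}$-monotonicity demands $Q(\kappa(G))\leq Q(G)$ for \emph{all} $G$ and \emph{all} $\kappa\in\mathcal{I}$, it is enough to show that every individual elementary intervention admits a \emph{witness} graph on which it strictly increases the variance; any non-empty $\mathcal{I}$ then contains at least one such $\kappa$ and hence cannot be variance-monotone. For arbitrary fixed nodes parametrizing the intervention I would exhibit a small witness that is constant in the relevant degree (variance $0$) and whose homogeneity the intervention destroys. For edge addition $\kappa^{(v,w)}_{e\_ add}$, start from the edgeless graph on $\{v,w\}$; adding $(v,w)$ turns the out-degree profile $(0,0)$ into $(1,0)$ (and the in-degree profile into $(0,1)$), so the variance jumps from $0$ to $\tfrac14$. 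For edge deletion $\kappa^{(v,w)}_{e\_ del}$, start from the $2$-cycle on $\{v,w\}$ with edges $(v,w),(w,v)$ and delete $(v,w)$, again raising the variance from $0$ to $\tfrac14$. For node addition $\kappa^{v}_{n\_ add}$, start from the $2$-cycle on two nodes distinct from $v$ (all degrees $1$, variance $0$) and append the isolated node $v$, producing the profile $(1,1,0)$ of variance $\tfrac29$. For node deletion $\kappa^{v}_{n\_ del}$, start from the directed triangle through $v$ with edges $(v,a),(a,b),(b,v)$; deleting $v$ leaves only $(a,b)$, so the out-degrees on $\{a,b\}$ become $(1,0)$ and the in-degrees $(0,1)$, variance $0\to\tfrac14$. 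In each case the single witness handles both variants of $Q$ at once.

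The only step requiring genuine care is node deletion, because removing $v$ alters the degrees of its neighbours rather than of $v$ alone; the triangle is chosen precisely so that one surviving node retains a dangling edge while the other loses its incident edge, thereby breaking homogeneity. I would also verify in each case that the intervention genuinely modifies the witness (the deleted edge is present, the added node absent, and so on), which the explicit constructions guarantee. Assembling these witnesses shows that no single elementary intervention is everywhere variance-non-increasing, so no non-empty set $\mathcal{I}\subset\mathcal{I}_{e\_ del}\cup\mathcal{I}_{e\_ add}\cup\mathcal{I}_{n\_ del}\cup\mathcal{I}_{n\_ add}$ can render $Q$ monotone.
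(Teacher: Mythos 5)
Your proposal is correct and follows essentially the same strategy as the paper: both establish P1 and the failure of P3 by noting that edgeless and complete graphs have zero degree variance (hence both lie in $\mathcal{A}$ when $l_N\geq 0$), and both refute $\mathcal{I}$-monotonicity by exhibiting, for each type of elementary intervention, a zero-variance witness graph whose homogeneity the intervention destroys. The only difference is in the choice of witnesses — you use minimal explicit examples (two-node graphs and a directed $3$-cycle for node deletion, with variances computed outright), whereas the paper argues with generic edgeless/complete graphs and uses a bidirectional ring of size $N\geq 4$ collapsing to an undirected line graph for the node-deletion case — but this is a cosmetic variation within the same argument.
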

\begin{proof}  
The in- and out-degree distributions of all edgeless 
 and all complete graphs have zero variance. Thus, P1 is always satisfied while P3 is violated. 
    
 As regards the monotonicity of $Q$,  variance, of either the in- our out-degree distribution, increases by adding edges to the edgeless graph, or by deleting edges in the complete graph.  Moreover,  the variance of a complete graph  increases when adding an isolated node to the network. As regards node deletion, note that 
 any circular graph given by $\mathcal{V}_{G^\circ} = \{v_1,\cdots , v_N\}$ and $$\mathcal{E}_{G^\circ} = \{ (v_1, v_2), (v_2, v_1),\cdots, (v_{N-1}, v_N), (v_{N}, v_{N-1}), (v_N, v_1), (v_1, v_N)\},$$ see Figure \ref{fig:ring}, comes with zero variance for both in- and out-degree distributions since each node has the in- and out-degree 2. However, when deleting node $v_N$, the network equals the undirected line graph, see Figure \ref{fig:ring}, with vertex set $\{v_1, \cdots, v_{N-1}\}$ and edges $$\{(v_1, v_2), (v_2, v_1),\cdots, (v_{N-2}, v_{N-1}), (v_{N-1}, v_{N-2})\}.$$ If $N\geq 4$, in contrast to all the other nodes, $v_1$  and $v_{N-1}$ only have one incoming and one outgoing edge. Therefore, the variance of both the in- and out-degree distribution now is positive. 
\end{proof}
\begin{figure}[h]
\begin{center}

		\begin{minipage}[t]{0.32\linewidth}
			\centering
			{\includegraphics[trim={4cm 2.5cm 4cm 2.2cm},clip,width=1\textwidth]{./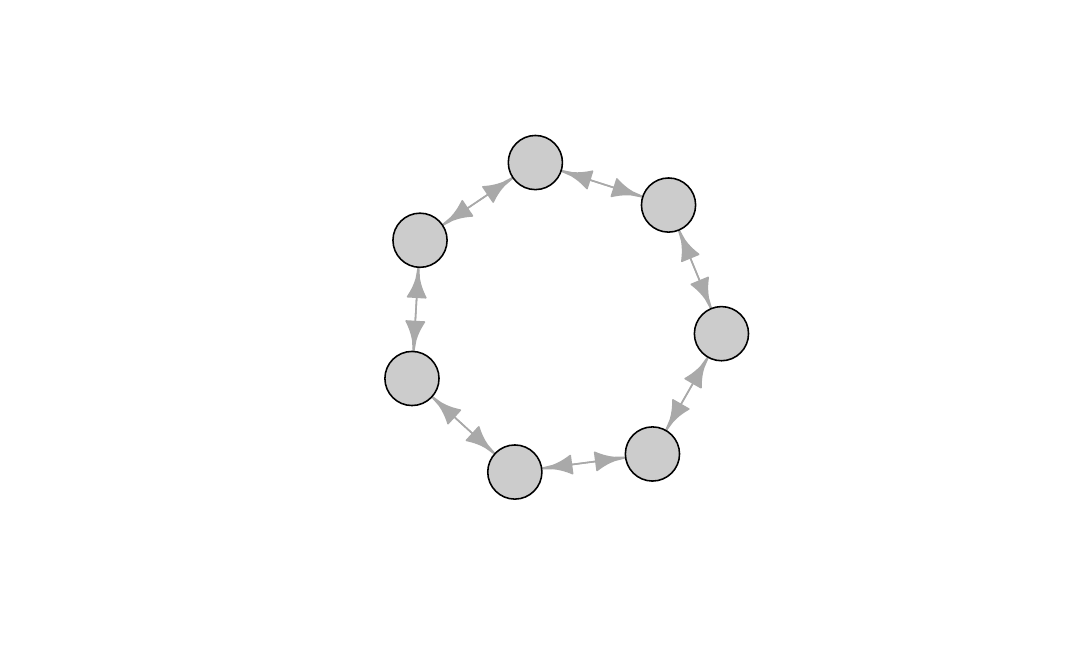}}
		\end{minipage}
        \begin{minipage}[t]{0.32\linewidth}
			\centering
                  {\includegraphics[trim={4cm 2.5cm 4cm 2.2cm},clip,width=1\textwidth]{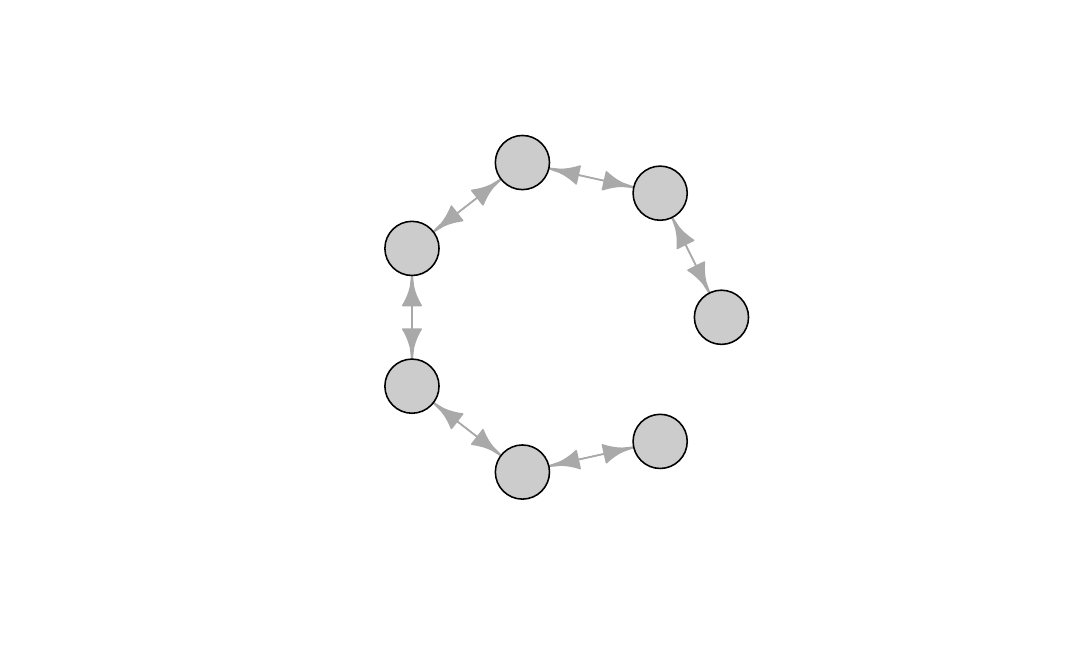}}	
        \end{minipage}
        \begin{minipage}[t]{0.32\linewidth}
			\centering
                  {\includegraphics[trim={4cm 2.5cm 4cm 2.2cm},clip,width=1\textwidth]{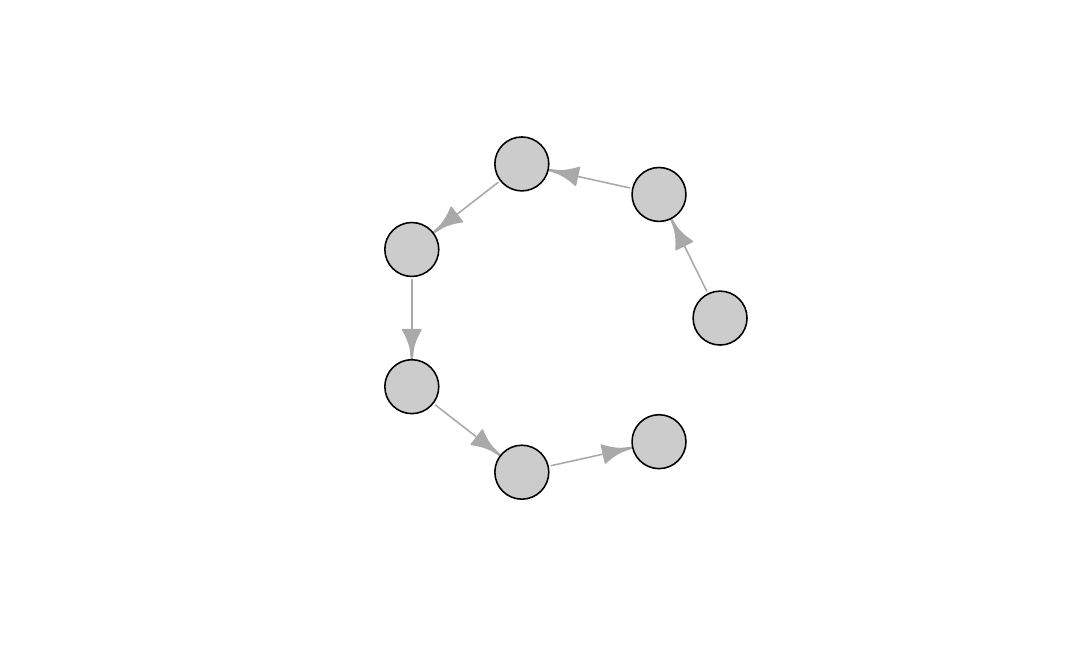}}	
        \end{minipage}
		\end{center}
		\caption{An undirected ring graph (left), an undirected line graph (middle), and a directed line graph (right), all consisting of $N=7$ nodes.}
		\label{fig:ring}
\end{figure}

\subsubsection{Average Degrees}

Next we consider the  \textit{average in-} and \textit{-out degree} of a network $G$. Note that these first moments are equal since every edge that emanates from a node  $v$ needs to arrive at another one, that is we have
\begin{equation}\label{eq:avdeg}
    \mathbb{E}\big[K^{in}_G\big] = \frac{1}{N} \sum_{i=1}^N k_{v_i}^{G, in} = \frac{\vert\mathcal{E}_G\vert}{\vert\mathcal{V}_G\vert} = \frac{1}{N} \sum_{i=1}^N k_{v_i}^{G, out} = \mathbb{E}\big[K^{out}_G\big].
\end{equation}

\begin{proposition}\label{prop:average}
 Consider $\mathcal{A}$ as in \eqref{eq:accset} with $Q(G) =\mathbb{E}\big[K^{out}_G\big] (=\mathbb{E}\big[K^{in}_G\big])$. Then $Q$ is $\mathcal{I}$-monotone for any $\mathcal{I}$ that is composed of edge deletions and node splittings, i.e.\ $\mathcal{I}\subset \mathcal{I}_{e\_del}\cup \mathcal{I}_{split}$. Hence, $\mathcal{I}$ is risk-reducing for  $\mathcal{A}$ whenever $\mathcal{I}\subset \mathcal{I}_{e\_del}\cup \mathcal{I}_{split}$. 
\end{proposition}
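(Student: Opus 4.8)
The plan is to exploit the identity \eqref{eq:avdeg}, which rewrites the quantity of interest as the edge-to-node ratio $Q(G) = \mathbb{E}\big[K^{out}_G\big] = |\mathcal{E}_G|/|\mathcal{V}_G|$. Since $\mathcal{I}$-monotonicity only has to be verified on the generating interventions $\kappa \in \mathcal{I} \subset \mathcal{I}_{e\_del} \cup \mathcal{I}_{split}$—the extension to all of $[\mathcal{I}]$ and the risk-reducing conclusion then follow verbatim from the preceding lemma on $\mathcal{I}$-monotone quantities—it suffices to track how an edge deletion and a node split each change the numerator $|\mathcal{E}_G|$ and the denominator $|\mathcal{V}_G|$ separately. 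I would organize the proof as two short cases.

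For an edge deletion $\kappa^{(v,w)}_{e\_del}$ the argument is immediate: the node set is untouched, so $|\mathcal{V}_{\kappa(G)}| = |\mathcal{V}_G|$, while the edge set either loses exactly one element (if $(v,w) \in \mathcal{E}_G$) or is left unchanged (if $(v,w) \notin \mathcal{E}_G$, in which case $\kappa(G) = G$). In both situations $|\mathcal{E}_{\kappa(G)}| \leq |\mathcal{E}_G|$ with the denominator fixed, so $Q(\kappa(G)) \leq Q(G)$.

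The substantive case is the node split $\kappa^{\mathcal{L}, v, \tilde v}_{split}$, and the hard part will be verifying that a true split preserves the edge count exactly. A true split occurs only when $\tilde v \notin \mathcal{V}_G$ and $\mathcal{L} \cap \mathcal{E}_G^v \neq \emptyset$; otherwise $\kappa(G) = G$ and there is nothing to prove. When it does occur, exactly one node is added, so $|\mathcal{V}_{\kappa(G)}| = |\mathcal{V}_G| + 1$. For the edges I would show that the deleted set $\mathcal{L} \cap \mathcal{E}_G^v$ and the added set $\tilde{\mathcal{L}}^G_{\tilde v}$ are in bijection: each deleted outgoing edge $(v,w)$ corresponds to the new edge $(\tilde v, w)$, and each deleted incoming edge $(w,v)$ to $(w, \tilde v)$, and since self-loops are forbidden every edge incident to $v$ is unambiguously either outgoing or incoming, so this map is a bijection. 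Because $\tilde v$ is a fresh node ($\tilde v \notin \mathcal{V}_G$), the added edges all involve $\tilde v$ and therefore cannot collide with any surviving edge of $\mathcal{E}_G \setminus (\mathcal{L} \cap \mathcal{E}_G^v)$; hence $|\mathcal{E}_{\kappa(G)}| = |\mathcal{E}_G| - |\mathcal{L} \cap \mathcal{E}_G^v| + |\tilde{\mathcal{L}}^G_{\tilde v}| = |\mathcal{E}_G|$. Thus the numerator is unchanged while the denominator grows by one, giving $Q(\kappa(G)) = |\mathcal{E}_G|/(|\mathcal{V}_G|+1) < |\mathcal{E}_G|/|\mathcal{V}_G| = Q(G)$, where the strict inequality uses $|\mathcal{E}_G| \geq 1$ (which holds since a true split requires $\mathcal{L} \cap \mathcal{E}_G^v \neq \emptyset$).

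Combining the two cases yields $Q(\kappa(G)) \leq Q(G)$ for every generator, i.e., $Q$ is $\mathcal{I}$-monotone, and the ``hence'' claim that $\mathcal{I}$ is risk-reducing for $\mathcal{A}$ follows at once from the preceding lemma. The only point requiring genuine care is the edge-count bookkeeping in the split—specifically the observation that the freshness of $\tilde v$ rules out edge collisions—while everything else reduces to routine counting.
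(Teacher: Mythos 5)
Your proof is correct and takes essentially the same route as the paper's, which is a two-line counting argument from \eqref{eq:avdeg}: edge deletions fix $|\mathcal{V}_G|$ and weakly decrease $|\mathcal{E}_G|$, while node splits fix $|\mathcal{E}_G|$ and weakly increase $|\mathcal{V}_G|$; you simply make explicit the bijection between deleted and rewired edges that the paper leaves implicit. One minor slip: when $\tilde v\notin\mathcal{V}_G$ but $\mathcal{L}\cap\mathcal{E}_G^v=\emptyset$, the map $\kappa^{\mathcal{L},v,\tilde v}_{split}$ does not return $G$ as you claim---it adds the isolated node $\tilde v$---but this does not harm the argument, since in that case the edge count is unchanged and the node count grows by one, so $Q$ still weakly decreases.
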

\begin{proof}
The result follows directly from Equation \eqref{eq:avdeg}: For any $\kappa\in\mathcal{I}_{e\_del}$ and network $G$ we have that $\vert\mathcal{V}_{\kappa(G)}\vert = \vert\mathcal{V}_G\vert$ and $\vert\mathcal{E}_{\kappa(G)}\vert\leq \vert\mathcal{E}_G\vert$. In case of node splittings $\kappa\in\mathcal{I}_{split}$ we have $\vert\mathcal{E}_{\kappa(G)}\vert = \vert\mathcal{E}_G\vert$ and $\vert\mathcal{V}_{\kappa(G)}\vert\geq \vert\mathcal{V}_G\vert$.
\end{proof}

\begin{corollary}
Suppose that $\mathcal{I}$ is non-empty and satisfies $ \mathcal{I} \subset \mathcal{I}_{e\_del}\cup \mathcal{I}_{split}$, and suppose that $l_{N_0}\geq 0$ for some $N_0\in \mathbb{N}$. Then, $\mathcal{A}$ given in \eqref{eq:accset} with $Q(G) =\mathbb{E}\big[K^{out}_G\big]$ is a topology invariant network acceptance set. 
$(\mathcal{A}, \mathcal{I}, \mathcal{C})$ is a DMFNR for any cost function $\mathcal{C}$ for $(\mathcal{A}, \mathcal{I})$.
\end{corollary}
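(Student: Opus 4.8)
The plan is to verify, in order, the three defining ingredients of a DMFNR, since each of them follows from a result proved immediately above; the only genuine piece of work is checking that $\mathcal{A}$ is non-empty, and everything else is a direct citation.

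First I would confirm that $\mathcal{A}$ from \eqref{eq:accset} with $Q(G)=\mathbb{E}\big[K^{out}_G\big]$ is a legitimate acceptance set, i.e.\ that it is non-empty. This is exactly where the hypothesis $l_{N_0}\geq 0$ enters. Take the edgeless graph $G_0=(\mathcal{V},\emptyset)$ with $|\mathcal{V}|=N_0$; by the first-moment identity \eqref{eq:avdeg} we have $Q(G_0)=\mathbb{E}\big[K^{out}_{G_0}\big]=|\mathcal{E}_{G_0}|/|\mathcal{V}_{G_0}|=0$, and since $0\leq l_{N_0}$ this gives $G_0\in\mathcal{A}$, so $\mathcal{A}\neq\emptyset$. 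Next I would invoke Lemma~\ref{lem:linv} for topological invariance: $Q(G)=\mathbb{E}\big[K^{out}_G\big]=\sum_{k} k\,P^{out}_G(k)$ depends on $G$ only through its out-degree distribution, so $\mathcal{A}$ satisfies P4. Together with non-emptiness this establishes that $\mathcal{A}$ is a topology-invariant acceptance set.

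For the DMFNR claim I would then quote Proposition~\ref{prop:average}, which states precisely that under the standing assumption $\mathcal{I}\subset\mathcal{I}_{e\_del}\cup\mathcal{I}_{split}$ the quantity $Q$ is $\mathcal{I}$-monotone and hence that $\mathcal{I}$ is risk-reducing for $\mathcal{A}$. Since $\mathcal{I}$ is non-empty and admissible by hypothesis, and $\mathcal{C}$ is, by assumption, a cost function for $(\mathcal{A},\mathcal{I})$ (a notion that is well-defined precisely because $\mathcal{I}$ is risk-reducing for $\mathcal{A}$, cf.\ Definition~\ref{def:cost}), all three conditions in the definition of a DMFNR are satisfied, and $(\mathcal{A},\mathcal{I},\mathcal{C})$ is a DMFNR.

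I do not expect a real obstacle here: the statement is a corollary that assembles the non-emptiness check with Lemma~\ref{lem:linv} and Proposition~\ref{prop:average}. The only points requiring care are to deploy the hypothesis $l_{N_0}\geq 0$ at the right place (non-emptiness), and to note that for the edgeless witness graph the first moment is genuinely $0$ rather than an undefined $0/0$, which needs $N_0\geq 1$; this is automatic since $N_0\in\mathbb{N}$.
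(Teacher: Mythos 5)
Your proposal is correct and follows exactly the paper's own argument: the paper proves this corollary by combining Proposition~\ref{prop:average} (for $\mathcal{I}$-monotonicity, hence the risk-reducing property) with Lemma~\ref{lem:linv} (for P4), noting that $l_{N_0}\geq 0$ ensures $\mathcal{A}\neq\emptyset$. Your only addition is to spell out the non-emptiness witness (the edgeless graph of size $N_0$), which the paper leaves implicit.
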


\begin{proof} Combine Proposition~\ref{prop:average} with Lemma~\ref{lem:linv}. Note that $l_{N_0}\geq 0$ for some $N_0\in \mathbb{N}$ ensures that $\mathcal{A}$ is non-empty.
\end{proof}

\begin{remark}
Propositions~\ref{prop:edelshort} and \ref{prop:edelcomm} show how, apart from monetary costs, the cost function $\mathcal{C}$ in Proposition~\ref{prop:average} can be based on a loss of network functionality. Communicability-based costs work for any choice of $\mathcal{I}$ as in Proposition~\ref{prop:average}, whereas costs based 
 on shortest paths require a further restriction of  $\mathcal{I}$ to be a subset of $\mathcal{I}_{e\_del}$. 
\end{remark}

\begin{proposition}\label{prop:firstmom}
The acceptance set $\mathcal{A}$ as in \eqref{eq:accset} with $Q(G) =\mathbb{E}\big[K^{out}_G\big] (=\mathbb{E}\big[K^{in}_G\big])$ cannot satisfy P2 and  P3 simultaneously. More precisely, if P2 is satisfied, then P3 is violated, and vice versa. 
\end{proposition}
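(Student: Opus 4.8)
The plan is to exploit that $Q(G) = |\mathcal{E}_G|/|\mathcal{V}_G|$ is controlled from below by the edge count, and that \emph{weak connectivity} and the \emph{presence of a super-spreader} impose the very same minimal edge count on a graph of size $N$. The decisive observation is that one extremal graph witnesses both requirements at once: the out-star $S_N$ on $N$ nodes, with center $v_1$ and edges $(v_1,v_j)$ for $j=2,\ldots,N$, is simultaneously weakly connected, carries a super-spreader (its center $v_1$, since $\mathcal{N}_{v_1}^{S_N, out} = \mathcal{V}_{S_N}\setminus\{v_1\}$), and attains $Q(S_N) = \frac{N-1}{N}$.

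First I would record the basic lower bound: any weakly connected graph $G$ with $|\mathcal{V}_G| = N$ satisfies $|\mathcal{E}_G| \geq N-1$, because its underlying undirected graph must be connected and a connected graph on $N$ vertices has at least $N-1$ edges; hence $Q(G) \geq \frac{N-1}{N}$. This bound is sharp and is met by $S_N$. This elementary inequality, together with the dual role of $S_N$, is the entire engine of the proof.

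For the direction ``P2 $\Rightarrow\;\neg$P3'': P2 furnishes $N_0$ so that for every $N \geq N_0$ there is a weakly connected $G_N \in \mathcal{A}$, whence $l_N \geq Q(G_N) \geq \frac{N-1}{N} = Q(S_N)$. This places $S_N \in \mathcal{A}$, and since $v_1$ is a super-spreader of $S_N$, property P3 fails. For the reverse direction ``P3 $\Rightarrow\;\neg$P2'': P3 forbids any graph with a super-spreader from $\mathcal{A}$, so in particular $S_N \notin \mathcal{A}$ for every $N \geq 2$, which (as $\mathcal{A}$ is cut out by $Q \leq l_N$) forces $l_N < Q(S_N) = \frac{N-1}{N}$. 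By the lower bound above, every weakly connected graph of size $N$ then has $Q \geq \frac{N-1}{N} > l_N$ and so lies outside $\mathcal{A}$; thus $\mathcal{A}$ contains no weakly connected graph of any size $N \geq 2$, and P2 fails. Since the two implications are contrapositives of one another, either one already establishes the claimed incompatibility.

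There is essentially no serious obstacle here: the result is a sharp min-edge-count comparison. The only points requiring care are the convention for weak connectivity in the directed setting and the corresponding edge lower bound (counting directed edges via the underlying undirected graph), and verifying that $S_N$ really satisfies the super-spreader definition while being weakly connected. The conceptual crux is recognizing that a single out-star simultaneously realizes the minimal average degree for both a connected graph and a super-spreader graph, which makes the mutual exclusion of P2 and P3 immediate rather than a matter of delicate estimation.
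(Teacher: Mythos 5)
Your proof is correct and follows essentially the same route as the paper's: both establish the lower bound $Q(G)\geq (N-1)/N$ for weakly connected graphs of size $N$ (the paper via a spanning-tree argument on the undirected version, you via the standard edge count for connected graphs), and both exploit that the directed out-star attains this bound with equality while carrying a super-spreader, yielding the two implications exactly as in the paper. No gaps to report.
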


For the following and later proofs the notion of a \textit{directed star graph} will be important: ${G}^\ast=(\mathcal{V}_{{G}^\ast},\mathcal{E}_{{G}^\ast})$ is a directed star graph with super-spreader $v^\ast$ if $\emptyset\neq \mathcal{V}_{{G}^\ast}\subset \mathbb{V}$ and  \[\mathcal{E}_{{G}^\ast} = \{ (v^\ast,w)\mid w\in \mathcal{V}_{{G}^\ast}\setminus \{v^\ast\}\}.\]

\begin{proof}    Let $G$ be a weakly connected network of size $N$. Consider the undirected version of $G$ given by $\tilde{G} = (\mathcal{V}_G, \mathcal{E}_{\tilde{G}})$ with $\mathcal{E}_{\tilde{G}} = \{ (v,w)\vert (v,w)\in\mathcal{E}_G \text{ or } (w,v)\in\mathcal{E}_G\}$. By definition, we have $\vert\mathcal{E}_G\vert\geq\vert\mathcal{E}_{\tilde{G}}\vert /2$, and  $\tilde{G}$ is strongly connected. The latter implies that $\vert\mathcal{E}_{\tilde{G}}\vert\geq 2(N-1)$ since $\tilde{G}$ must contain an undirected spanning tree, and every undirected tree graph of size $N$ contains $2(N-1)$ directed edges, see \cite[Corollary 7 and 8]{Bollobas1998}. This yields $\vert\mathcal{E}_G\vert\geq N-1$, and thus $Q(G)\geq (N-1)/N$. 
 Now consider a directed star graph $G^\ast$ of size $N$. $G^\ast$  only contains $\vert \mathcal{V}_{G^\ast}\setminus\{v^\ast\}\vert = N-1$ edges, thus $Q(G^\ast)=(N-1)/N$. Therefore, if a weakly connected network is acceptable, then the directed star graph of the same size is also acceptable. The latter constitutes a violation of  P3. Conversely, if  P3 is satisfied, so in particular no directed star graph is acceptable, then we cannot have any acceptable weakly connected graph.  
\end{proof}

\begin{remark}
    The average node degree only provides a measure of overall graph connectivity in the mean, considering the number of edges in comparison to the number of nodes. However, it does not evaluate characteristics which relate closer to the topological arrangement of edges within the network, that is for example the presence hubs which may act as risk amplifiers. Networks can have a very different risk profile, even if they contain the same number of edges and nodes,  see for example the analysis in  \cite{Awiszus2023b}. Hence, one may doubt whether acceptance sets given by $Q(G) =\mathbb{E}\big[K^{out}_G\big] (=\mathbb{E}\big[K^{in}_G\big])$ are reasonable choices. In that respect, Proposition~\ref{prop:firstmom} shows that if $Q(G) =\mathbb{E}\big[K^{out}_G\big]$, $\mathcal{A}$ cannot satisfy P2 and P3  simultaneously.  
\end{remark}

\subsubsection{Second  In- and Out-Moments}
\begin{proposition}\label{prop:secmomint}
Consider $\mathcal{A}$ as in \eqref{eq:accset} with $Q(G)= \mathbb{E}\big[ (K_G^{out})^2\big]$ or  $Q(G)=\mathbb{E}\big[ (K_G^{in})^2\big]$. Then $Q$ is $\mathcal{I}$-monotone for any $\mathcal{I}\subset \mathcal{I}_{e\_del}\cup \mathcal{I}_{split}$. 
\end{proposition}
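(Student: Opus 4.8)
The plan is to rewrite the second out-moment as a sum over nodes, $Q(G)=\frac{1}{N}\sum_{v\in\mathcal{V}_G}(k_v^{G,out})^2$ with $N:=|\mathcal{V}_G|$ (and analogously with in-degrees for the in-moment), and then to verify the defining inequality $Q(\kappa(G))\le Q(G)$ for each generator $\kappa\in\mathcal{I}$. Since $\mathcal{I}\subseteq\mathcal{I}_{e\_del}\cup\mathcal{I}_{split}$, every such $\kappa$ is either an edge deletion or a node split, so it suffices to treat these two cases; in the trivial subcases where $\kappa(G)=G$ (no such edge present, or $\tilde v\in\mathcal{V}_G$) there is nothing to prove.

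For an edge deletion $\kappa^{(v,w)}_{e\_del}$ with $(v,w)\in\mathcal{E}_G$, the node count $N$ is unchanged, only the out-degree of $v$ drops from $d:=k_v^{G,out}\ge 1$ to $d-1$, and all other out-degrees are unaffected. Hence $\sum_u(k_u^{G,out})^2$ changes by $(d-1)^2-d^2=-2d+1\le-1<0$, so $Q$ strictly decreases. For the in-moment the identical computation applies to the in-degree of $w$, which drops from $k_w^{G,in}\ge 1$.

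The substantive case is node splitting, where I would carefully track which degrees change. Writing $d:=k_v^{G,out}$ and letting $m$ denote the number of out-edges of $v$ that lie in $\mathcal{L}$, the out-edge rewiring $(v,w)\mapsto(\tilde v,w)$ moves $m$ of $v$'s out-edges to $\tilde v$, so after the split $v$ has out-degree $d-m$ and $\tilde v$ has out-degree $m$, with $0\le m\le d$. The key bookkeeping point is that the rewiring of in-edges $(u,v)\mapsto(u,\tilde v)$ changes only the \emph{target} of those edges, hence preserves the out-degree of each source $u$; symmetrically, the out-edge rewiring preserves the in-degree of each $w$. Consequently $v$ and $\tilde v$ are the only nodes whose out-degrees change. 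Setting $T:=\sum_{u\in\mathcal{V}_G}(k_u^{G,out})^2=N\,Q(G)\ge 0$, I obtain $(N+1)\,Q(\kappa(G))=T-d^2+(d-m)^2+m^2=T-2m(d-m)$. Since $0\le m\le d$ gives $2m(d-m)\ge 0$, and since $N+1>N$ together with $T\ge 0$, it follows that $Q(\kappa(G))=\frac{T-2m(d-m)}{N+1}\le\frac{T}{N+1}\le\frac{T}{N}=Q(G)$, so both effects—the reduction of the sum of squares and the enlargement of the node set—push $Q$ downward. For the in-moment the estimate is identical with $d:=k_v^{G,in}$ and $m$ the number of in-edges of $v$ in $\mathcal{L}$, using that in-edge rewiring splits $v$'s in-degree while out-edge rewiring leaves every in-degree outside $\{v,\tilde v\}$ intact.

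The only genuine obstacle is the bookkeeping in the splitting step: one must confirm that no node other than $v$ and $\tilde v$ has its (out- or in-) degree altered. Once this is settled, the whole claim reduces to the elementary inequality $(d-m)^2+m^2\le d^2$ for $0\le m\le d$, combined with the monotonicity of $T\mapsto T/(N+1)$ against $T/N$ for $T\ge 0$.
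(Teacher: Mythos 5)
Your proposal is correct and follows essentially the same route as the paper: in both arguments the only degrees affected by a split are those of $v$ and $\tilde v$, which partition the original degree of $v$, so the elementary inequality $(d-m)^2+m^2\le d^2$ (the paper's $(a+b)^2\ge a^2+b^2$ for $a,b\ge 0$) bounds the change in the sum of squares, and replacing the denominator $N$ by $N+1$ only decreases the average; the edge-deletion case is likewise handled by noting a single degree drops. Your write-up is merely more explicit than the paper's (which dismisses edge deletion with "clearly" and does not introduce the parameter $m$), but the underlying argument is identical.
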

\begin{proof}
  Clearly, the second moments are reduced when deleting edges in a network. 
Further, the second moment of the in- or out-degree distribution is also decreasing under node splitting: Let $v$ be the node in a network $G$ of size $N$ which is split into $v$ and $\tilde{v}$, and $\tilde{G}$ the resulting network. Due to $k_v^{G, in} = k_v^{\tilde{G}, in} + k_{\tilde{v}}^{\tilde{G}, in }$ and $k_v^{G, out} = k_v^{\tilde{G}, out} + k_{\tilde{v}}^{\tilde{G}, out}$, we have 
    \begin{equation}\label{eq:ndeg}
    (k_v^{\tilde{G}, in })^2 + (k_{\tilde{v}}^{\tilde{G}, in })^2\leq (k_v^{\tilde{G}, in})^2 + (k_{\tilde{v}}^{\tilde{G}, in})^2 + 2 k_v^{\tilde{G}, in} k_{\tilde{v}}^{\tilde{G}, in} = (k_v^{\tilde{G}, in} + k_{\tilde{v}}^{\tilde{G}, in})^2 = (k_v^{G,in })^2,
    \end{equation}
    and thus
    \begin{align*}
        \mathbb{E}[(K_{\tilde{G}}^{in})^2]&= \frac{1}{N+1} \Big( \sum_{w\neq v, \tilde{v}} (k_{w}^{\tilde{G}, in})^2 + (k_{v}^{\tilde{G}, in})^2 + (k_{\tilde{v}}^{\tilde{G}, in})^2\Big) \\
        &\leq \frac{1}{N} \left(\sum_{w\neq v, \tilde{v}} (k_{w}^{\tilde{G},in})^2 + (k_v^{G,in})^2\right) =  \frac{1}{N} \left(\sum_{w\neq v, \tilde{v}} (k_{w}^{G,in })^2 + (k_v^{G,in})^2 \right)= \mathbb{E}[(K_{G}^{in})^2] .
    \end{align*}
\end{proof}

\begin{corollary}
Let $\mathcal{A}$ be given by \eqref{eq:accset} with $Q(G)= \mathbb{E}\big[ (K_G^{out})^2\big]$ or  $Q(G)=\mathbb{E}\big[ (K_G^{in})^2\big]$ and let $\mathcal{I}\subset \mathcal{I}_{e\_del}\cup \mathcal{I}_{split}$ be non-empty. Then $\mathcal{I}$ is risk-reducing for $\mathcal{A}$. Thus, $(\mathcal{A},\mathcal{I}, \mathcal{C})$ is DMFNR for any cost function $\mathcal{C}$ for $(\mathcal{A},\mathcal{I})$. 
\end{corollary}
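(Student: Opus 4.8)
The plan is to treat the statement as an immediate consequence of Proposition~\ref{prop:secmomint}, proceeding exactly as in the corollary following Proposition~\ref{prop:average}: first establish $\mathcal{I}$-monotonicity of the controlling quantity $Q$, then invoke the lemma that converts $\mathcal{I}$-monotonicity into the risk-reducing property, and finally read off the DMFNR conclusion directly from the definitions. There is essentially no new analytic content; all of it sits in Proposition~\ref{prop:secmomint}.

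Concretely, since $\mathcal{I}\subset\mathcal{I}_{e\_del}\cup\mathcal{I}_{split}$, Proposition~\ref{prop:secmomint} applies and yields that $Q(G)=\mathbb{E}[(K_G^{out})^2]$ (respectively $Q(G)=\mathbb{E}[(K_G^{in})^2]$) is $\mathcal{I}$-monotone. I then apply the lemma characterizing $\mathcal{I}$-monotone quantities stated right after the definition of $\mathcal{I}$-monotonicity: its first part extends the inequality $Q(\kappa(G))\le Q(G)$ from single interventions $\kappa\in\mathcal{I}$ to arbitrary strategies $\kappa\in[\mathcal{I}]$, and its second part concludes that $\mathcal{I}$ is risk-reducing for the threshold set $\mathcal{A}$ of the form \eqref{eq:accset}. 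Spelling this out, for $G\in\mathcal{A}$ and $\kappa\in[\mathcal{I}]$ one has $Q(\kappa(G))\le Q(G)\le l_{|\mathcal{V}_G|}$, and membership $\kappa(G)\in\mathcal{A}$ requires comparison with $l_{|\mathcal{V}_{\kappa(G)}|}$: edge deletions from $\mathcal{I}_{e\_del}$ preserve the node count, whereas splittings from $\mathcal{I}_{split}$ raise it by one, so this is precisely the place where the size-dependence of the bounds $l_N$ interacts with the argument. This interaction is already absorbed into the cited lemma, so I simply invoke it rather than re-derive it.

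Once $\mathcal{I}$ is known to be risk-reducing for $\mathcal{A}$, the DMFNR claim is purely definitional: $\mathcal{A}$ is a non-empty acceptance set, $\mathcal{I}$ is a non-empty set of admissible interventions that is risk-reducing for $\mathcal{A}$, and by hypothesis $\mathcal{C}$ is a cost function for $(\mathcal{A},\mathcal{I})$. Note that, by Definition~\ref{def:cost}, the phrase ``cost function for $(\mathcal{A},\mathcal{I})$'' already presupposes that $\mathcal{I}$ is risk-reducing, so no additional compatibility condition remains to be verified, and $(\mathcal{A},\mathcal{I},\mathcal{C})$ satisfies all three requirements of a DMFNR. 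I do not anticipate a genuine obstacle here, as the heavy lifting is done in Proposition~\ref{prop:secmomint}; the only point deserving explicit care is non-emptiness of $\mathcal{A}$ as an acceptance set, which holds as soon as some threshold is non-negative (e.g. $l_N\ge 0$ for some $N$), since the edgeless graph on $N$ nodes has $Q=0$. I would state this caveat explicitly if the ambient assumptions on the bounds $l_N$ do not already guarantee it.
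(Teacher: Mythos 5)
Your proposal is correct and is essentially the paper's own argument: the corollary appears there without proof, and the intended derivation is exactly the one you give, namely Proposition~\ref{prop:secmomint} for $\mathcal{I}$-monotonicity of $Q$, the (unlabelled) lemma following the definition of $\mathcal{I}$-monotonicity to pass from monotonicity to the risk-reducing property, and then the definition of a DMFNR, with nothing left to check for $\mathcal{C}$ since ``cost function for $(\mathcal{A},\mathcal{I})$'' already presupposes the required compatibility. Your two caveats are also apt rather than gaps: the paper imposes non-emptiness of $\mathcal{A}$ (via $l_{N_0}\geq 0$) only in the analogous corollary after Proposition~\ref{prop:average} and omits it here, and the interaction of size-changing node splits with the size-dependent bounds $l_N$ is indeed silently absorbed into the cited lemma (which the paper declares ``easily verified'') rather than argued anywhere, so you and the paper stand on exactly the same footing on both points.
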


\begin{proposition}\label{prop:secout}
Consider a network acceptance set $\mathcal{A}$ as in \eqref{eq:accset} with $Q(G) = \mathbb{E}\big[ (K_G^{out})^2\big]$ or $Q(G) = \mathbb{E}\big[ (K_G^{in})^2\big]$. 
\begin{enumerate}
\item $\mathcal{A}$ satisfies P1 if and only if $l_N\geq 0$ for all $N\geq 2$,
    \item $\mathcal{A}$ satisfies P2 if and only if there is a $N_0\in \mathbb{N}$ such that $l_N\geq (N-1)/N$ for all $N\geq N_0$,
    \item If $Q(G) = \mathbb{E}\big[ (K_G^{out})^2\big]$, then $\mathcal{A}$ satisfies P3 if and only if $l_N < (N-1)^2/N$ for all $N\geq 1$.
    \item If $Q(G) = \mathbb{E}\big[ (K_G^{in})^2\big]$, then $\mathcal{A}$ cannot satisfy the P2 and P3 simultaneously. 
\end{enumerate}
\end{proposition}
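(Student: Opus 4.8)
The plan is to reduce all four statements to a single universal lower bound together with three explicit evaluations of $Q$ on extremal graphs. First I would record the lower bound: for any $G$ of size $N$, every out-degree (resp. in-degree) is a nonnegative integer, and $k^2\geq k$ holds for all $k\in\mathbb{N}_0$, so
\[
\sum_{v\in\mathcal{V}_G}(k_v^{G,out})^2\;\geq\;\sum_{v\in\mathcal{V}_G}k_v^{G,out}\;=\;|\mathcal{E}_G|,
\]
and identically for in-degrees. Dividing by $N$ yields $Q(G)\geq|\mathcal{E}_G|/N$ for either choice of $Q$. Next I would fix the three reference graphs: the edgeless graph of size $N$ has $Q=0$; the directed path $v_1\to\cdots\to v_N$ is weakly connected, with both its out- and in-degree sequences consisting of $N-1$ ones and one zero, hence $Q=(N-1)/N$; and the directed star $G^\ast$ has $\mathbb{E}\big[(K_{G^\ast}^{out})^2\big]=(N-1)^2/N$ but $\mathbb{E}\big[(K_{G^\ast}^{in})^2\big]=(N-1)/N$. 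This last in/out discrepancy is precisely what separates parts 3 and 4.

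For part 1, the edgeless graph of size $N\geq 2$ has $Q=0$, so it belongs to $\mathcal{A}$ if and only if $l_N\geq 0$; since P1 quantifies over all $\mathcal{V}$ with $|\mathcal{V}|\geq 2$, P1 is equivalent to $l_N\geq 0$ for all $N\geq 2$, for both $Q$. For part 2 I would combine the lower bound with the fact that a weakly connected graph has $|\mathcal{E}_G|\geq N-1$ (its underlying undirected graph is connected, cf.\ the proof of Proposition~\ref{prop:firstmom}), giving $Q(G)\geq (N-1)/N$ for every weakly connected $G$ of size $N$; the directed path attains this value. Hence $\mathcal{A}$ contains a weakly connected graph of size $N$ iff $l_N\geq (N-1)/N$, and P2 is exactly the demand that this inequality hold for all $N$ beyond some $N_0$.

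For part 3 (out-degrees), a super-spreader $v^\ast$ has $k_{v^\ast}^{G,out}=N-1$, so $\sum_v(k_v^{G,out})^2\geq (N-1)^2$ and thus $Q(G)\geq (N-1)^2/N$, with equality for $G^\ast$. Therefore $\mathcal{A}$ contains a super-spreader graph of size $N$ iff $l_N\geq (N-1)^2/N$, and P3 (no super-spreaders at any size) is equivalent to $l_N<(N-1)^2/N$ for all $N\geq 1$. I would flag that $N=1$ is genuinely included: a single node is vacuously a super-spreader with $Q=0$, which forces $l_1<0=(1-1)^2/1$.

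The crux is part 4, and the main obstacle is conceptual rather than computational—isolating the asymmetry cleanly. For $Q=\mathbb{E}\big[(K^{in})^2\big]$, the directed star $G^\ast$ of size $N$ has in-degree sequence of $N-1$ ones and one zero, so $Q(G^\ast)=(N-1)/N$, the very minimum attained by weakly connected graphs. Consequently, if P2 holds, the ``only if'' half of part 2 forces $l_N\geq (N-1)/N$ for all $N\geq N_0$; then for any $N\geq\max(N_0,2)$ the star satisfies $Q(G^\ast)=(N-1)/N\leq l_N$, so $G^\ast\in\mathcal{A}$, yet $G^\ast$ contains a super-spreader, contradicting P3. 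Hence P2 and P3 cannot hold simultaneously. Since only the forward implication of part 2 is invoked, no circularity arises.
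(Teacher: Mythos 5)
Your proposal is correct and follows essentially the same route as the paper's proof: the same lower bound $Q(G)\geq|\mathcal{E}_G|/N$ via $k^2\geq k$, the same extremal graphs (edgeless, directed line/path, directed star), and the same key asymmetry for part 4, namely that the directed star's in-degree second moment $(N-1)/N$ coincides with the minimum over weakly connected graphs. The only departures are cosmetic: in part 3 you replace the paper's appeal to $\mathcal{I}_{e\_del}$-monotonicity (Proposition~\ref{prop:secmomint}) by the direct pointwise bound $\sum_v (k_v^{G,out})^2\geq (N-1)^2$, and in part 4 you prove only the implication P2 $\Rightarrow$ not P3 (which indeed suffices, being the contrapositive of the other direction), whereas the paper writes out both directions.
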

\begin{proof}
1. The second moment of the (out- or ingoing) node degrees of any edgeless graph equals zero by definition.

\smallskip\noindent
2.  We only prove the assertion for the out-degree distribution, The same proof holds true in case of the in-degree distribution.  Since  $k\leq k^2$ for all $k\in\mathbb{N}$, we find that
 \begin{equation}\label{eq:second:first:moment}
      \mathbb{E}\big[ (K_G^{out})^2\big] \geq \frac{1}{N} \sum_{v\in\mathcal{V}_G} k_v^{G, out} = \frac{\vert \mathcal{E}_G\vert}{N} = \mathbb{E}\big[ K_G^{out}\big]
 \end{equation}
 for any network $G$ of size $N$. Recall that a  weakly connected network $G$ of size $N$ contains at least  $N-1$  edges, see the proof of Proposition \ref{prop:firstmom}.  Therefore, any  weakly connected network $G$ satisfies 
 \begin{equation}\label{eq:inineq}
      \mathbb{E}\big[ (K_G^{out})^2\big] \geq \frac{N-1}{N}. \end{equation}
 Equality  in \eqref{eq:inineq} is realized by every directed line graph $G^L$ where $\mathcal{V}_{G^L} = \{v_1,\cdots , v_N\}$ and $\mathcal{E}_{G^L} = \{ (v_1, v_2), (v_2, v_3), \cdots, (v_{N-1}, v_N)\}$, see Figure \ref{fig:ring}.  Therefore, $\mathcal{A}$ contains a weakly connected graph if and only if  $\mathcal{A}$ contains a directed line graph of the same size. 

\smallskip\noindent
 3.  According to Proposition \ref{prop:secmomint}, $Q$  is  $\mathcal{I}_{e\_del}$-monotone. Therefore, it suffices to consider the networks which contain a super-spreader but otherwise have minimal amount of edges, that is the directed star graphs. The second moment of the out-degree distribution of a directed star graph $G^\ast$ of size $N$ is given by 
 \begin{equation*}
     \mathbb{E}\big[ (K^{out}_{G^\ast})^2\big] = \frac{1}{N} \big( 1\cdot (N-1)^2 + (N-1)\cdot 0\big) = \frac{(N-1)^2}{N}.
 \end{equation*}
 
 \smallskip\noindent
4.  Similar to \eqref{eq:second:first:moment} above we obtain that 
 \begin{equation}\label{eq:secmominc}
    \mathbb{E}[(K_G^{in})^2] \geq  \frac{\vert \mathcal{E}_G\vert}{N} 
 \end{equation}
 for any network $G$ of size $N$. 
 Consider any  weakly connected network $G$ with $N$ nodes for some $N\in \mathbb{N}$.
 Then \eqref{eq:secmominc} yields $\mathbb{E}[(K_G^{in})^2] \geq (N-1)/N$ (proof of Proposition \ref{prop:firstmom}). A directed star graph $G^\ast$ of size $N$  satisfies $$\mathbb{E}[(K_{G^\ast}^{in})^2] =\frac{1}{N}((N-1)\cdot 1 + 1\cdot 0) = \frac{N-1}{N},$$ because the super-spreader has in-degree 0, and all the other nodes come with a in-degree of 1.  Hence, acceptability of a weakly connected graph of size $N$ implies $l_N\geq (N-1)/N$, and thus that the directed star graph of the same size is acceptable, which violates P3. Conversely, if we exclude any directed star graph, then necessarily  $l_N< (N-1)/N$ for all $N\in \mathbb{N}$, so there cannot be an acceptable weakly connected graph, so P2 is violated.

\end{proof}

\subsection{DMFNR Based on Hub Control via Centrality Measures}\label{sec:hub}
In this section we consider as control $Q$ in \eqref{eq:accset} the maximal centrality 
\begin{equation}\label{eq:hubcontrol}
Q(G) = \max_{v\in\mathcal{V}_G}\mathfrak{C}(v, G),
\end{equation}
where the \textit{centrality} $\mathfrak{C}(v, G)$ of a node $v$ of graph $G$ will be given by a centrality measure $\mathfrak{C}$ as discussed in the following Section~\ref{sec:ncentr}. Controlling the maximal centrality limits the impact of the most central/systemic nodes within the network. 


\subsubsection{ Node Centralities} \label{sec:ncentr}
 A {\em node centrality measure} is a map
    \begin{equation*}
    \mathfrak{C}^n:\mathbb{V}\times\mathcal{G}\to \mathbb{R}_+\cup\{\emptyset\} \qquad \mbox{such that} \quad \mathfrak{C}(v, G) = \emptyset\Leftrightarrow v\notin \mathcal{V}_G.
\end{equation*}
The simplest way to measure { the centrality of a node is  by its} in-, out-degree:
      \begin{equation*}
       \mathfrak{C}^{deg}_{in}(v, G) = k^{G, in}_v,\quad \mathfrak{C}^{deg}_{out}(v, G) = k^{G, out}_v    
        \end{equation*}
in case $v\in \mathcal{V}_G$, and $\mathfrak{C}^{deg}_{in}(v, G)=\mathfrak{C}^{deg}_{out}(v, G)=\emptyset$ if $v\notin \mathcal{V}_G$. 
        
There are a number of  extensions of the concept of degree centrality where connections to high-degree nodes are more important than those to nodes with a low degree level. Usually, this leads to a definition of centrality that is based on the entries of the normalized (left or right) eigenvector associated with the largest eigenvalue of the adjacency matrix, see Section 7.1 in \cite{Newman2018} for details. 

Alternatively, centrality can also be defined in terms of shortest paths instead of node degrees. 
\textit{In-} or \textit{out-closeness centrality}, respectively, calculate the average distance from a node to others, either in terms of incoming or outgoing paths:
\begin{equation}\label{eq:closecentr}
\mathfrak{C}_{in}^{close} (v, G) = \frac{1}{|\mathcal{V}_G|-1} \sum_{w\neq v} \frac{1}{l^G_{wv}},\quad \mathfrak{C}_{out}^{close} (v, G) = \frac{1}{|\mathcal{V}_G|-1} \sum_{w\neq v} \frac{1}{l^G_{vw}}
\end{equation}
in case $v\in\mathcal{V}_G $ and $|\mathcal{V}_G|\geq 2$,  $\mathfrak{C}_{in}^{close} (v, G)=\mathfrak{C}_{out}^{close} (v, G)=0$ whenever $G=(\{v\}, \emptyset)$, and $\mathfrak{C}_{in}^{close} (v, G)=\mathfrak{C}_{out}^{close} (v, G)=\emptyset$ if $v\notin \mathcal{V}_G$. Note that we set $1/\infty:=0$.

Another prominent example of a path-based centrality measure in the literature is {\em betweenness centrality} 
    \begin{equation}   \label{eq:betnode}    
        \mathfrak{C}^\text{bet}(v, G) = \sum_{\substack{u, w \\ u,w\neq v}} \frac{\sigma^G_{uw}(v)}{\sigma^G_{uw}} \quad \mbox{if}\,  v\in \mathcal{V}_G, \quad \mathfrak{C}^\text{bet}(v, G) =\emptyset \quad \mbox{if}\,  v\notin \mathcal{V}_G,
		\end{equation}          
         where $\sigma^G_{uw}$ denotes the total number of shortest paths from node $u$ to $w$ in $G$, and $\sigma^G_{uw}(v)$ is the number of these paths that go through node $v$, and where we set $0/0:=0$. Note that slightly different definitions can be found in the literature, f.e., in \cite{Newman2018}, where also paths with $u = v$ or $w=v$ are considered.

 \subsubsection{DMFRN based on In- and Out-Degree Centrality}
\begin{proposition}\label{prop:degmon}
    Consider an acceptance set $\mathcal{A}$ as in \eqref{eq:accset} with $Q(G) = \max_{v\in\mathcal{V}_G}\mathfrak{C}^{deg}_{out}(v, G)$ or $Q(G) = \max_{v\in\mathcal{V}_G}\mathfrak{C}^{deg}_{in}(v, G)$. Then $Q$ is $\mathcal{I}$-monotone for any non-empty $\mathcal{I}\subset \mathcal{I}_{e\_del}\cup \mathcal{I}_{split}$, and $(\mathcal{A},\mathcal{I},\mathcal{C})$ is a DMFNR whenever $\mathcal{C}$ is a cost function for $(\mathcal{A},\mathcal{I})$.
\end{proposition}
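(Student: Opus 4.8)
The plan is to reduce everything to a single-intervention check, since $\mathcal{I}$-monotonicity is defined node-wise through single maps $\kappa\in\mathcal{I}$, and the subsequent claim that $\mathcal{I}$ is risk-reducing for $\mathcal{A}$ is exactly the content of the (easily verified) lemma immediately following the definition of $\mathcal{I}$-monotonicity. Thus the entire substance of the proposition is the inequality $Q(\kappa(G))\le Q(G)$ for every $G\in\mathcal{G}$ and every $\kappa\in\mathcal{I}_{e\_del}\cup\mathcal{I}_{split}$, treating the out-degree control $Q(G)=\max_{v\in\mathcal{V}_G}\mathfrak{C}^{deg}_{out}(v,G)$ and the in-degree control analogously. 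Once this is established, the final assertion follows formally: $\mathcal{A}$ is by hypothesis an acceptance set, $\mathcal{I}$ is then risk-reducing for it, and $\mathcal{C}$ is assumed to be a cost function for $(\mathcal{A},\mathcal{I})$, so the triplet meets all three requirements of a DMFNR.

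First I would handle edge deletions. If $\kappa=\kappa^{(q,r)}_{e\_del}$, then either $\kappa(G)=G$ (nothing to prove) or the edge $(q,r)$ is removed, which lowers $k^{out}_q$ by exactly one and leaves the out-degree of every other node unchanged; for the in-degree control it lowers $k^{in}_r$ by one and fixes all other in-degrees. In either case no degree increases, so the maximal degree cannot increase, giving $Q(\kappa(G))\le Q(G)$.

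Next I would treat node splittings, where I would reuse the degree-partition identity already recorded in the proof of Proposition~\ref{prop:secmomint}: if $v$ is split into $v$ and $\tilde v$, yielding $\tilde G$, then $k^{G,out}_v=k^{\tilde G,out}_v+k^{\tilde G,out}_{\tilde v}$ and likewise $k^{G,in}_v=k^{\tilde G,in}_v+k^{\tilde G,in}_{\tilde v}$. Since all degrees are non-negative, both $k^{\tilde G,out}_v$ and $k^{\tilde G,out}_{\tilde v}$ are bounded by $k^{G,out}_v\le Q(G)$. For any node $w\neq v,\tilde v$ the out-degree is unchanged: an outgoing edge $(w,v)$ that is rewired merely becomes $(w,\tilde v)$, so the count at $w$ is preserved, and edges not incident to $v$ are untouched; the symmetric statement holds for in-degrees. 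Hence the new node $\tilde v$, the split node $v$, and every other node all carry degree at most $Q(G)$, so again $Q(\tilde G)\le Q(G)$.

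The only genuinely delicate point — the place I would be most careful — is the split case, where I must verify that the freshly created node $\tilde v$ does not produce a degree exceeding $Q(G)$; the crux is that the split \emph{partitions} the degree of $v$ rather than duplicating it, and that rewiring an incident edge preserves the degree of its other endpoint. Everything else is immediate, and after assembling the two cases I would invoke the monotonicity lemma to conclude that $\mathcal{I}$ is risk-reducing and hence that $(\mathcal{A},\mathcal{I},\mathcal{C})$ is a DMFNR.
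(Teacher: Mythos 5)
Your proof is correct and takes essentially the same approach as the paper: the paper's proof is a one-line assertion that neither edge deletion nor node splitting can increase any node's in- or out-degree, which is precisely what you verify in detail (via the degree-partition identity $k^{G,out}_v=k^{\tilde G,out}_v+k^{\tilde G,out}_{\tilde v}$ for splits). Your added care at the split case and the explicit appeal to the monotonicity lemma for the DMFNR conclusion are exactly the details the paper leaves implicit.
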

\begin{proof}
    Clearly, deleting an edge does not increase the in- or out-degree of any node in a graph $G$. The same holds when an arbitrary node split is applied.
\end{proof}

\begin{proposition}\label{prop:outindegree}
Consider an acceptance set $\mathcal{A}$ as in \eqref{eq:accset} with $Q(G) = \max_{v\in\mathcal{V}_G}\mathfrak{C}^{deg}_{out}(v, G)$ or $Q(G) = \max_{v\in\mathcal{V}_G}\mathfrak{C}^{deg}_{in}(v, G)$. Then 
\begin{enumerate}
\item $\mathcal{A}$ satisfies P1 if and only if $l_N\geq 0$ for all $N\geq 2$,
 \item $\mathcal{A}$ satisfies  P2 if and only if there is a $N_0$ such that $l_N\geq 1$ for all $N\geq N_0$.
 \item If $Q(G) = \max_{v\in\mathcal{V}_G}\mathfrak{C}^{deg}_{out}(v, G)$, then $\mathcal{A}$ satisfies P3 if and only if $l_N < N-1$ for all $N\geq 1$.
 \item If $Q(G) = \max_{v\in\mathcal{V}_G}\mathfrak{C}^{deg}_{in}(v, G)$, then $\mathcal{A}$ cannot satisfy P2  and P3 simultaneously. 
 \end{enumerate}
\end{proposition}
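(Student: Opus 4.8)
The plan is to evaluate the control $Q$ on three families of extremal graphs---the edgeless graph, the directed line graph $G^L$, and the directed star graph $G^\ast$---and to exploit that $Q$ is $\mathcal{I}_{e\_del}$-monotone (Proposition~\ref{prop:degmon}) together with the edge count of weakly connected graphs established in the proof of Proposition~\ref{prop:firstmom}. For Part~1, both $\mathfrak{C}^{deg}_{out}$ and $\mathfrak{C}^{deg}_{in}$ vanish at every node of an edgeless graph, so $Q\big((\mathcal{V},\emptyset)\big)=0$ whenever $|\mathcal{V}|\geq 2$; hence such a graph lies in $\mathcal{A}$ iff $l_{|\mathcal{V}|}\geq 0$, which is exactly P1. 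For Part~2, any weakly connected graph of size $N\geq 2$ carries at least one edge, so some node has positive in- (resp.\ out-) degree and $Q\geq 1$; conversely the directed line graph $G^L$ is weakly connected with every in- and out-degree bounded by $1$, so $Q(G^L)=1$. Thus a weakly connected graph of size $N$ is acceptable iff $l_N\geq 1$, giving P2 $\Leftrightarrow$ ($l_N\geq 1$ eventually), and this argument is identical for both the in- and out-degree controls.

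The out-degree case of P3 rests on the observation that a node is a super-spreader exactly when its out-degree attains the maximal possible value $N-1$. Hence for $Q(G)=\max_{v}\mathfrak{C}^{deg}_{out}(v,G)$ with $|\mathcal{V}_G|=N$ one has the equivalence: $G$ contains a super-spreader $\Leftrightarrow Q(G)=N-1$ (using $Q(G)\leq N-1$ in general). Consequently, if $l_N<N-1$ for all $N\geq 1$, then every super-spreader network $G$ satisfies $Q(G)=N-1>l_N$ and is excluded, so P3 holds; the single-node graph, a vacuous super-spreader with $Q=0$, is excluded as well since $l_1<0$. Conversely, if $l_N\geq N-1$ for some $N$, the directed star $G^\ast$ of size $N$ has $Q(G^\ast)=N-1\leq l_N$, so $G^\ast\in\mathcal{A}$, violating P3. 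This establishes Part~3.

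The heart of the proposition---and the step I expect to be the crux---is Part~4, which hinges on an asymmetry: a super-spreader leaves almost no trace in the in-degree profile. Concretely, the directed star $G^\ast$ of size $N\geq 2$ is a super-spreader network in which $v^\ast$ has in-degree $0$ and every other node has in-degree exactly $1$, so $Q(G^\ast)=\max_{v}\mathfrak{C}^{deg}_{in}(v,G^\ast)=1$. Combining this with Part~2: if P2 holds there is $N_0$ with $l_N\geq 1$ for all $N\geq N_0$; choosing any $N\geq\max(N_0,2)$ gives $Q(G^\ast)=1\leq l_N$, so the super-spreader network $G^\ast$ is acceptable and P3 fails. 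Hence P2 and P3 cannot hold simultaneously for the in-degree control. Conceptually, the obstruction is unavoidable because every super-spreader network forces each of its $N-1$ non-source nodes to receive the edge from $v^\ast$, so $\max_{v}k^{G,in}_v\geq 1$ can never be pushed below the threshold that P2 demands; this mirrors the remark that, for systemic cyber risk, it is the out-degree and not the in-degree distribution that carries the relevant information.
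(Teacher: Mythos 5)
Your proof is correct and follows essentially the same route as the paper's: evaluating $Q$ on edgeless graphs for P1, using the edge-existence bound $Q\geq 1$ together with the directed line graph for P2, the identity $Q=N-1$ on super-spreader networks and directed stars for P3, and the observation that the directed star's maximal in-degree equals $1$ (clashing with the P2 threshold) for P4. The only cosmetic difference is that you bypass the paper's appeal to $\mathcal{I}_{e\_del}$-monotonicity in Part 4 by noting directly that the star graph is itself a super-spreader network, and you handle the $N=1$ case in Part 3 slightly more explicitly than the paper does.
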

\begin{proof}
    \begin{enumerate}
        \item The in- and out-degree of every node in an edgeless graph equals zero. 
        \item If a network $G\in\mathcal{G}$ with $\vert\mathcal{V}_G\vert\geq 2$ is  weakly connected, then  there are nodes $v, w\in\mathcal{V}_G$ with $\vert\mathcal{N}_v^{G, out}\vert, \vert\mathcal{N}_w^{G, in}\vert \geq 1$. 
 This proves necessity. Sufficiency follow by considering the directed line graph of size $N$ (see Figure~\ref{fig:ring}). 
        \item The out-degree of a super-spreader equals $N-1$ by definition. 
        \item $Q$ is $\mathcal{I}_{e\_ del}$-monotone according to Proposition \ref{prop:degmon}. Therefore, if $\mathcal{A}$ would contain a network with a super-spreader, then $\mathcal{A}$  would in particular also contain the directed star graph of the same size. For a directed star graph $G^\ast$, however,  of size $\vert\mathcal{V}_{G^\ast}\vert\geq 2$ with super-spreader $v^\ast$, we find $\mathfrak{C}^{deg}_{in}(v, G)=1$ for all $v\in\mathcal{V}_G\setminus\{v^\ast\}$, and $\mathfrak{C}^{deg}_{in}(v^\ast, G) = 0$. Thus, $Q(G^\ast) = 1$. Recall that for any weakly connected network $G$  with $\vert\mathcal{V}_G\vert\geq 2$ we have $Q(G) = \max_{v\in\mathcal{V}_G}\mathfrak{C}^{deg}_{in}(v, G)\geq 1$, see 2. Hence, 4. follows. 
    \end{enumerate}
\end{proof}

\subsubsection{ In- and Out-Closeness Centrality}

We first note that for path-based centrality measures node splits may indeed worsen the situation. 

\begin{example}
    Consider again the setting from Example \ref{ex:nonsplit}. For all nodes $v = a, b, c$ from the initial network component $G_1$, we have
\begin{equation*}
    \mathfrak{C}^{close}_{in}(v, G) = \mathfrak{C}^{close}_{out}(v, G) = \frac{1}{N-1} \Big(\frac{1}{1} + \frac{1}{1}\Big) = \frac{2}{N-1}.
\end{equation*}
Now, after the node split, the nodes $a$ and $c$ come with  in- and out-closeness centralities of
\begin{equation*}
    \mathfrak{C}^{close}_{in} (v, H) = \mathfrak{C}^{close}_{out} (v, H) = \frac{1}{N} \Big( \frac{1}{1} + \frac{1}{1} + \frac{1}{2}\Big) = \frac{5}{2N},\quad v= a,c,
\end{equation*}
and we find that
\begin{equation*}
    \max_{v\in\mathcal{V}_G}\mathfrak{C}^{close}_{\ast}(v, G) < \max_{v\in\mathcal{V}_H}\mathfrak{C}^{close}_{\ast}(v, H) \Leftrightarrow \frac{2}{N-1} < \frac{5}{2N} \Leftrightarrow 5 < N,\quad v= a,c,
\end{equation*}
for $ \ast =  in, out$. Thus, the maximal in- and out-closeness centrality is increased under the node split if the component $G_2$ consists of at least three isolated nodes.

Similarly, the betweenness centrality as defined in \eqref{eq:betnode} of all nodes $v\in\mathcal{V}_{G_1}$ equals zero. However, after the split of node $b$, we have that the shortest paths from node $b$ to $\tilde{b}$ and vice versa both pass through nodes $a$ and $c$, thus
\begin{equation*}
    Q(H) = \mathfrak{C}^{bet}(a, H_1) = \mathfrak{C}^{bet}(c, H_1) = 2 > 0 = Q(G).
\end{equation*}
\end{example}

\begin{proposition}\label{prop:closeness}
Consider a set $\mathcal{A}\subset\mathcal{G}$ as in \eqref{eq:accset} with $Q(G) = \max_{v\in\mathcal{V}_G}\mathfrak{C}^{close}_{out}(v, G)$ or $Q(G) = \max_{v\in\mathcal{V}_G}\mathfrak{C}^{close}_{in}(v, G)$. Then 
\begin{enumerate}
\item $Q$ is $\mathcal{I}$-monotone for any non-empty $\mathcal{I}\subset \mathcal{I}_{e\_del}\cup \mathcal{I}_{s\_iso}\cup\{\operatorname{id}\}$. In that case  $(\mathcal{A}, \mathcal{I}, \mathcal{C})$ is a DMFNR whenever $\mathcal{C}$ is a cost function $(\mathcal{A}, \mathcal{I})$. 
\item $\mathcal{A}$ satisfies P1 if and only if $l_N\geq 0$ for all $N\geq 2$,
 \item $\mathcal{A}$ satisfies  P2 if and only if there is a $N_0\in \mathbb{N}$ such that $l_N\geq 1/(N-1)$ for all $N\geq N_0$.
\item If $Q(G) = \max_{v\in\mathcal{V}_G}\mathfrak{C}^{close}_{out}(v, G)$, then $\mathcal{A}$ satisfies  P3 if and only if $l_1<0$ and $l_N < 1$ for all $N\geq 2$.
 \item If $Q(G) = \max_{v\in\mathcal{V}_G}\mathfrak{C}^{close}_{in}(v, G)$, then $\mathcal{A}$ cannot satisfy P2 and P3 simultaneously. More precisely, P2 implies that P3 is violated and vice versa.
\end{enumerate}
\end{proposition}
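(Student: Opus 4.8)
The plan is to prove the five assertions in order, reducing everything to two ingredients: the monotonicity of shortest-path lengths under edge removal, and explicit evaluations of $Q$ on a short list of extremal graphs (the edgeless graph, the directed star $G^\ast$, and its edge-reversal, an ``in-star'' in which every leaf points to a single hub).

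For assertion 1 I would note that both $\mathcal{I}_{e\_del}$ and $\mathcal{I}_{s\_iso}$ act purely by deleting edges and leave $|\mathcal{V}_G|$ unchanged, subgraph isolation being a finite sequence of edge deletions exactly as in the proof of Proposition~\ref{prop:edelshort}. Deleting edges creates no new walks, so $l^{\kappa(G)}_{vw}\geq l^G_{vw}$ for all $v,w$; since $t\mapsto 1/t$ is decreasing with the convention $1/\infty=0$, each summand $1/l^G_{vw}$ can only shrink while the normalising factor $1/(|\mathcal{V}_G|-1)$ is unchanged. Hence $\mathfrak{C}^{close}_{out}(v,\kappa(G))\leq \mathfrak{C}^{close}_{out}(v,G)$, and likewise for the in-version, for every node $v$, so taking the maximum gives $Q(\kappa(G))\leq Q(G)$, i.e.\ $Q$ is $\mathcal{I}$-monotone for $\mathcal{I}\subset\mathcal{I}_{e\_del}\cup\mathcal{I}_{s\_iso}\cup\{\operatorname{id}\}$. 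The $\mathcal{I}$-monotonicity lemma of this section then gives that $\mathcal{I}$ is risk-reducing for $\mathcal{A}$, and the DMFNR claim is immediate.

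Assertions 2--4 proceed by evaluating $Q$ on the extremal graphs. For P1 the edgeless graph on $N\geq 2$ nodes has all pairwise distances $\infty$, so $Q=0$, forcing the equivalence $l_N\geq 0$. For P2 the key step is to identify the weakly connected graph of size $N$ that minimises $Q$. The lower bound is uniform: a weakly connected graph on $N\geq 2$ nodes has at least one edge $(v,w)$, whence $\mathfrak{C}^{close}_{out}(v,G)\geq \tfrac{1}{N-1}$ in the out-version and $\mathfrak{C}^{close}_{in}(w,G)\geq\tfrac{1}{N-1}$ in the in-version, so $Q\geq 1/(N-1)$ in both cases. This bound is attained, but the correct witness is \emph{not} the directed line graph (whose source has closeness $\tfrac{1}{N-1}H_{N-1}$): it is the in-star for the out-version and the directed star $G^\ast$ for the in-version, each giving $Q=1/(N-1)$. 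Matching ``an acceptable weakly connected graph exists for all large $N$'' with ``$l_N\geq 1/(N-1)$ eventually'' then yields P2. For P3 in the out-version, any super-spreader $v^\ast$ satisfies $l^G_{v^\ast w}=1$ for all $w$, so $\mathfrak{C}^{close}_{out}(v^\ast,G)=1$ and $Q\geq 1$, with the directed star realising $Q=1$ exactly; excluding every super-spreader graph is therefore equivalent to $l_N<1$ for $N\geq 2$, while the (vacuously super-spreading) one-node graph has $Q=0$ and must be excluded, forcing $l_1<0$.

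Assertion 5 is where the in/out asymmetry becomes decisive, and it is the main subtlety of the proposition. The fact to exploit is that the directed star $G^\ast$ is simultaneously a super-spreader graph and, for the in-closeness version, a minimiser of $Q$ among weakly connected graphs with $Q(G^\ast)=1/(N-1)$: its hub is unreachable and every leaf is reached only from the hub at distance $1$. Hence if P2 holds there is $N_0$ with $l_N\geq 1/(N-1)$ for all $N\geq N_0$ (as in assertion~3), and then $G^\ast$ of size $N\geq\max(N_0,2)$ satisfies $Q(G^\ast)=1/(N-1)\leq l_N$, so $G^\ast\in\mathcal{A}$, violating P3. This proves $\text{P2}\Rightarrow\neg\text{P3}$; its contrapositive $\text{P3}\Rightarrow\neg\text{P2}$ is the ``vice versa''. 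The genuine difficulty in the whole proposition is not any single computation but recognising which extremal configuration is tight for each centrality direction; once the in-star/out-star dichotomy is identified, every bound reduces to a one-line estimate.
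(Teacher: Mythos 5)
Your proposal is correct and follows essentially the same route as the paper's proof: monotonicity via the fact that edge deletions (and subgraph isolations, as sequences thereof) only lengthen shortest paths, the uniform lower bound $Q\geq 1/(N-1)$ for weakly connected graphs, the directed star (for in-closeness) and its edge-reversal (for out-closeness) as the tight witnesses for P2, the value $\mathfrak{C}^{close}_{out}(v^\ast,G)=1$ of a super-spreader for P3, and the observation that the directed star is simultaneously a super-spreader graph and an in-closeness minimiser for the P2/P3 incompatibility. Your write-up is in fact slightly more explicit than the paper's in two places: it spells out the $l_1<0$ condition via the vacuously super-spreading one-node graph, and it records why the directed line graph (with maximal out-closeness $H_{N-1}/(N-1)$) fails to be the extremal witness here, in contrast to the degree-based propositions.
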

\begin{proof}
    \begin{enumerate}
        \item The same arguments apply as in the proof of Property C2 in Proposition  \ref{prop:edelshort}.
        \item By definition, we have $\mathfrak{C}^{close}_{out}(v, G), \mathfrak{C}^{close}_{in}(v, G)\geq 0$, and for every node $v$ in an edgeless graph $\mathfrak{C}^{close}_{out}(v, G)= \mathfrak{C}^{close}_{in}(v, G)= 0$.
        \item  If $G$ of size $N\geq 2$ is weakly connected, we must have $Q(G)\geq 1/(N-1)$ for both the in- and out-closeness centrality since there is a node with an incoming, and a node with an outgoing edge. This proves that $l_N\geq 1/(N-1)$ for all $N\geq N_0$, where $N_0\in \mathbb{N}$, is necessary for P2.
        The super-spreader $v^\ast$ of a directed star graph $G^\ast$ of size $N\geq 2$ satisfies $\mathfrak{C}^{close}_{in}(v^\ast, G^\ast) = 0$ whereas $\mathfrak{C}^{close}_{in}(v, G^\ast) = 1/(N -1)$ for $v\in\mathcal{V}_{G^\ast}\setminus\{v^\ast\}$, which proves  $Q(G^\ast) = \max_{v\in\mathcal{V}_G}\mathfrak{C}^{close}_{in}(v, G^\ast)=1/(N-1)$. Analogously, considering a graph $\hat G$ obtained by inverting the direction of the edges of the directed star graph $G^\ast$, that is $\mathcal{V}_{\hat G}=\mathcal{V}_{G^\ast}$ and $\mathcal{E}_{\hat G}=\{(u,v)\mid (v,u)\in \mathcal{E}_{G^\ast}\}$,  we find $Q(\hat G) = \max_{v\in\mathcal{V}_G}\mathfrak{C}^{close}_{out}(v, \hat G)=1/(N-1)$. Since $G^\ast$ and $\hat G$ are weakly connected, sufficiency of there exists $N_0\in \mathbb{N}$ such that $l_N\geq 1/(N-1)$ for all $N\geq N_0$ for P2 follows. 
        \item         The super-spreader $v^\ast$ of a directed star graph $G^\ast$ of size $|\mathcal{V}_{G^\ast}|\geq 2$ satisfies  $\mathfrak{C}^{close}_{out}(v^\ast, G^\ast) =  1$. The same applies to a star node in a bidirectional star graph. 
        \item This follows from the proof of 3. 
    \end{enumerate}
\end{proof}

\subsubsection{DMFNR Based on Betweenness Centrality}

In this section we illustrate that centrality measures which define centrality of a node not in absolute terms but relative to the centrality of other nodes may not be suitable control for systemic (cyber) risk. To this end, recall the \text{betweenness centrality} given in \eqref{eq:betnode}.

\begin{proposition} Consider the acceptance set $\mathcal{A}$ in \eqref{eq:accset} with $Q(G) =\max_{v\in\mathcal{V}_G}\mathfrak{C}^{bet}(v,G)$. If $\mathcal{A}$ satisfies P1, then P3 is violated. Conversely, if $\mathcal{A}$ satisfies P3, then P1 is violated. 
\end{proposition}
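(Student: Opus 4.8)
The plan is to exploit the fact that betweenness centrality measures how often a node lies \emph{on} shortest paths \emph{between other} nodes, and therefore vanishes identically on both the edgeless graph and the directed star graph. Since $Q$ is the maximal betweenness over all nodes, this forces $Q(G) = 0$ in both situations, and the claimed incompatibility of P1 and P3 then reduces to comparing the threshold $l_N$ with this common value.

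First I would evaluate $Q$ on the edgeless graph $G = (\mathcal{V}, \emptyset)$ with $|\mathcal{V}| = N \geq 2$. Here $\sigma^G_{uw} = 0$ for every ordered pair, so by the convention $0/0 := 0$ each summand in \eqref{eq:betnode} vanishes, giving $\mathfrak{C}^{bet}(v, G) = 0$ for all $v$ and hence $Q(G) = 0$. Thus $G \in \mathcal{A}$ if and only if $l_N \geq 0$, so P1 is equivalent to $l_N \geq 0$ for all $N \geq 2$. Next I would evaluate $Q$ on a directed star graph $G^\ast$ of size $N \geq 2$ with super-spreader $v^\ast$. The only directed shortest paths in $G^\ast$ are the single edges $v^\ast \to w$, since the leaves have no outgoing edges. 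Fixing a node $v$ and summing over ordered pairs $(u,w)$ with $u, w \neq v$, each pair either admits no directed path (whenever $u \neq v^\ast$, so $\sigma^G_{uw} = 0$) or has $u = v^\ast$ with the unique shortest path being a single edge that does not pass through $v$ (so $\sigma^G_{uw}(v) = 0$). In every case the summand is zero, whence $\mathfrak{C}^{bet}(v, G^\ast) = 0$ and $Q(G^\ast) = 0$.

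Both implications are then immediate. For the forward direction, P1 forces $l_N \geq 0$, so $Q(G^\ast) = 0 \leq l_N$ places the directed star graph $G^\ast$ in $\mathcal{A}$; as $v^\ast$ is a super-spreader, $\mathcal{A}$ contains a network with a super-spreader and P3 fails. For the converse, P3 excludes $G^\ast$ from $\mathcal{A}$ (it has a super-spreader), so $Q(G^\ast) = 0 > l_N$ forces $l_N < 0$ for every $N \geq 2$; but then the edgeless graph of size $N$ also satisfies $Q = 0 > l_N$ and lies outside $\mathcal{A}$, so P1 fails.

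The step I would check most carefully is the evaluation of the betweenness sum on $G^\ast$, specifically that \emph{directed} shortest paths emanate only from $v^\ast$ and have length one, so that the super-spreader is never an interior vertex of any shortest path. This is exactly why betweenness is inadequate as a systemic-risk control: it rewards intermediary position rather than out-reach, and a super-spreader, whose links to the rest of the network are all direct, registers the same zero score as the harmless edgeless configuration.
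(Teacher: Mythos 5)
Your proof is correct and follows essentially the same approach as the paper: exhibit a graph containing a super-spreader on which the maximal betweenness centrality is zero, and compare it to the edgeless graph (also zero) to show $l_N\geq 0$ forces both in or out of $\mathcal{A}$ together. The only difference is the witness — you use the directed star graph (where leaves admit no outgoing paths and the center's edges are direct), while the paper uses the complete graph (where all shortest paths are single edges); both computations are valid and equally elementary.
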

\begin{proof}
If $G^c$ is a complete graph, then there is only one shortest path between two distinct nodes $u, w\in\mathcal{V}$, namely the edge between them. Therefore, in complete graphs we have $\sigma_{uw}(v) =0$ if $v\notin\{ u, w\}$, and we thus find $\mathfrak{C}^{bet}(v, G^c) = 0$ for all $v\in\mathcal{V}_{G^c}$. Hence, $Q(G^c)=0$. Clearly, any edge-less graph $G^\emptyset=(\mathcal{V}, \emptyset)$ also satisfies $Q(G^\emptyset)=0$. 
\end{proof}

\subsection{DMFNR Based on Stress Test Scenarios}\label{sec:DMFNRstress}

Suppose that the supervisor applies an (external) risk model operating on the supervised network. This model could, for instance, describe the evolution or spread of a risk throughout the network. Typical examples of such processes include bond or bootstrap percolation (\cite{Chiaradonna2023, Detering2019}), as well as continuous-time SIR or SIS epidemic models, as used in \cite{Awiszus2023b, Chernikova2022, Chernikova2023, xu2019cybersecurity}.
%
%
This external risk model is naturally employed to assess networks via stress-test simulations. Note that a macroprudential cyber stress testing framework for financial risk regulation has recently been proposed in \cite{Euro2020, Ros2020}, comprising four stages: scenario type (\textit{context}), initial cyber incident or attack (\textit{shock}), propagation through the network (\textit{amplification}), and the system-wide impact (\textit{systemic event}). The stages context, shock, and amplification define the dynamics of the cyber scenario. As these dynamics unfold, an initial shock results in specific outcomes, such as the total number of 
nodes affected by the cyber attack. The term \textit{systemic event} refers to outcomes that significantly disrupt the (financial) system's ability to perform critical functions. In order to define systemic events, the supervisor may establish an \textit{impact tolerance threshold}, as discussed in \cite{Ros2020}. For example, this threshold might be a critical fraction $\alpha \in (0,1)$, indicating the proportion of nodes affected during the incident.

Let $\mathcal{S}(G)$ denote the systemic event. Suppose that a network is deemed acceptable if the probability of such a systemic shock, $\mathbb{P}[\mathcal{S}(G)]$, does not exceed a specified threshold $\lambda \in (0,1)$. The corresponding acceptance set is 

\begin{equation}\label{eq:accset1}
    \mathcal{A} = \{ G \mid \mathbb{P}[\mathcal{S}(G)] \leq \lambda \}.
\end{equation}
Note that $\mathcal{A}$ is of type \eqref{eq:accset}, with $Q(G) = \mathbb{P}[\mathcal{S}(G)]$ and $l_{|\mathcal{V}_G|}\equiv \lambda$. 


\begin{remark}\label{rem:VaR}
Observe the similarity of \eqref{eq:accset1} to the acceptability criterion defining the Value-at-Risk (VaR) risk measure well-known in financial risk regulation, see, e.g., \cite[Section 4.4]{FS}. Notably, as in the case of the VaR, the acceptability criterion defining $\mathcal{A}$ might neglect tail risks.
\end{remark}

In the following we discuss a stress test based DMFNR where the external risk model is given by a SIR infection process.

\subsubsection{Stress Tests Based on an SIR Contagion Scenario} Consider Scenario 2 in our motivating example in the introduction, focusing on the spread of a cyber incident. The stress test begins by identifying \textit{initial shocks}, which can be modeled in various ways; for instance, infections might randomly occur at any network node.
In the \textit{amplification} phase, selecting an appropriate risk propagation model is essential. Cyber risks spread via interactions within the network. Epidemic models from mathematical biology provide a framework for such contagious risks. In the continuous-time SIR Markov model, which we will apply in this section, nodes are categorized into compartments: \emph{susceptible} $(S)$, \emph{infected} $(I)$, and \emph{recovered} $(R)$.
 For a network $G$ with $N$ enumerated nodes, the state of the process at time $t \geq 0$ is given by:
\begin{equation*}
X(t) = (X_1(t), \ldots, X_N(t)) \in E^N,
\end{equation*}
where $E = \{S, I, R\}$. State changes occur over time, starting at $t_0 = 0$ when the initial shock impacts the system.
In this model, a node may become infected by its infected in-neighbors, while recovery is independent of the states of other nodes in the network. Recovery corresponds to permanent immunity, meaning individuals do not face reinfection. The state transitions are described as follows:
\begin{align}
    \begin{split}\label{eq:SISSIRrates}
        X_i: S \rightarrow I &\quad \text{at rate} \quad \tau \sum_{j=1}^N a_{ji} \mathbbm{1}_{\{X_j(t) = I\}}, \\
        X_i: I \rightarrow R &\quad \text{at rate} \quad \gamma,
    \end{split}
\end{align}
where $\tau > 0$ is the infection rate and $\gamma > 0$ is the recovery rate.
\begin{figure}[h]
    \centering
    \begin{tikzpicture}[scale=1]
        \node[draw, shape=circle, fill=BrickRed!55!White] (1) at (0, 0) {I};
        \node[draw, shape=circle, fill=OliveGreen!55!White] (2) at (1, 0) {S};
        \node[draw, shape=circle, fill=BrickRed!55!White] (3) at (4, 0) {I};
        \node[draw, shape=circle, fill=BrickRed!55!White] (4) at (5, 0) {I};
        \draw[->] (1) -- (2);
        \draw[->] (3) -- (4);
        \draw[->, line width=0.5mm] (1.8, 0) -- node[anchor=south] {$\tau$} (3.2, 0);
        \node[draw, shape=circle, fill=BrickRed!55!White] (5) at (1, -1.5) {I};
        \node[draw, shape=circle, fill=White!85!gray] (6) at (4, -1.5) {R};
        \draw[->, decorate, decoration={snake, amplitude=.4mm, segment length=2mm, post length=1mm}, line width=0.5mm] (1.8, -1.5) -- node[anchor=south] {$\gamma$} (3.2, -1.5);
    \end{tikzpicture}
    \caption{Infection and recovery for the SIR network model.}
    \label{fig:SIR}
\end{figure}
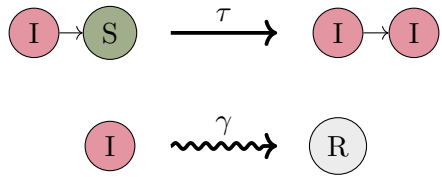

We define a systemic event $\mathcal{S}(G)$ in $G$ as the case where at least a fraction $\alpha\in (0, 1)$ of nodes becomes infected over the course of the epidemic. 

\begin{proposition}
Let $\mathcal{A}$ be as defined in \eqref{eq:accset1} with stress test described above based on the SIR contagion model. Then, for any non-empty intervention set $\mathcal{I}$ consisting of edge deletions $\mathcal{I} \subseteq \mathcal{I}_{\text{e\_del}}$,  the quantity $Q(G)=\mathbb{P}[\mathcal{S}(G)]$ is $\mathcal{I}$-monotone. Consequently, $(\mathcal{A}, \mathcal{I}, \mathcal{C})$ is a DMFNR whenever $\mathcal{C}$ is a cost function for $(\mathcal{A}, \mathcal{I})$.
\end{proposition}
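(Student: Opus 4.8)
The goal is to show that deleting an edge cannot increase the probability of a systemic event in the SIR contagion model. Since $\mathcal{I} \subseteq \mathcal{I}_{e\_del}$, by the lemma on $\mathcal{I}$-monotonicity it suffices to verify $Q(\kappa(G)) \le Q(G)$ for a single edge deletion $\kappa = \kappa^{(v,w)}_{e\_del}$, and the claim for general $\kappa \in [\mathcal{I}]$ and the DMFNR conclusion follow automatically. The natural approach is a \emph{coupling argument}: I would construct the SIR processes on $G$ and on $\tilde G := \kappa^{(v,w)}_{e\_del}(G)$ on a common probability space in such a way that, almost surely, the set of ever-infected nodes in $\tilde G$ is contained in the set of ever-infected nodes in $G$. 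Once such a monotone coupling is in place, the event ``at least a fraction $\alpha$ of nodes gets infected'' for $\tilde G$ implies the same event for $G$, whence $\mathbb{P}[\mathcal{S}(\tilde G)] \le \mathbb{P}[\mathcal{S}(G)]$, i.e. $Q(\tilde G) \le Q(G)$.

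The cleanest way to realize the coupling is via a \emph{Sellke-type / random-clock construction}. To each ordered pair of nodes $(i,j)$ that is an edge in $G$, attach an independent Poisson process of ``infection attempts'' at rate $\tau$; to each node $i$ attach an independent exponential recovery clock of rate $\gamma$ (activated when $i$ becomes infected), and fix a common initial shock. I would feed \emph{the same} realization of all these clocks into both dynamics, with the only difference being that in $\tilde G$ the infection-attempt process along $(v,w)$ is ignored. Because both processes start from the same initially infected set and the infection of a susceptible node in $\tilde G$ can only be triggered through a mechanism that is also present in $G$ (the edge set of $\tilde G$ is a subset of that of $G$), one shows by induction on the ordered sequence of transition times that at every time $t$ the ``ever-infected'' set for $\tilde G$ is contained in that for $G$. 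Formally I would prove the invariant that the $R$-or-$I$ status in $\tilde G$ is dominated by that in $G$ at each event time, handling infection events (monotone since removing an edge removes a possible transmission route and recovery timing is identical) and recovery events (identical clocks).

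The main obstacle is making the monotone coupling rigorous at the level of the \emph{full trajectory} rather than a single node: I must ensure that removing the edge $(v,w)$ cannot, through the feedback of the dynamics, create an infection in $\tilde G$ that does not occur in $G$. This requires care because recovery competes with infection, so one might worry that in $G$ a node recovers ``too early'' and fails to pass on the infection along a route that $\tilde G$ would exploit differently. The random-clock construction resolves this precisely because infection and recovery clocks are shared and the transmission rate in \eqref{eq:SISSIRrates} is monotone in the adjacency entries $a_{ji}$: with $a^{\tilde G}_{ji} \le a^{G}_{ji}$ pointwise, every infection attempt that is ``successful'' in $\tilde G$ (i.e. along a present edge, from a currently infected and not-yet-recovered neighbor) corresponds to an identical successful attempt in $G$. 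I would therefore state and prove the domination inductively over the countable, a.s.\ ordered set of clock-ring times, concluding that the final set of ever-infected nodes in $\tilde G$ is a subset of that in $G$ almost surely, which yields the desired inequality on the probabilities of $\mathcal{S}$ and hence the $\mathcal{I}$-monotonicity of $Q$.
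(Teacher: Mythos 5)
Your reduction to a single edge deletion and the final DMFNR conclusion are fine, and the high-level idea (a monotone coupling of the two epidemics) is the right one. But the specific coupling you describe does not deliver the almost-sure containment you claim, and the failure occurs exactly at the obstacle you yourself flag and then dismiss. If the per-edge ``infection attempt'' Poisson processes are shared \emph{in absolute time} while recovery clocks run from the (graph-dependent) infection times, then your inductive invariant does not close: knowing that $i$ is \emph{ever}-infected in $G$ by time $t$ does not mean $i$ is \emph{currently infectious} in $G$ at time $t$. Concretely, suppose in $G$ node $i$ is infected at time $1$ with infectious duration $1$ (so infectious on $[1,2]$), while in $\tilde G$ the deleted edge delays things so that $i$ is infected only at time $2.5$ (infectious on $[2.5,3.5]$); a shared Poisson point on the edge $(i,j)$ at absolute time $3$ then infects $j$ in $\tilde G$ but \emph{not} in $G$. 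So under this coupling the ever-infected set of $\tilde G$ is genuinely not contained in that of $G$ with positive probability, and the claimed ``identical successful attempt in $G$'' step is false; rate-monotonicity of \eqref{eq:SISSIRrates} in the adjacency entries does not rescue it, because SIR with its three-state dynamics is not attractive in the sense needed for a time-synchronized graphical construction.

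The repair is to make the transmission randomness \emph{relative to the source's infection time}: for each node $i$ pre-sample an infectious duration $D_i\sim\mathrm{Exp}(\gamma)$, and for each edge $(i,j)$ a delay $T_{ij}\sim\mathrm{Exp}(\tau)$ measured from the moment $i$ becomes infected, declaring $(i,j)$ a transmission edge iff $T_{ij}<D_i$. Then whether $i$ would transmit to $j$ no longer depends on \emph{when} $i$ is infected, the set of transmission edges is a static random graph shared by $G$ and $\tilde G$ (up to removing $(v,w)$), and the final infected set is the union of out-components of the initially infected nodes in that graph --- manifestly monotone under edge deletion. Equivalently, infection times become first-passage times over valid transmission paths and are monotone in the edge set, which gives the pathwise containment your induction was after. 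This relative-clock construction is precisely the epidemic percolation network (EPN) representation that the paper's own proof invokes; so your approach, once corrected, collapses onto the paper's argument rather than providing a distinct alternative.
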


\begin{proof}
The continuous-time SIR Markov model can be isomorphically represented as a random graph known as the \textit{epidemic percolation network} (EPN), as demonstrated in \cite{Kenah2007} and \cite[Section 6.3.3]{Kiss2017}. In particular, the distribution of final epidemic outcomes coincides  with the distribution of the combined out-components that originate from initially infected nodes in the EPN.
For a given network \(G \in \mathcal{G}\) the corresponding EPN \(\tilde{G}\) is constructed as follows:

\begin{enumerate}
    \item Initialize the vertex set \(\mathcal{V}_{\tilde{G}} = \mathcal{V}_G\) and set the edge set \(\mathcal{E}_{\tilde{G}} = \emptyset\).
    
    \item For each node \(v \in \mathcal{V}_G\), perform the following:
        \begin{enumerate}
            \item Sample the (potential) infection duration for node \(v\) from an exponential distribution with rate \(\gamma\). Given this duration, the transmissions from \(v\) to its neighbors occur independently.
            
            \item Choose an out-neighbor \(w \in \mathcal{N}_v^{G, out}\) and determine the transmission time from \(v\) to \(w\). This is done by sampling from an exponential distribution with rate \(\tau\).
            
            \item If the transmission time from \(v\) to \(w\) is less than the infection duration of \(v\), include the edge \((v, w)\) in the edge set \(\mathcal{E}_{\tilde{G}}\).
            
            \item Repeat steps (b) and (c) for all other out-neighbors of \(v\) in \(G\).
        \end{enumerate}
        
    \item The resulting EPN is then given by \(\tilde{G} = (\mathcal{V}_{\tilde{G}}, \mathcal{E}_{\tilde{G}})\).
\end{enumerate}
The \textit{out-component} $\mathcal{V}_{\tilde{G}^{out}_v}$ contains $v$ and all nodes $w\in\mathcal{V}_{\tilde G}\setminus\{ v\}$ for which there is a path from $v$ to $w$ in network $\tilde G$. These are those nodes that are eventually infected in case that $v$ is infected.
The final set of all infected nodes, \(\mathcal{V}_{G}^{fin}\), during the epidemic for an initially infected set \(\tilde{\mathcal{V}} \subseteq \mathcal{V}_G\), is represented by the union of their out-components in the EPN \(\tilde{G}\) :

\[
\mathcal{V}_{G}^{fin} = \bigcup_{v \in \tilde{\mathcal{V}}} \mathcal{V}_{\tilde{G}^{out}_v}.
\]

Clearly, the probability of a systemic outbreak event, defined as \(\vert\mathcal{V}_{G}^{fin}\vert\geq \alpha \vert\mathcal{V}_G\vert\), cannot be increased when an edge is removed from \(G\).
\end{proof}

\begin{remark} It can be shown that $Q(G)=\mathbb{P}[\mathcal{S}(G)]$ is not monotone for node splits. Indeed, this will happen under any contagion model, not only the SIR-model. Recall that node splits are related to diversification in the network. 
Thus, we observe another similarity to the VaR risk measure mentioned in Remark~\ref{rem:VaR} above, namely that diversification may be penalized in this risk model. 
\end{remark}

\begin{remark}
To analyze properties P1-P3, one must determine the probability distribution of outbreak sizes in the network under the given contagion model. This depends on the underlying model parameters, including the infection and recovery rates, as well as the distribution of the initial shock in the network. Moreover, closed forms of these distributions typically do not exist, necessitating approaches like Monte Carlo simulations. It is therefore a rather challenging exercise which is not in the scope of this paper.
\end{remark}

\begin{remark}(Alternative Interventions) 
In principle, in cases where an external risk model is applied, interventions other than those introduced in Section~\ref{sec:topinv} could be considered, namely interventions which target the parameters of the external model, such as the infection rate $\tau$ and the recovery rate $\gamma$ in the SIR model. Our framework does, in general, not assume such an external model. Therefore, we do not consider such interventions, even though they are conceivable in an extension of the decision-making framework. Such extensions are left to future research. Also note that, even if external risk models are applied, there are reasons to limit the interventions to the topological interventions we consider in this paper. 

Firstly, supervisors often have to consider various (external models for different) types of vulnerabilities, not just one. In that case it is natural to regulate the system's structure itself. 
The mentioned Zero Trust approach advocates for a fundamental robustification of the system architecture to become resistent to various threat scenarios.

Secondly, a standard approach to stress testing is to pre-calibrate the parameters of the external risk model to existing data, and then to leave  this model decribing a particular attack scenario unaltered while different network configurations (as a result of different protecting interventions) are stress-tested. In that case the interventions should not alter the parameters of the attack scenario.

Thirdly, in the case of the SIR model, for instance, the operational interpretation of altering parameters such as infection and recovery rate does not seem to be obvious. One would have to specify how real-world measures influence the parameters of the external model. 

Finally, inventions altering the parameters of the external risk model might not be enough. Consider the epidemic threshold of the continuous-time SIR model over an undirected network with countably infinite number of nodes, given by

\begin{equation}\label{eq:epi:thresh}
    \frac{\tau}{\tau + \gamma} \frac{\mathbb{E}[K_G^2 - K_G]}{\mathbb{E}[K_G]} > 1,
\end{equation}
 see \cite[Chapter 10]{Barabasi2016}, \cite[Chapter 6.2]{Kiss2017}, and \cite[Section V.B.4]{PastorSatorras2015}. Here $K_G$ is (a random variable with) the degree distribution of $G$.
If \eqref{eq:epi:thresh} is satisfied, then epidemic outbreaks are possible; otherwise, they are not. We can consider the infinite network a proxy for very large real-world networks, such as the internet, which often exhibit significant heterogeneity. These networks typically feature a few hubs and many sparsely connected nodes. These so-called \textit{scale-free networks} often follow a power-law degree distribution, given by $\mathbb{P}(K_G = k) \sim k^{-\alpha}$. As the network size increases, the second moment of this degree distribution diverges, ensuring that \eqref{eq:epi:thresh} is satisfied regardless of the values of the infection and recovery parameters $\tau$ and $\gamma$. Consequently, with increasing network size epidemic outbreaks remain possible despite changes to these parameters. However, topological interventions aim to modify the topology and, therefore, the degree distribution of the network, potentially avoiding the conditions set by \eqref{eq:epi:thresh}.
\end{remark}

\section{Outlook}\label{sec:outlook}

In the following we collect a few of open questions and challenges for future research:
\paragraph{Calibration to Real World Networks} While this study provides the theoretical foundation and also presents first stylized examples of decision-making frameworks for networks resilience (DMFNR), the next step is to apply such DMFNRs to managing real world networks, such as the cyber mapping mentioned in the motivating example in the introduction. In this context particularly suitable DMFNRs will be identified together with important parameters such as a desired minimal levels of network functionality, the border between acceptability and non-acceptability in practice, etc.  

\paragraph{Computational Challenges} { Computational problems stem primarily from the high dimensionality of the networks. Determining suitable or even optimal network manipulations may therefore be challenging.  
    A possible solution may be to develop new machine learning-based algorithms to identify good, not necessarily strictly optimal, ways to secure some given network.

\paragraph{Model Uncertainty} A major problem in practice is that the supervisor might only possess incomplete information about the network she has to secure. 

  \paragraph{Weighted Networks} As mentioned in Remark~\ref{rem:weighted}, network models may benefit from adding edge weights. For example, in models of financial systems edge weights represent the liabilities between financial institutions. Default contagion occurs when a financial institution's failure to meet its liabilities—--represented by its outgoing edges to other institutions—--triggers a cascade of defaults among connected institutions, see  \cite{Detering2019b}. 
  The extension of the presented framework to weighted graphs would, however, require a number of modifications, for instance as regards the interventions which now also target the edge weights.

}


    \paragraph{Area of Application} Even though, on the most basic level, this paper introduces decision-making frameworks for network resilience for any type of network, the later discussion is tailored to systemic cyber risk. It would be interesting to explore applications to other examples of critical infrastructure networks such as power grids, transportation systems, and production networks.

\clearpage

\renewcommand{\thesection}{A}

\section{Basic Notions for Graphs}\label{sec:graph:basics}

\subsection{Adjacency Matrix}\label{sec:adjacency}
For a network of size $N$, the nodes can be enumerated, say as $v_1,\cdots , v_N$. Given an enumeration of the vertex set $\mathcal{V}_G$, the \textit{network topology}, i.e., the spatial pattern of interconnections, of a network $G$ is described by its \textit{adjacency matrix} 
$A_G = (A_G (i,j))_{i,j=1,\cdots , N}\in\{ 0,1\}^{N\times N}$, where
\begin{equation*}
    A_G (i,j) = \begin{cases}
           1,& \text{if } (v_i,v_j)\in\mathcal{E}_G\\
           0,& \text{else}.
    \end{cases}
\end{equation*}
Note that $G$ is undirected if and only if $A_G$ is a symmetric matrix.

\subsection{Neighborhoods of Nodes and Node Degrees}\label{sec:neighbor}

For a graph $G=(\mathcal{V}_G, \mathcal{E}_G)$ and node $v\in\mathcal{V}_G$, we define $\mathcal{N}^{G, in}_v :=\{w\in\mathcal{V}_G\vert (w,v)\in\mathcal{E}_G\}$ and $\mathcal{N}^{G, out}_v :=\{w\in\mathcal{V}_G\vert (v,w)\in\mathcal{E}_G\}$ as the \textit{in-} and \textit{out-neighborhoods} of  node $v$. 
The {\em neighborhood} is $\mathcal{N}^{G}_v:=\mathcal{N}^{G, in}_v\cup \mathcal{N}^{G, out}_v$. A network $G$  is undirected if and only if $\mathcal{N}^{G, in}_v = \mathcal{N}^{G, out}_v = \mathcal{N}^G_v$ for all $v\in\mathcal{V}_G$. 

The \textit{incoming} or \textit{in-degree}, i.e., the number of edges arriving at node $v$, is defined as $k^{G, in}_v = \vert\mathcal{N}^{G, in}_v\vert$, and the \textit{outgoing} or \textit{out-degree} as $k^{G, out}_v = \vert\mathcal{N}^{G, out}_v\vert$. 
The \textit{(total) degree} $k_v^G$ of node $v$ in network $G$ is defined as 
\begin{equation*}
    k_v^G = \frac12(k^{G, in}_v+k^{G, out}_v ).
\end{equation*}
Typically, total degrees are studied in the context of undirected networks $G$ where we find that $k_v^G = k_v^{G, in} =k_v^{G, out}$ due to the fact that $\mathcal{N}^{G, out}_v=\mathcal{N}^{G, in}_v$ for all $v\in \mathcal{V}_G$.

\subsection{Walks, Paths, and Connectivity of Graphs}\label{sec:connectivity}
Given a graph $G$, a \textit{walk} of length $n$ from node $v$ to node $w$ in $G$ is an $(n+1)$-tuple of nodes $(v_1,v_2,\ldots,v_{n+1})$ 
such that $(v_i, v_{i+1})\in\mathcal{E}_G$ for all $1\leq i\leq n$ and $v_1=v$ and $v_{n+1}=w$. A \textit{path} from $v$ to $w$ is a walk where all nodes are distinct.

A graph $G$ is called \textit{strongly connected} if for each node pair $v, w\in\mathcal{V}_G$ there exist paths from $v$ to $w$ and from $w$ to $v$.  
A directed graph \(G\) is called \textit{weakly connected} if its undirected version, that is the graph $\tilde{G} = (\mathcal{V}_G, \mathcal{E}_{\tilde{G}})$ with $\mathcal{E}_{\tilde{G}} = \{ (v,w)\vert (v,w)\in\mathcal{E}_G \text{ or } (w,v)\in\mathcal{E}_G\}$, is strongly connected. A \textit{(weakly connected) component} $\tilde{G}$ of network $G$ is a (weakly) connected subgraph, and two components $G_1 = (\mathcal{V}_1, \mathcal{E}_1)$, $G_2 = (\mathcal{V}_1, \mathcal{E}_2)$ of $G$ are called \textit{disconnected} if there are no $v\in\mathcal{V}_1$ and $w\in\mathcal{V}_2$ such that there is a path from $v$ to $w$ or from $w$ to $v$ in $G$. 


\renewcommand{\thesection}{B}
\section{Undirected Networks}\label{app:undir}

We denote the set of undirected networks by \[\mathcal{G}^{ud}:=\{ G = (\mathcal{V}_G, \mathcal{E}_G)\in\mathcal{G}\vert (v,w)\in\mathcal{E}_G \Leftrightarrow (w,v)\in\mathcal{E}_G\}\subset \mathcal{G}.\] The pair $\{(v,w),(w,v)\}\in \mathcal{E}_G$ will be referred to as a \textit{full edge} of the undirected graph $G = (\mathcal{V}_G, \mathcal{E}_G)$.
In the following, we discuss how the presented theory in the main part of this paper would change if we restrict the domain to $\mathcal{G}^{ud}$ instead of $\mathcal{G}$. Indeed, most of the  definitions and results can easily be adapted to this class of networks and we leave this to the reader apart from a few comments and additional examples which are collected in this section. Regarding the Properties P2 and P3 presented in Section~\ref{sec:acceptance}, note that an undirected graph is weakly connected if and only if it is strongly connected, which in this case is only referred to as being connected. Moreover, for an arbitrary graph $G\in\mathcal{G}^{ud}$, any super-spreader $v^\ast\in\mathcal{V}_G$ is a star node. 
We thus obtain the following versions of P2 and P3 for undirected graphs: 
\begin{itemize}
    \item P2:  There exists $N_0\in \mathbb{N}$ such that there is a connected graph in $\mathcal{A}$ of size $N$ for all $N\geq N_0$.
    
    \item P3: Any network with a star node is not acceptable.
\end{itemize}

\subsection{Interventions for Undirected Networks}\label{sec:undir}
If we restrict the discussion to the class $\mathcal{G}^{ud}$, then any undirected graph should always be transformed into a new undirected graph. For convenience, as we view the undirected networks $\mathcal{G}^{ud}$ as a subset of the potentially directed networks $\mathcal{G}$,  the following elementary interventions on undirected graphs are defined on $\mathcal{G}$, that is $\kappa:\mathcal{G}\to\mathcal{G}$, but satisfy $\kappa(\mathcal{G}^{ud})\subseteq \mathcal{G}^{ud}$. Note that only those interventions from Section~\ref{sec:topinv} that come with a manipulation of network edges need to be adjusted. 

\begin{itemize}
    \item[$\mathcal{I}^{ud}_{e\_ del}$] \textit{Edge Deletion}: Consider a node tuple  $v,w\in\mathbb{V}$. We let
\begin{equation*}
    \kappa^{v,w}_{e\_ del} := \kappa^{(w,v)}_{e\_ del}\circ\kappa^{(v,w)}_{e\_ del}:
    G\mapsto 
    (\mathcal{V}_G, \mathcal{E}_G\setminus\{ (v,w), (w,v)\})
\end{equation*}
denote the deletion of the full edge between nodes $v$ and $w$. Clearly, $\kappa^{v,w}_{e\_ del}=\kappa^{w,v}_{e\_ del}$. We set $$\mathcal{I}^{ud}_{e\_ del}:=\{\kappa^{v,w}_{e\_ del}\mid v,w\in \mathbb{V}\}.$$
\item[$\mathcal{I}^{ud}_{e\_ add}$] \textit{Edge Addition}: The addition of an edge between nodes $v,w\in\mathbb{V}$ with $v\neq w$ is given by
 \[\kappa^{v,w}_{e\_ add} := \kappa^{(w,v)}_{e\_ add}\circ\kappa^{(v,w)}_{e\_ add}:  G\mapsto
    (\mathcal{V}_G, \mathcal{E}_G\cup (\{ (v,w), (w,v)\}\cap \mathcal{V}_G\times\mathcal{V}_G)),\] and we let $$\mathcal{I}^{ud}_{e\_ add}:=\{\kappa^{v,w}_{e\_ add} \mid v,w\in \mathbb{V}\}.$$
\item[$\mathcal{I}^{ud}_{shift}$] \textit{Edge Shift}: An existing full edge between $v$ and $w$ can be shifted to a full edge between $q$ and $r$ by 
$$ \kappa_{shift}^{\{v,w\}, \{q,r\}}:= \kappa_{shift}^{(w.v), (r,q)}\circ \kappa_{shift}^{(v,w), (q,r)}.$$
  We set $$\mathcal{I}^{ud}_{shift}:=\{\kappa_{shift}^{\{v,w\}, \{q,r\}}\mid v,w,q,r\in \mathbb{V}\}.$$
  
\item[$\mathcal{I}^{ud}_{split}$] \textit{Node Splitting}: For node splitting in undirected networks, we need to restrict to interventions from the set
$$ \mathcal{I}^{ud}_{split} := \{  \kappa_{split}^{\mathcal{L}, v, \tilde{v}}\vert v, \tilde{v}\in\mathcal{V}, \mathcal{L}\subset\mathbb{E}, (q,r)\in\mathcal{L}\Leftrightarrow (r,q)\in\mathcal{L}\}\subset \mathcal{I}_{split}.$$

\end{itemize}

\subsection{Examples}
As in the main part of this paper, we consider examples based on acceptance sets $\mathcal{A}\subset\mathcal{G}^{ud}$ of the form
\begin{equation}\label{eq:accsetundir}
\mathcal{A} = \{ G \in \mathcal{G}^{ud}\vert Q(G)\leq l_{|\mathcal{V}_G|}\}
\end{equation}
where $Q:\mathcal{G}^{ud}\to\mathbb{R}\cup \{-\infty, \infty\}$ and $l_N\in \mathbb{R}$ for all $N\in \mathbb{N}$. In principle, any $Q$ from the previous discussion on directed networks can also be applied here and the results from the directed case essentially also apply to the undirected case. Only some proofs or bounds $l_N$ need to be adjusted as we will see in the following. 

\subsubsection{DMFNR Based on the Average Total Degree} For undirected networks we consider the distribution of the total degree $k_v^G$, see Section~\ref{sec:neighbor}, which, of course, equals the in- and out-degree distributions. Let $K_G$ be a random variable, the law of which coincides with the total degree distribution. 
Similar to the directed case, choosing $Q(G) =\mathbb{E}\big[K_G\big]$ in \eqref{eq:accsetundir} yields a DMFNR but $\mathcal{A}$ does not satisfy P2 and P3 simultaneously. Indeed recall Proposition~\ref{prop:firstmom} and compare to Proposition~\ref{prop:firstmomundir} below. In this case the proof is based on \textit{tree graphs}:  a connected undirected acyclic graph $G^T\in\mathcal{G}^{ud}$ is called an \textit{undirected tree}.
\begin{figure}[h]
\begin{center}
		\begin{minipage}[t]{0.49\linewidth}
			\centering
	{\includegraphics[trim={2cm 2.5cm 2 2cm},clip,width=1.2\textwidth]{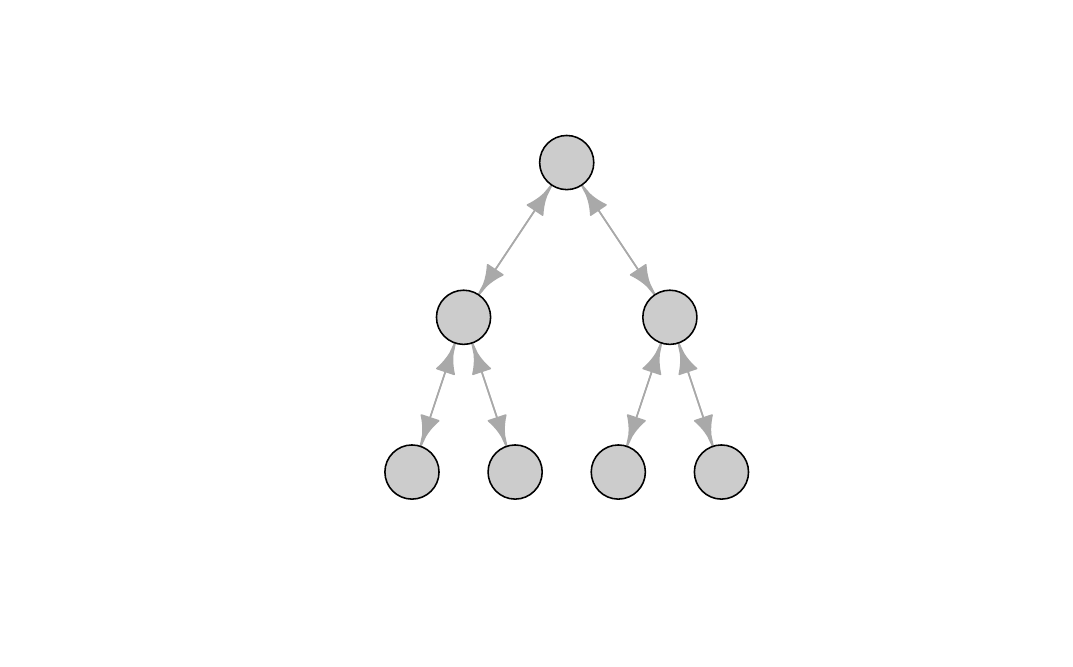}}
			
		\end{minipage}
    		\begin{minipage}[t]{0.49\linewidth}
			\centering
	{\includegraphics[trim={1cm 5cm 1cm 5cm},clip,width=0.55\textwidth]{./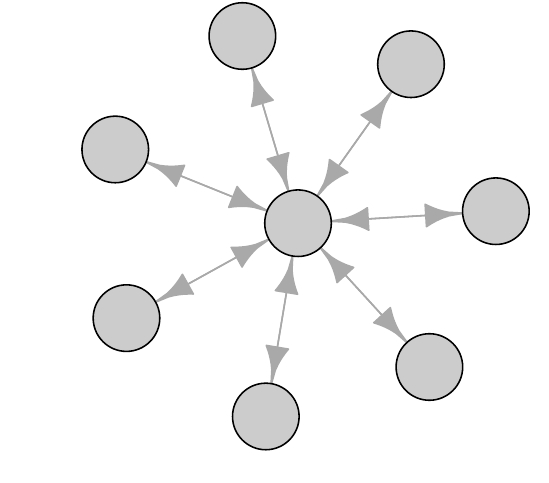}}
		\end{minipage}
		\end{center}
		\caption{An undirected tree graph (left) and undirected star graph (right) consisting of $N=7$  nodes. Note that both the undirected line graph from Figure \ref{fig:ring} and the undirected star graph are special cases of a tree since they do not contain any cycles.}
		\label{fig:treeline}
\end{figure}

\begin{proposition}\label{prop:firstmomundir}
Suppose that $\mathcal{A}$ is given by  \eqref{eq:accsetundir} with $Q(G) = \mathbb{E}\big[K_G\big]$. If P2 is satisfied, then P3 is violated, and vice versa.
\end{proposition}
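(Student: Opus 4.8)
The plan is to mirror the proof of Proposition~\ref{prop:firstmom}, replacing the directed star graph by the undirected star graph and recomputing the relevant threshold. First I would express $Q$ in terms of the number of full edges. For an undirected graph $G$ of size $N$ with $m$ full edges, the convention $k_v^G = |\mathcal{N}_v^G|$ together with the handshake identity $\sum_{v\in\mathcal{V}_G} k_v^G = 2m$ yields $Q(G) = \mathbb{E}\big[K_G\big] = 2m/N$. This is the undirected analogue of the identity $Q(G)=|\mathcal{E}_G|/|\mathcal{V}_G|$ used in the directed case, the factor $2$ reflecting that each full edge contributes to the degree of both endpoints.

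Next I would establish a sharp lower bound for connected graphs. Since a connected undirected graph on $N$ nodes contains a spanning tree, it has at least $N-1$ full edges (see \cite[Corollary 7 and 8]{Bollobas1998}), so $Q(G)\geq 2(N-1)/N$. The undirected star graph $G^\ast$ of size $N$ (Figure~\ref{fig:treeline}) is itself a tree, hence has exactly $N-1$ full edges and attains the bound $Q(G^\ast)=2(N-1)/N$. Crucially, $G^\ast$ carries a star node, namely its center, and therefore simultaneously realizes the minimal average degree among connected graphs of size $N$ and violates the (undirected) Property P3.

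Finally I would close the argument as in the directed setting. Suppose P2 holds, so that $\mathcal{A}$ contains a connected graph of size $N$ for every $N\geq N_0$; each such graph $G$ satisfies $2(N-1)/N\leq Q(G)\leq l_N$, forcing $l_N\geq 2(N-1)/N$ for all $N\geq N_0$. But then $Q(G^\ast)=2(N-1)/N\leq l_N$ gives $G^\ast\in\mathcal{A}$, and since $G^\ast$ has a star node, P3 fails. Conversely, if P3 holds, then no graph with a star node is acceptable, so in particular $G^\ast\notin\mathcal{A}$ for every $N$, forcing $l_N<2(N-1)/N$; then any connected graph $G$ of size $N$ obeys $Q(G)\geq 2(N-1)/N>l_N$ and lies outside $\mathcal{A}$, so P2 fails.

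The argument is essentially routine, and the only care needed is the combinatorial bookkeeping distinguishing full edges from the underlying directed edges, and the verification that the undirected star is a spanning-tree-minimal connected graph bearing a star node. Unlike the directed case, no separate treatment of super-spreaders versus star nodes is required: as noted in Appendix~\ref{app:undir}, in $\mathcal{G}^{ud}$ every super-spreader is automatically a star node, so the single extremal graph $G^\ast$ suffices to drive both implications.
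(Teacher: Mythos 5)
Your proof is correct and follows essentially the same route as the paper: both arguments reduce a connected undirected graph to its spanning tree, use that every tree on $N$ nodes has exactly $N-1$ full edges (so $Q(G)\geq 2(N-1)/N$ with the undirected star graph attaining this bound), and then run the two implications through the acceptance threshold $l_N$. The only cosmetic difference is that the paper invokes $\mathcal{I}_{e\_del}$-monotonicity of $Q$ to pass to the spanning tree, while you count edges directly via the handshake identity — the same bookkeeping in different clothing.
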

\begin{proof}
An undirected connected graph $G$ of size $N$ must contain a spanning tree $\tilde{G}$, see Corollary 7 in \cite{Bollobas1998}. Moreover, since $Q$ is $\mathcal{I}_{e\_ del}$-monotone, we have $Q(\tilde{G})\leq Q(G)$. However, $\tilde{G}$ has the same number of edges as any undirected star graph $G^\ast$ of the same size since both are undirected tree graphs, see \cite[Corollary 8]{Bollobas1998}, and therefore $Q(G)\geq Q(\tilde{G}) = Q(G^\ast) = 2(N-1)/N$. This proves the assertion.
\end{proof}

\subsubsection{DMFNR Based on the Second Moment of the Degree Distribution}
As in the directed case, we find that choosing $Q(G) =\mathbb{E}\big[ K_G^2\big]$ in \eqref{eq:accsetundir} is a suitable way to define acceptance sets $\mathcal{A}$ which satisfy properties P1-P4. For the proof, we need the following lemma: 

\begin{lemma}\label{lem:treeline}
Fix $N\in \mathbb{N}$. The minimal second moment $Q(G^T) =\mathbb{E}\big[K_{G^T}^2\big]$ among all undirected tree graphs $G^T\in\mathcal{G}^{ud}$ of size $N$ is attained by the undirected line graphs $G^L =(\mathcal{V}_{G^L}, \mathcal{E}_{G^L})$, where for $\mathcal{V}_{G^L} = \{ v_1,\cdots, v_N\}\subset\mathbb{V}$ we have $\mathcal{E}_{G^L} = \{ (v_1, v_2), (v_2, v_1),$\\ $\cdots ,(v_{N-1}, v_N), (v_N, v_{N-1})\}$.
\end{lemma}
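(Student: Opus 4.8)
The plan is to reduce the statement to a one-line inequality about integer degree sequences. First I would recall that for an undirected graph the quantity $k_v^G$ governing the law of $K_G$ is just the ordinary degree $d_v := |\mathcal{N}_v^G|$, since $k_v^{G, in} = k_v^{G, out} = |\mathcal{N}_v^G|$ here. Every undirected tree $G^T$ of size $N$ has exactly $N-1$ full edges, so its degree sequence satisfies $\sum_{v\in\mathcal{V}_{G^T}} d_v = 2(N-1)$, and connectedness forces $d_v \geq 1$ for all $v$ (for $N\geq 2$). Minimizing $\mathbb{E}[K_{G^T}^2] = \frac{1}{N}\sum_v d_v^2$ over trees is therefore bounded below by the minimum of $\sum_v d_v^2$ over all integer sequences $(d_v)_v$ with $d_v\geq 1$ and $\sum_v d_v = 2(N-1)$. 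For the lower bound I need no realizability theorem for degree sequences---only that every tree's sequence lies in this feasible set.

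The single substantive step is the following estimate. Writing $d_v = 1 + e_v$ with $e_v\in\mathbb{N}_0$, the constraint becomes $\sum_v e_v = N-2$, and expanding gives
\begin{equation*}
\sum_{v} d_v^2 = \sum_v (1+e_v)^2 = N + 2\sum_v e_v + \sum_v e_v^2 = 3N - 4 + \sum_v e_v^2 .
\end{equation*}
Since $e_v^2 \geq e_v$ for every nonnegative integer, with equality precisely when $e_v\in\{0,1\}$, I obtain $\sum_v e_v^2 \geq \sum_v e_v = N-2$, hence $\sum_v d_v^2 \geq 4N-6$, and equality holds if and only if every degree $d_v$ lies in $\{1,2\}$.

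Finally I would verify attainment: the undirected line graph $G^L$ has its two endpoints of degree $1$ and its $N-2$ interior nodes of degree $2$, so all its degrees lie in $\{1,2\}$ and it realizes the bound, giving $\mathbb{E}[K_{G^L}^2] = (4N-6)/N$. As $G^L$ is itself a tree and every tree obeys $\sum_v d_v^2 \geq 4N-6$, the line graphs attain the minimum, as claimed. The argument is essentially obstacle-free; the only points needing minor care are the identification of $k_v^G$ with the ordinary degree in the undirected setting, and the remark that a tree with all degrees in $\{1,2\}$ is necessarily a path (a connected graph of maximum degree $2$ is a path or a cycle, and a tree cannot be a cycle), which both pins down the equality case and shows the line graphs are the \emph{only} minimizers up to isomorphism.
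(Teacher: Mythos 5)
Your proof is correct, and it takes a genuinely different route from the paper. The paper proceeds by induction on $N$: it deletes a leaf $v_{N+1}$ from a tree of size $N+1$, applies the induction hypothesis to the resulting tree of size $N$ against a line graph sharing the leaf's neighbor $v_N$, and then re-attaches the leaf, tracking how the squared degrees $(k_{v_N}^{G^L}+1)^2$ versus $(k_{v_N}^{G^T})^2$ change under this surgery. Your argument instead relaxes the problem to a pure degree-sequence optimization: every tree of size $N$ has ordinary degree sequence satisfying $\sum_v d_v = 2(N-1)$ and $d_v \geq 1$, and the substitution $d_v = 1 + e_v$ together with $e_v^2 \geq e_v$ for nonnegative integers gives $\sum_v d_v^2 \geq 4N-6$, attained by the line graph. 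This is sound — the lower bound over the relaxed feasible set automatically bounds the minimum over trees, and no realizability theorem is needed in that direction. Your approach buys three things the paper's induction does not deliver directly: the explicit minimum value $4N-6$ (equivalently $\mathbb{E}[K_{G^L}^2] = 4 - 6/N$, which the paper recomputes separately in its Proposition on the second moment for undirected graphs), a clean characterization of the equality case (all degrees in $\{1,2\}$), and hence uniqueness of the minimizers up to isomorphism, since a tree with maximum degree $2$ is a path. The paper's induction, by contrast, stays entirely within the class of trees and requires no identification of $k_v^G$ with the combinatorial degree, but it is more laborious and yields only the extremality statement itself. The only minor points to make explicit in your write-up are the trivial case $N=1$ (where the unique tree is itself the line graph, so the degree bound $d_v\geq 1$ is not needed) and the identification $k_v^G = |\mathcal{N}_v^G|$ in the undirected setting, which you correctly flag.
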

\begin{proof}
We prove the result by induction. First, note that for $N=2$ and some arbitrary $v_1, v_2\in\mathbb{V}$, the only undirected tree $G^T$ is given by $G^T =(\{v_1, v_2\}, \{(v_1, v_2), (v_2, v_1)\})$, which is a line graph. Now, suppose that the statement holds for some $N\geq 2$, and consider a tree $G^T$ of size $N+1$. Then $G^T$ contains at least two nodes with a total degree of 1, called \textit{leaves}, see Corollary 9 in \cite{Bollobas1998}. Consider an enumeration of the nodes $\mathcal{V}_{G^T}$ such that $v_{N+1}$ is a leaf of $G^T$ with (only) neighbor $v_N$. Delete node $v_{N+1}$ such that we obtain the network $H =\kappa_{n\_ del}^{v_{N+1}}(G^T)$. Note that $H$ is a tree graph of size $N$. Now consider the line graph $G^L$ of the same size defined on the vertex set $\mathcal{V}_{G^T}\setminus\{v_{N+1}\}$ with leaf $v_N$. By induction hypothesis, we have
 \begin{equation*}
     \mathbb{E}\big[ K_{G^L}^2\big]\leq \mathbb{E}\big[ K_H^2\big],\quad \text{ and thus }\quad  \sum_{i=1}^{N}\big( k_{v_i}^{G^L}\big)^2\leq  \sum_{i=1}^{N}\big( k_{v_i}^H\big)^2.
 \end{equation*}
 Now, let $\tilde{G^L} = (\mathcal{V}_{G^T}, \mathcal{E}_{G^L}\cup \{ (v_N, v_{N+1}), (v_{N+1}, v_N)\})$, which is a line graph of size $N+1$. Since node $v_N$ is a leave in $G^L$, we find $1=k_{v_N}^{G^L}\leq k_{v_N}^H$. Moreover,  $k_{v_i}^{\tilde{G^L}} = k_{v_i}^{G^L}$, $k_{v_i}^{G^T} = k_{v_i}^H$ for $i=1,\cdots, N-1$, and we have $k_{v_N}^{\tilde{G^L}} = k_{v_N}^{G^L}+1=2$, $k_{v_N}^{G^T} = k_{v_N}^{H}+1\geq 2$, and $k_{v_{N+1}}^{G^T} = k_{v_{N+1}}^{\tilde{G^L}} =1$. In total, this yields
\begin{align*}
    \mathbb{E}\big[ K_{\tilde{G^L}}^2\big] &= \frac{1}{N+1} \sum_{i=1}^{N+1}\big( k_{v_i}^{\tilde{G^L}}\big)^2 = \frac{1}{N+1} \Big(  \sum_{i=1}^{N-1}\big( k_{v_i}^{G^L}\big)^2 + \big(k_{v_N}^{G^L} +1\big)^2 + 1^2\Big) \\
    &= \frac{1}{N+1} \Big(  \sum_{i=1}^{N}\big( k_{v_i}^{G^L}\big)^2 + 2k_{v_N}^{G^L} + 1 + 1\Big) = \frac{1}{N+1} \Big(  \sum_{i=1}^{N}\big( k_{v_i}^{G^L}\big)^2 + 3 + 1\Big)\\
    &\leq \frac{1}{N+1} \Big(  \sum_{i=1}^{N}\big( k_{v_i}^{H}\big)^2 + 3 + 1\Big) =   \frac{1}{N+1} \Big(  \sum_{i=1}^{N-1}\big( k_{v_i}^{H}\big)^2 + \big(k_{v_N}^{G^T}-1\big)^2 + 3 + 1\Big) \\
    &= \frac{1}{N+1} \Big(  \sum_{i=1}^{N}\big( k_{v_i}^{G^T}\big)^2 -2 k_{v_N}^{G^T} + 1 + 3 + 1\Big) \leq \frac{1}{N+1} \Big(  \sum_{i=1}^{N}\big( k_{v_i}^{G^T}\big)^2 + 1\Big) = \mathbb{E}\big[ K_{G^T}^2\big].
 \end{align*}
\end{proof}

\begin{proposition}\label{prop:secundir}
Suppose that $\mathcal{A}$ is given by \eqref{eq:accsetundir} with $Q(G) = \mathbb{E}\big[ (K_G^2\big]$.
\begin{enumerate}
 \item $Q$ is $\mathcal{I}$-monotone for any $\mathcal{I}\subset \mathcal{I}^{ud}_{e\_del}\cup \mathcal{I}^{ud}_{split}$. Hence $(\mathcal{A}, \mathcal{I}, \mathcal{C})$ is a DMFNR for any cost function for $(\mathcal{A}, \mathcal{I})$. 
\item $\mathcal{A}$ satisfies P1 if and only if $l_N\geq 0$ for all $N\geq 2$,
 \item $\mathcal{A}$ satisfies  P2 if and only if there is a $N_0$ such that $l_N\geq 4- 6/N$ for all $N\geq N_0$.
 \item $\mathcal{A}$ satisfies  P3 if and only if $l_N < N-1$ for all $N\geq 1$.
\end{enumerate}
\end{proposition}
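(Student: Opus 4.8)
The plan is to establish the four assertions in turn, leaning on the directed analogue for the monotonicity claim and on the tree lemma for the connectivity threshold. For (1), the key observation is that on $\mathcal{G}^{ud}$ every node satisfies $k_v^G = k_v^{G,in} = k_v^{G,out}$, so that $Q(G)=\mathbb{E}[K_G^2]$ coincides with $\mathbb{E}[(K_G^{out})^2]$ for all undirected $G$. Each generator of $\mathcal{I}^{ud}_{e\_ del}$ is a composition of two directed edge deletions and hence lies in $[\mathcal{I}_{e\_ del}]$, while $\mathcal{I}^{ud}_{split}\subset\mathcal{I}_{split}$; moreover all these interventions map $\mathcal{G}^{ud}$ into itself. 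Since $\mathbb{E}[(K^{out})^2]$ is $\mathcal{I}_{e\_ del}\cup\mathcal{I}_{split}$-monotone by Proposition~\ref{prop:secmomint}, and monotonicity extends to every strategy in the bracket class and makes $\mathcal{I}$ risk-reducing, I would conclude that $Q$ is $\mathcal{I}$-monotone for any $\mathcal{I}\subset\mathcal{I}^{ud}_{e\_ del}\cup\mathcal{I}^{ud}_{split}$ and that $(\mathcal{A},\mathcal{I},\mathcal{C})$ is then a DMFNR for every cost function $\mathcal{C}$ for $(\mathcal{A},\mathcal{I})$. (Alternatively one may redo the two elementary estimates in terms of total degrees: deleting a full edge lowers two degrees by one, and splitting a node uses $(a+b)^2\ge a^2+b^2$ together with the passage from $N$ to $N+1$ nodes, exactly as in Proposition~\ref{prop:secmomint}.)

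Claim (2) is immediate: the edgeless graph $(\mathcal{V},\emptyset)$ has all degrees zero, so $Q=0$, whence $(\mathcal{V},\emptyset)\in\mathcal{A}$ precisely when $l_{|\mathcal{V}|}\ge 0$; since P1 ranges over all $|\mathcal{V}|\ge 2$, it holds iff $l_N\ge 0$ for all $N\ge 2$. For (3) I would identify the exact minimum of $Q$ over connected graphs of size $N$. Any connected $G$ of size $N$ contains a spanning tree $T$ obtained from $G$ by full-edge deletions, so $Q(T)\le Q(G)$ by part (1); Lemma~\ref{lem:treeline} then gives $Q(G^L)\le Q(T)$ for the undirected line graph $G^L$ of size $N$, whose degree sequence $1,2,\dots,2,1$ yields $\sum_v (k_v^{G^L})^2 = 2\cdot 1 + (N-2)\cdot 4 = 4N-6$ and hence $Q(G^L)=4-6/N$. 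As $G^L$ is itself connected, $4-6/N$ is the exact minimum, so the "if and only if" for P2 follows in both directions (sufficiency by exhibiting $G^L$, necessity because any acceptable connected graph of size $N$ forces $l_N\ge 4-6/N$).

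For (4), recall that in the undirected setting a super-spreader is a star node connected to all others; deleting every full edge not incident to such a node turns any size-$N$ graph with a star node into the undirected star $G^\ast$, so $Q(G)\ge Q(G^\ast)$ by part (1). The star has one node of degree $N-1$ and $N-1$ leaves, giving $\sum_v (k_v^{G^\ast})^2=(N-1)^2+(N-1)=N(N-1)$ and $Q(G^\ast)=N-1$; thus a star-node graph of size $N$ is acceptable iff $l_N\ge N-1$, and P3 (no such graph acceptable, including the trivial single-node case $N=1$ where $Q=0=N-1$) holds iff $l_N<N-1$ for all $N\ge 1$. The main obstacle I anticipate is the exact minimization behind (3): one must justify that the line graph minimizes the second degree moment among all connected undirected graphs of a given size, which rests on combining edge-deletion monotonicity with Lemma~\ref{lem:treeline}. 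With that lemma available the remaining effort is the bookkeeping of the precise constants $4-6/N$ and $N-1$ and the two-directional arguments, rather than any genuinely new estimate.
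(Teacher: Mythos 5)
Your proof is correct and takes essentially the same route as the paper's: parts (1)--(2) lean on the directed results (Propositions~\ref{prop:secmomint} and \ref{prop:secout}) --- your reduction via $K_G = K_G^{out}$ on $\mathcal{G}^{ud}$ and the factorization of full-edge deletions into $[\mathcal{I}_{e\_ del}]$ is just a clean formalization of the paper's ``straightforward modification'' --- while (3) combines edge-deletion monotonicity, spanning trees, and Lemma~\ref{lem:treeline} with the line-graph value $4-6/N$, and (4) reduces to the undirected star with value $N-1$, exactly as in the paper. Your explicit two-directional arguments and the treatment of the $N=1$ case in (4) are slightly more detailed than the paper's own proof, but the substance is identical.
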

\begin{proof}
1. and 2. are straightforward modifications of the proof of the corresponding results in Propositions~\ref{prop:secmomint} and \ref{prop:secout}. \\
3.: Note that  since $Q(G)=\mathbb{E}\big[K_G^2\big]$ is $\mathcal{I}_{e\_ del}$-monotone according to 1., and every connected network contains a spanning tree, it suffices to determine the minimizer of $Q$ among all undirected tree graphs. The second moment of total degrees is minimized among all undirected tree graphs of a fixed size $N$ by the undirected line graphs, see Lemma \ref{lem:treeline}, and can be calculated for $N\geq 2$ as
 \begin{equation*}
     \mathbb{E}\big[ K_{L}^2\big] = \frac{1}{N} \big( 2\cdot 1 + (N-2)\cdot 4\big) = \frac{4N-6}{N} = 4-\frac{6}{N}.
 \end{equation*}
4.: For the undirected star graph $G^\ast$ of size $N$ we obtain
    \begin{equation*}
        \mathbb{E}\big[ (K_{G^\ast})^2\big] = \frac{1}{N} \Big( 1\cdot (N-1)^2 + (N-1)\cdot 1^2\Big) =\frac{1}{N} \big((N-1)^2 + (N-1)\big) = N-1.  
   \end{equation*}
\end{proof}

\subsubsection{DMFNR Based on Epidemic Threshold}\label{sec:epi:thresh}
Recall the epidemic threshold of the SIR model for undirected networks given in \eqref{eq:epi:thresh}. 
Rearranging the inequality shows that the ratio of second and first moment may be a promising candidate for the control of network contagion 
\begin{equation}\label{eq:epi:Q}Q(G)= \begin{cases} \frac{\mathbb{E}\big[ K_G^2\big]}{\mathbb{E}\big[ K_G\big]} & \mbox{if}\;  \mathbb{E}\big[ K_G\big]>0 \\ 0 & \mbox{if}\; \mathbb{E}\big[ K_G\big]=0. \end{cases} \end{equation}  
Note, however, that risk management with respect to this quantity is not completely compatible with edge deletions:

\begin{proposition}\label{prop:epi:nm}
    $Q$ as given in \eqref{eq:epi:Q} is not $\mathcal{I}^{ud}_{e\_ del}$-monotone.
\end{proposition}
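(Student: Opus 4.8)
The plan is to refute $\mathcal{I}^{ud}_{e\_del}$-monotonicity by exhibiting a single undirected graph $G$ together with one full-edge deletion $\kappa\in\mathcal{I}^{ud}_{e\_del}$ for which $Q(\kappa(G))>Q(G)$. The guiding intuition is that deleting a full edge lowers \emph{both} the numerator $\mathbb{E}[K_G^2]$ and the denominator $\mathbb{E}[K_G]$ of $Q$, so the quotient can go either way; it rises exactly when the relative drop of the denominator outweighs that of the numerator. I therefore expect to delete an edge joining two \emph{low-degree} nodes inside a graph whose ratio $Q$ is already large, the latter being forced by the presence of a high-degree hub.

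First I would record the exact effect of an edge deletion. Writing $S_1:=\sum_{v\in\mathcal{V}_G}k_v^G$ and $S_2:=\sum_{v\in\mathcal{V}_G}(k_v^G)^2$, so that $Q(G)=S_2/S_1$ whenever $S_1>0$, deleting the full edge between adjacent nodes $u,v$ of degrees $d_u,d_v$ sends $S_1\mapsto S_1-2$ and, since each of $u,v$ loses one unit of degree, $S_2\mapsto S_2-2(d_u+d_v-1)$. A short cross-multiplication (legitimate as $S_1>S_1-2>0$) then yields the clean criterion
\begin{equation*}
Q(\kappa(G))>Q(G)\quad\Longleftrightarrow\quad Q(G)>d_u+d_v-1.
\end{equation*}
This is the conceptual heart of the argument: monotonicity fails precisely when one can find an edge whose endpoint degrees are small compared with the global ratio $Q(G)$.

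Next I would construct a concrete witness satisfying this criterion. Take the undirected star on $N$ nodes $\{c,l_1,\dots,l_{N-1}\}$, with $c$ adjacent to every leaf, and add the single chord $\{l_1,l_2\}$. Then $d_c=N-1$, $d_{l_1}=d_{l_2}=2$, and $d_{l_i}=1$ for $i\geq 3$, giving $S_1=(N-1)+4+(N-3)=2N$ and $S_2=(N-1)^2+8+(N-3)=(N-1)^2+N+5$, hence $Q(G)=\big((N-1)^2+N+5\big)/(2N)$. Deleting the chord $\kappa^{l_1,l_2}_{e\_del}$ requires $Q(G)>d_{l_1}+d_{l_2}-1=3$, i.e.\ $(N-1)^2+N+5>6N$, which simplifies to $(N-1)(N-6)>0$ and thus holds for every $N\geq 7$. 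One then checks directly (e.g.\ for $N=7$: $Q(G)=48/14=24/7$, while after deletion all leaves have degree $1$ and the hub degree $6$, giving $Q(\kappa(G))=42/12=7/2>24/7$) that the ratio indeed strictly increases, completing the disproof.

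There is no genuine analytic obstacle here; the only real subtlety is conceptual, namely isolating the criterion $Q(G)>d_u+d_v-1$ and recognizing that it cannot be met for degenerate graphs (such as trees or regular graphs, where low degrees coincide with small $Q$), which is why a hub-plus-chord configuration is needed to make the counterexample work. I would present the degree bookkeeping and the single verifying instance $N=7$ explicitly, and state the general threshold $N\geq 7$ as a remark to show the phenomenon is robust rather than accidental.
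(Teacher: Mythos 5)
Your proof is correct and follows essentially the same route as the paper's: both compute how the first and second degree moments change under deletion of a full edge and reduce the question to the threshold criterion that $Q$ increases precisely when $Q(G) > k_u^G + k_v^G - 1$ (the paper states this in the equivalent form $k_u^G + k_v^G \geq Q(G)+1$ for non-increase, its Equation \eqref{eq:ratiocond}). The only difference is that the paper stops by asserting ``one easily constructs examples'' violating the criterion, whereas you supply an explicit witness (a star with a chord between two leaves, verified for $N=7$ and shown to work for all $N\geq 7$), which makes your write-up strictly more complete.
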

\begin{proof}
     Suppose we delete the edges $\{(v, w), (w, v)\}$ between two adjacent nodes $v, w\in\mathcal{V}_G$ in a network $G$ of size $N\geq 3$, and let $\tilde{G} = \kappa^{v, w}_{e\_ del}(G)$. Further suppose that $\tilde{G}$ does contain edges, so that $\mathbb{E}\big[K_{\tilde{G}}\big] , \mathbb{E}\big[K_{\tilde{G}}^2\big]>0 $ and thus $Q(\tilde{G})=\mathbb{E}\big[K_{\tilde{G}}^2\big]/\mathbb{E}\big[K_{\tilde{G}}\big]$. If $Q$ were $\mathcal{I}^{ud}_{e\_ del}$-monotone, then 
\begin{equation}\label{eq:eddelrat}
    \frac{\mathbb{E}\big[K_G^2\big]}{\mathbb{E}\big[ K_G\big]} -  \frac{\mathbb{E}\big[K_{\tilde{G}}^2\big]}{\mathbb{E}\big[K_{\tilde{G}}\big]} \geq 0, \quad \mbox{i.e.}\; \mathbb{E}\big[K_{\tilde{G}}\big] \mathbb{E}\big[K_G^2\big] - \mathbb{E}\big[K_G\big] \mathbb{E}\big[ K_{\tilde{G}}^2\big] \geq 0.
\end{equation}
We can express the moments of $\tilde{G}$ in terms of the moments of $G$ by
\begin{equation*}
     \mathbb{E}\big[K_{\tilde{G}}\big] = \mathbb{E}\big[ K_G\big] -\frac{2}{N},\qquad  \mathbb{E}\big[ K_{\tilde{G}}^2\big] = \mathbb{E}\big[ K_G^2\big] + \frac{2}{N} (1- (k_v^G +k_w^G )).
\end{equation*}
Therefore, 
\begin{align*}
    \mathbb{E}\big[K_{\tilde{G}}\big] \mathbb{E}\big[ K_G^2\big]- \mathbb{E}\big[ K_G\big] \mathbb{E}\big[ K_{\tilde{G}}^2\big] &= \Big(\mathbb{E}\big[ K_G\big] - \frac{2}{N}\Big) \mathbb{E}\big[ K_G^2\big] - \mathbb{E}\big[ K_G\big] \Big( \mathbb{E}\big[ K_G^2\big] + \frac{2}{N} \big( 1 - (k_v^G + k_w^G)\big)\Big) \\
    &= \frac{2}{N} \Big( -\mathbb{E}\big[ K_G^2\big] -\mathbb{E}\big[ K_G\big] + \mathbb{E}\big[ K_G\big] \big( k_v^G + k_w^G\big)\Big).
\end{align*}
Hence, \eqref{eq:eddelrat} is satisfied if and only if $\mathbb{E}\big[ K_G\big] \big(k_v^G + k_w^G\big) - \big(\mathbb{E}\big[ K_G^2\big] + \mathbb{E}\big[ K_G\big] \big)\geq 0$, i.e., when 
\begin{equation}\label{eq:ratiocond}
    \big (k_v^G + k_w^G\big) \geq \frac{\mathbb{E}\big[ K_G^2\big]}{\mathbb{E}\big[ K_G\big]} +1.
\end{equation} One easily constructs examples where \eqref{eq:ratiocond} is not satisfied.  
\end{proof}
For the management of risk under edge deletions in a given network $G$, we thus need to restrict ourselves to those edge deletions that satisfy \eqref{eq:ratiocond}, i.e., that target edges between nodes with sufficiently large degrees.  


\begin{proposition}\label{prop:epithreshundir}
Let $\mathcal{A}$ be given by \eqref{eq:accsetundir} and $Q$ in \eqref{eq:epi:Q}. Then 
\begin{enumerate}
\item $Q$ is $\mathcal{I}$-monotone for any $\mathcal{I}\subset\mathcal{I}^{ud}_{split}$, and $(\mathcal{A}, \mathcal{I}, \mathcal{C})$ is a DMFNR for any cost function for $(\mathcal{A}, \mathcal{I})$. 
\item $\mathcal{A}$ satisfies P1 if and only if $l_N\geq 0$ for all $N\geq 2$.
 \item $\mathcal{A}$ satisfies  P2 if and only if there is a $N_0$ such that $l_N\geq 2-\frac{1}{N-1}$ for all $N\geq N_0$.
 \item Suppose that $l_1<0$, $l_N < N/2$ for all $2\leq N\leq 6$ and that $l_N< \frac{4N-1}{N+1}$ for all $N\geq 7$. Then $\mathcal{A}$ satisfies  P3. 
\end{enumerate}
\end{proposition}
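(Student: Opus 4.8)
The plan is to treat the four assertions in increasing order of difficulty, exploiting throughout that the normalising factor $1/N$ cancels in the quotient, so that
\[
Q(G)=\frac{\sum_{v\in\mathcal V_G}(k_v^G)^2}{\sum_{v\in\mathcal V_G}k_v^G}\qquad\text{whenever } \mathcal E_G\neq\emptyset,
\]
with $k_v^G$ the total degree. This reformulation is what makes the arguments tractable, since it decouples the numerator $\sum_v(k_v^G)^2$ from the denominator $\sum_v k_v^G=|\mathcal E_G|$. For statement 1 I would observe that a node split $\kappa\in\mathcal I^{ud}_{split}$ leaves $\sum_v k_v^G=|\mathcal E_G|$ unchanged (full edges are only rewired) while replacing $(k_v^G)^2$ by $(k_v^{\tilde G})^2+(k_{\tilde v}^{\tilde G})^2\le(k_v^G)^2$, using $k_v^{\tilde G}+k_{\tilde v}^{\tilde G}=k_v^G$ exactly as in \eqref{eq:ndeg}. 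Thus the numerator does not increase and the denominator is constant, so $Q(\kappa(G))\le Q(G)$; this is $\mathcal I$-monotonicity, and the lemma following the definition of $\mathcal I$-monotone upgrades it to risk-reduction, giving the DMFNR. Statement 2 is immediate: an edgeless graph has $Q=0$ by the second branch of \eqref{eq:epi:Q}, so P1 holds iff $l_N\ge 0$ for all $N\ge2$.

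For statement 3 the key step is to minimise $Q$ over connected graphs of fixed size $N$. Writing $k_v^G=1+e_v$ with $e_v\ge0$ an integer (every vertex of a connected graph on $N\ge2$ nodes has degree at least one) and setting $E=\sum_v e_v=|\mathcal E_G|-N$, one obtains
\[
Q(G)=\frac{N+2E+\sum_v e_v^2}{N+E}.
\]
Since the $e_v$ are non-negative integers, $\sum_v e_v^2\ge\sum_v e_v=E$, so $Q(G)\ge\frac{N+3E}{N+E}$; this lower bound is increasing in $E$, and connectivity forces $|\mathcal E_G|\ge N-1$, i.e. $E\ge N-2$. Substituting $E=N-2$ gives $2-\frac{1}{N-1}$, attained precisely by the undirected line graph (degrees $1,1,2,\dots,2$, which realises both $E=N-2$ and $e_v\in\{0,1\}$). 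As line graphs are connected, P2 holds iff eventually $l_N\ge 2-\frac1{N-1}$.

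The real obstacle is statement 4. Unlike the other hub-control examples, one cannot reduce to the bare star graph, because by Proposition~\ref{prop:epi:nm} $Q$ is \emph{not} monotone under edge deletion: adding edges among the leaves of a star can strictly lower $Q$. The plan is therefore to minimise $Q$ directly over all size-$N$ graphs containing a vertex $v^\ast$ of degree $N-1$. Parametrising such a graph by its leaf subgraph $H$ on the remaining $N-1$ vertices, with $e$ the number of full edges of $H$ and $s=\sum_{\text{leaves}}(\deg_H)^2$, one computes
\[
Q=\frac{N(N-1)+4e+s}{2(N-1)+2e}.
\]
For fixed $e$ the quantity $s$ is minimised by making the leaf degrees as equal as possible, and the ensuing one-parameter optimisation over $e$ changes character exactly at $N=6$: for $2\le N\le6$ the quotient is non-decreasing in $e$, so the minimiser is $e=0$, i.e. the star graph, giving $Q=N/2$; for $N\ge7$ the quotient strictly decreases as leaf edges are added, so the minimiser is a star-plus-matching configuration. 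This is the step I expect to be hardest and the one most in need of careful verification: one must identify the optimal number of leaf edges subject to the integrality and graphicality constraints on $H$ and then compare the value of $Q$ at this true extremal graph against the stated threshold $\tfrac{4N-1}{N+1}$ for \emph{every} $N\ge7$. The case split in the statement originates precisely from the sign change at $N=6$, and since the minimiser is a near-perfect leaf matching (rather than a bare star), establishing that $l_N<\tfrac{4N-1}{N+1}$ genuinely excludes all star-node graphs — and not merely the sparse ones — is the delicate heart of the argument.
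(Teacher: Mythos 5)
Your arguments for statements 1 and 2 are correct and coincide with the paper's proof: the same cancellation of the normalising factors, the same use of \eqref{eq:ndeg} together with conservation of the total degree sum under splitting, and P1 read off from the second branch of \eqref{eq:epi:Q}.

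Statement 3 you prove by a genuinely different and shorter route. The paper argues by cases: if all degrees are at least $2$ the bound is immediate; if all degrees are at most $2$ the graph is a ring or a line; otherwise it repeatedly shifts a full edge from a vertex of degree $\geq 3$ to a vertex of degree $1$, checking that connectivity survives and that $\mathbb{E}[K^2]$ does not increase (via $x^2+y^2\geq (x-1)^2+(y+1)^2$), until one of the first two cases is reached. Your substitution $k_v=1+e_v$, the integrality estimate $\sum_v e_v^2\geq\sum_v e_v$, the monotonicity of $(N+3E)/(N+E)$ in $E$, and the connectivity bound $E\geq N-2$ reach the same sharp constant $2-\tfrac{1}{N-1}$ in a few lines, with the line graph visibly extremal and no surgery on the graph. (One unit slip: in the paper's directed-edge count, connectivity gives $|\mathcal{E}_G|\geq 2(N-1)$, which is exactly what yields $E\geq N-2$.) This is cleaner than the paper's argument.

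Statement 4 is where the genuine gap lies, and it is twofold. First, your structural claim is wrong: $Q=\bigl(N(N-1)+4e+s\bigr)/\bigl(2(N-1)+2e\bigr)$ is not monotone decreasing in $e$ for $N\geq 7$. For a $d$-regular leaf graph it equals $d+N/(d+2)$, which decreases until $d\approx\sqrt{N}-2$ and then increases; for large $N$ the minimiser is a nearly $(\sqrt{N}-2)$-regular leaf graph with $Q\approx 2\sqrt{N}-2$, not a star-plus-matching. Second, and fatally, the comparison you postpone actually fails at $N=7$: the star plus a perfect matching on the six leaves has degree sequence $(6,2,2,2,2,2,2)$, so
\[
Q=\frac{36+6\cdot 4}{6+6\cdot 2}=\frac{60}{18}=\frac{10}{3}<\frac{27}{8}=\frac{4\cdot 7-1}{7+1},
\]
and any $l_7\in\bigl[\tfrac{10}{3},\tfrac{27}{8}\bigr)$ satisfies the hypothesis of statement 4 while leaving this star-node graph acceptable, violating P3. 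So the statement as printed cannot be proved by any completion of your plan. The paper does not see this because its proof follows a different, cruder route: delete the star node, write $Q(G)=\bigl(N+2\mathbb{E}[K_{\tilde{G}}]+\mathbb{E}[K_{\tilde{G}}^2]\bigr)/\bigl(2+\mathbb{E}[K_{\tilde{G}}]\bigr)$, bound $\mathbb{E}[K_{\tilde{G}}^2]\geq\mathbb{E}[K_{\tilde{G}}]$, and minimise $f(x)=(N+3x)/(2+x)$ over the admissible range of $\mathbb{E}[K_{\tilde{G}}]$; but it then misevaluates the minimum, since $f(N-1)=\tfrac{4N-3}{N+1}$, not $\tfrac{4N-1}{N+1}$. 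With the threshold corrected to $\tfrac{4N-3}{N+1}$ (note $\tfrac{25}{8}<\tfrac{10}{3}$ at $N=7$) the paper's argument, and hence statement 4, is sound; your direct-minimisation programme would then be the way to compute the sharpest admissible thresholds, but it must be measured against the corrected constant rather than the printed one.
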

\begin{proof}
    \begin{enumerate}
        \item Let $v\in\mathcal{V}_G$ be a node of the network $G$ which is split into $v$ and $\tilde{v}$ with a resulting network $\tilde{G}$. Again, utilizing \eqref{eq:ndeg} and $(N+1) \mathbb{E}\big[ K_{\tilde{G}}\big] = N \mathbb{E}\big[ K_G\big]$, we see that
    \begin{equation*}
     \frac{\mathbb{E}\big[ K_{\tilde{G}}^2\big]}{\mathbb{E}\big[ K_{\tilde{G}}\big]} =  \frac{\frac{1}{N+1} \Big( \sum_{w\neq v, \tilde{v}} (k_{w}^{\tilde{G}})^2 + (k_{v}^{\tilde{G}})^2 + (k_{\tilde{v}}^{\tilde{G}})^2\Big)}{ \frac{1}{N+1} \Big( \sum_{w\neq v, \tilde{v}} k_{w}^{\tilde{G}} + k_{v}^{\tilde{G}} + k_{\tilde{v}}^{\tilde{G}}\Big)}   \leq \frac{\sum_{w\in\mathcal{V}_G} \big(k_w^G\big)^2}{\sum_{w\in\mathcal{V}_G} k_w^G} = \frac{\mathbb{E}\big[ K_G^2\big]}{\mathbb{E}\big[ K_G\big]}.
    \end{equation*}
    \item is obvious.
    \item For a  undirected line graph $G^L$ of size $N\geq 2$, we obtain $$Q(G^L)=\frac{\mathbb{E}\big[K_{G^L}^2\big]}{\mathbb{E}\big[ K_{G^L}\big]} = \frac{4-\frac6N}{2-\frac{2}{N}} = \frac{2N-3}{N-1}.$$ We show that for any connected graph $H\in \mathcal{G}^{ud}$ of size $N\geq 2$ we have \begin{equation}\label{eq:help:epi}Q(H)\geq \frac{2N-3}{N-1}= 2-\frac{1}{N-1}.\end{equation} To this end, consider the following three cases: If $k_v^H\geq 2$ for all $v\in \mathcal{V}_H$, it follows that $(k_v^H)^2\geq 2k_v^H$ for all $v\in \mathcal{V}_H$ and thus $$\mathbb{E}\big[K_{H}^2\big]\geq 2 \mathbb{E}\big[K_{H}\big]\geq \left(2-\frac{1}{N-1}\right) \mathbb{E}\big[K_{H}\big],$$ so \eqref{eq:help:epi} holds. Suppose that $k_v^H\leq 2$ for all $v\in \mathcal{V}_H$. Then $H$ is either a undirected ring graph (i.e.\ $k_v^H= 2$ for all $v\in \mathcal{V}_H$) or an undirected line graph. In case of the undirected ring graph choose arbitrary $v,w\in \mathcal{V}_H$ such that $(v,w)\in \mathcal{E}_H$ (and thus also $(w,v)\in \mathcal{E}_H$). Note that $v,w$ satisfy condition \eqref{eq:ratiocond} in the proof of Proposition~\ref{prop:epi:nm}, so that deleting the edges $(v,w)$ and $(w,v)$ decreases $Q$. Notice that after the deletion of those edges we are left with an undirected line graph. As a last case, suppose that there exists $v,w\in \mathcal{V}_H$ such that $k_v^H\geq 3$ and $k_w^H=1$. Choose a neighbor $s\in \mathcal{V}_H$ of $v$ such that 
    \begin{itemize}
      \item $s\neq w$,
      \item $s$ is not a neighbor of $w$,
      \item there is a path from $v$ to $w$ not passing through the edge $(v,s)$.
    \end{itemize}
    This is possible because $k_v^H\geq 3$. Indeed, if $w$ happens to be a neighbor of $v$, then choose as $s$ any of the other neighbors of $v$. If $w$ is not a neighbor of $v$ and there is a neighbor $u$ of $v$ such that any path from $u$ to $w$ passes through the node $v$, then let $s=u$. Finally, if $w$ is not a neighbor of $v$ and all neighbors $u$ of $v$ allow for a path from $u$ to $w$ which does not pass through $v$, then at most one of those neighbors can be a neighbor of $w$ ($k_w^H=1$), so let $s$ be one of the other neighbors. Now let $\tilde H$ denote the graph obtained from $H$ by removing the edges $(s,v)$ and $(v,s)$ and adding the edges $(s,w)$ and $(w,s)$. Note that $\tilde H$ is connected, because any path through $(v,s)$ or $(s,v)$ can be redirected to a path passing through $w$. Then $\mathbb{E}\big[K_{H}\big]=\mathbb{E}\big[K_{\tilde H}\big]$ since we did not alter the total number of edges. However, \begin{eqnarray*}\mathbb{E}\big[K_{\tilde H}^2\big] &= &\frac1N \left(\sum_{\substack{u\in \mathcal{V}_H\\ u\neq v,w}}(k_u^H)^2 + (k_v^H-1)^2 + (k_w^H+1)^2\right) \\ &\leq &\frac1N \left(\sum_{\substack{u\in \mathcal{V}_H\\ u\neq v,w}}(k_u^H)^2 + (k_v^H)^2 + (k_w^H)^2\right) \;= \; \mathbb{E}\big[K_{H}^2\big]\end{eqnarray*}
    since $x^2+y^2\geq (x-1)^2 + (y+1)^2$ whenever $x-y\geq 1$. 
    Consequently, $Q(\tilde H)\leq Q(H)$. If $\tilde H$ falls under one of the first two cases, the assertion is proved. Otherwise, $\tilde H$ itself falls under the third case and can again be altered accordingly, with decreasing $Q$, until we finally satisfy the conditions of one of the first two cases.  
\item Suppose that $G\in \mathcal{G}^{ud}$ has $N\geq 2$ nodes and contains a star node $v^\ast\in \mathcal{V}_G$. Let $\tilde G=\kappa_{n\_ del}^{v^\ast}(G)$. We estimate 
\begin{equation*}
Q(G) = \frac{\mathbb{E}\left[ K_{G}^2\right]}{\mathbb{E}\left[ K_{G}\right]}=  \frac{N+ 2\mathbb{E}\left[ K_{\tilde{G}}\right]+ \mathbb{E}\left[ K_{\tilde{G}}^2\right]}{2+\mathbb{E}\left[ K_{\tilde{G}}\right]} \geq  \frac{N+ 3\mathbb{E}\left[ K_{\tilde{G}}\right]}{2+\mathbb{E}\left[ K_{\tilde{G}}\right]}
\end{equation*}
where we used that $\mathbb{E}\left[ K_{H}^2\right]\geq \mathbb{E}\left[ K_{H}\right]$ for any $H\in \mathcal{G}^{ud}$ since $K_H$ is non-negative integer valued. Note that the function $f(x)=\frac{N+3x}{2+x}$, $x>-2$, non-decreasing for $N\leq 6$ and decreasing for $N\geq 7$. Also note that $\mathbb{E}\left[ K_{\tilde{G}}\right]$ ranges between $0$ and $N-1$. Hence, for $N\leq 6$ we obtain that $Q(G)\geq N/2$ and for $N\geq 7$ we deduce that $Q(G)\geq (4N-1)/(N+1)$. 

\end{enumerate}
\end{proof}

Note that $(4N-1)/(N+1)$ is increasing in $N$ and larger than $3$ for $N\geq 7$, so there are sequences $(l_N)_{N\in\mathbb{N}}$ simultaneously satisfying the constraints given in 3.\ and 4.\ of Proposition~\ref{prop:maxtotdegreeundir}, and, of course, also 2.

\subsubsection{DMFNR Based on Degree Centrality}
Let $\mathfrak{C}^{deg}(v, G) = k^{G}_v$ for $v\in \mathcal{V}_G$ (and $\mathfrak{C}^{deg}(v, G) = \emptyset$ if $v\notin \mathcal{V}_G$). 
\begin{proposition}\label{prop:maxtotdegreeundir}
Consider a set $\mathcal{A}\subset\mathcal{G}^{ud}$ as in \eqref{eq:accsetundir} with $Q(G) = \max_{v\in\mathcal{V}_G} \mathfrak{C}^{deg} (v, G)$. Then 
\begin{enumerate}
\item $Q$ is $\mathcal{I}$-monotone for any $\mathcal{I}\subset \mathcal{I}^{ud}_{e\_del}\cup \mathcal{I}^{ud}_{split}$. Hence $(\mathcal{A}, \mathcal{I}, \mathcal{C})$ is a DMFNR for any cost function for $(\mathcal{A}, \mathcal{I})$. 
\item $\mathcal{A}$ satisfies P1 if and only if $l_N\geq 0$ for all $N\geq 2$,
 \item $\mathcal{A}$ satisfies  P2 if and only if there is a $N_0$ such that $l_N\geq 2$ for all $N\geq N_0$.
 \item $\mathcal{A}$ satisfies  P3 if and only if $l_N < N-1$ for all $N\geq 1$.
\end{enumerate}
\end{proposition}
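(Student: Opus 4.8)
The plan is to mirror the arguments used for the directed counterparts in Propositions~\ref{prop:degmon} and \ref{prop:outindegree}, adapting them to the total degree $\mathfrak{C}^{deg}(v,G)=k_v^G$ on $\mathcal{G}^{ud}$, and to exploit the same extremal graphs (line and star graphs) that already appear in Propositions~\ref{prop:firstmomundir} and \ref{prop:secundir}. For part~1 I would first note that deleting a full edge $\{(v,w),(w,v)\}$ lowers $k_v^G$ and $k_w^G$ each by one and leaves every other degree unchanged, so $Q$ cannot increase. For a split $\kappa_{split}^{\mathcal{L},v,\tilde v}\in\mathcal{I}^{ud}_{split}$ the full edges incident to the split node are redistributed, giving $k_v^{\kappa(G)}+k_{\tilde v}^{\kappa(G)}=k_v^G$ with all remaining degrees untouched and no new edge between $v$ and $\tilde v$ created; since both summands are at most $k_v^G$, again $Q(\kappa(G))\le Q(G)$. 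Hence $Q$ is $\mathcal{I}$-monotone, and the lemma following the definition of $\mathcal{I}$-monotonicity yields that $\mathcal{I}$ is risk-reducing for $\mathcal{A}$, so $(\mathcal{A},\mathcal{I},\mathcal{C})$ is a DMFNR for any admissible cost function $\mathcal{C}$.

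For part~2 I would observe that every node of an edgeless graph has degree $0$, hence $Q\big((\mathcal{V},\emptyset)\big)=0$; the edgeless graph on $N\ge 2$ nodes is acceptable precisely when $0\le l_N$, giving P1 $\Leftrightarrow$ $l_N\ge 0$ for all $N\ge 2$. The substance lies in part~3, where I would identify the minimal possible maximal degree of a connected undirected graph. Since a connected graph on $N$ nodes has at least $N-1$ full edges, its degree sum is at least $2(N-1)$; for $N\ge 3$ a maximal degree of $1$ would force at most $\lfloor N/2\rfloor<N-1$ edges, a contradiction, so every connected graph with $N\ge 3$ nodes satisfies $Q(G)\ge 2$. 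Conversely the undirected line graph of size $N$ attains $Q=2$. Thus for $N\ge 3$ the set $\mathcal{A}$ contains a connected graph of size $N$ if and only if $l_N\ge 2$, and P2 follows by taking the threshold $N_0\ge 3$ (the sizes $N\le 2$ are irrelevant for P2, which only constrains large $N$).

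For part~4 I would use that a star node $v^\ast$ of $G\in\mathcal{G}^{ud}$ is adjacent to all other nodes, so $k_{v^\ast}^G=N-1$ and hence $Q(G)\ge N-1$ with $N=|\mathcal{V}_G|$. Consequently, if $l_N<N-1$ for all $N$, then no graph containing a star node can be acceptable, so P3 holds. For the converse I would invoke the undirected star graph $G^\ast$ of size $N$, which contains a star node and satisfies $Q(G^\ast)=N-1$ (degree $N-1$ at the centre, degree $1$ at each leaf); P3 forces $G^\ast$ to be unacceptable, i.e.\ $N-1>l_N$, for every $N\ge 1$. This includes the degenerate case $N=1$, where the isolated node is itself a star node of degree $0=N-1$, so that exclusion requires $l_1<0$.

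The only mildly nontrivial point is the counting argument in part~3 establishing that the minimal maximal degree of a connected graph on $N\ge 3$ vertices is exactly $2$; all other steps reduce to evaluating $Q$ on the edgeless, line, and star graphs together with the $\mathcal{I}^{ud}_{e\_del}$-monotonicity from part~1, which guarantees it suffices to test the edge-minimal representatives (star and line graphs) of the relevant graph classes.
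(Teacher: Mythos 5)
Your proposal is correct and follows essentially the same route as the paper: monotonicity of the maximal total degree under full-edge deletions and node splits for part 1, the edgeless graph for part 2, the lower bound $Q(G)\geq 2$ for connected graphs of size $N\geq 3$ attained by undirected line graphs for part 3, and the star node degree $N-1$ (including the degenerate case $N=1$) for part 4. The only cosmetic difference is in part 3, where you establish the bound by a direct handshake/edge-counting argument while the paper invokes spanning trees and the fact that trees of size $N\geq 3$ contain a node of degree at least $2$; both yield the same conclusion.
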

\begin{proof}
  1. and 2. are straightforward modifications of the corresponding results in Propositions \ref{prop:degmon} and \ref{prop:outindegree}. \\
3.: Note that any undirected tree graph of size $N\geq 3$ contains at least one node with a total degree of at least 2, and as before the tree graphs represent the connected graphs with minimal $Q$ in $\mathcal{G}^{ud}$ for a given network size $N$ since $Q(G)$ is $\mathcal{I}_{e\_del}$-monotone. This shows the 'only if'-part, and for the 'if'-part consider undirected line graphs. \\
    4.:  For a star node $v^\ast$ in an undirected star graph $G^\ast$ we obtain $$k_{v^\ast}^{G^\ast} = \frac{1}{2} \big(k_{v^\ast}^{G^\ast, in} + k_{v^\ast}^{G^\ast, out}\big) = \frac{1}{2} (\vert\mathcal{V}_G\vert-1 + \vert\mathcal{V}_G\vert-1) = \vert\mathcal{V}_G\vert-1.$$ 
\end{proof}

\subsubsection{DMFNR Based on Closeness Centrality}
Consider $Q(G) = \max_{v\in\mathcal{V}_G} \mathfrak{C}^{close}(v, G)$ where  $$\mathfrak{C}^{close}(v, G) := \frac{1}{2} \big(\mathfrak{C}^{close}_{in}(v, G) + \mathfrak{C}^{close}_{out}(v, G)\big).$$

\begin{proposition}\label{prop:closenessundir}
Consider a set $\mathcal{A}\subset\mathcal{G}^{ud}$ as in \eqref{eq:accsetundir} with $Q(G) = \max_{v\in\mathcal{V}_G}\mathfrak{C}^{close}(v, G)$. Then
\begin{enumerate}
\item $Q$ is $\mathcal{I}$-monotone for any $\mathcal{I}\subset \mathcal{I}^{ud}_{e\_del}\cup \mathcal{I}^{ud}_{s\_iso}$. Hence $(\mathcal{A}, \mathcal{I}, \mathcal{C})$ is a DMFNR for any cost function for $(\mathcal{A}, \mathcal{I})$. 
\item $\mathcal{A}$ satisfies P1 if and only if $l_N\geq 0$ for all $N\geq 2$,
 \item $\mathcal{A}$ satisfies P2 if and only if there is a $N_0$ such that $l_N\geq (1/(N-1)) \sum_{j=1}^{(N-1)/2} (2/j)$ for all $N\geq N_0$ which are odd, and $l_N\geq (1/(N-1)) \big(\sum_{j=1}^{(N-2)/2} (2/j) + 2/N\big)$ for all $N\geq N_0$ which are even.
\item $\mathcal{A}$ satisfies  P3 if and only if $l_1<0$ and $l_N < 1$ for all $N\geq 2$.
\end{enumerate}
\end{proposition}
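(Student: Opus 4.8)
The plan is to treat the four assertions separately, reusing the monotonicity machinery from the directed case and reducing the connectivity bound to an extremal problem over trees. First recall that for undirected $G$ the reciprocal distances are symmetric, so $\mathfrak{C}^{close}_{in}=\mathfrak{C}^{close}_{out}=\mathfrak{C}^{close}$. For assertion~1 I would argue exactly as in the proof of C2 in Proposition~\ref{prop:edelshort} (see also Proposition~\ref{prop:closeness}): deleting a full edge can only increase the shortest path lengths $l^G_{vw}$ (possibly to $\infty$), hence it can only decrease each term $1/l^G_{vw}$, each closeness $\mathfrak{C}^{close}(v,G)$, and therefore the maximum $Q(G)$. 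Since every subgraph isolation $\kappa^{\mathcal{W}}_{iso}$ is a finite composition of full-edge deletions, the same holds for it, so $Q$ is $\mathcal{I}$-monotone for $\mathcal{I}\subset\mathcal{I}^{ud}_{e\_del}\cup\mathcal{I}^{ud}_{s\_iso}$; the DMFNR claim then follows since $\mathcal{I}$-monotonicity implies that $\mathcal{I}$ is risk-reducing for $\mathcal{A}$.

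For assertions~2 and~4 the key observation is that in an undirected graph one always has $\mathfrak{C}^{close}(v,G)\leq 1$, with equality if and only if $l^G_{vw}=1$ for every $w\neq v$, i.e.\ if and only if $v$ is a star node. In an edgeless graph of size $N\geq 2$ all distances are infinite, so $Q=0$; hence P1 holds iff all such graphs are acceptable iff $l_N\geq 0$ for $N\geq 2$. For P3, any graph of size $N\geq 2$ containing a star node has $Q=1$, while the single-node graph is (vacuously) a star node with $Q=0$; thus excluding every star-node graph is equivalent to $l_N<1$ for $N\geq 2$ and $l_1<0$, which is assertion~4 (necessity is witnessed by the undirected star graph and by the single node, sufficiency is immediate from $Q\leq 1$).

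The substantial part is assertion~3. Using $\mathcal{I}^{ud}_{e\_del}$-monotonicity and the fact that every connected graph contains a spanning tree, it suffices to minimize $Q$ over trees of size $N$ and to show the minimum equals the stated threshold $c_N$ (namely $\tfrac1{N-1}$ times the displayed harmonic sum). The path $P_N$ attains $c_N$: writing $f_T(v)=\sum_{w\neq v}1/l^T_{vw}$, its central vertex has two nodes at each distance $d\leq (N-1)/2$ for odd $N$, giving $f=\sum_d 2/d$ and hence closeness $c_N$ after normalization, with the even case being the analogous split into branches of sizes $\lfloor N/2\rfloor$ and $\lceil N/2\rceil-1$. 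The core claim is that no tree does better, i.e.\ $\max_v f_T(v)\geq (N-1)c_N$ for every tree $T$. I would prove this by evaluating $f_T$ at a centroid $c$ of $T$: removing $c$ splits $T$ into subtrees of sizes $s_1,\dots,s_m$ with $\sum_i s_i=N-1$ and, by the centroid property, $s_i\leq\lfloor N/2\rfloor$. Within each subtree the reciprocal-distance contribution $\sum_{w\in T_i}1/l^T_{cw}$ is minimized when $T_i$ is a path hanging off $c$, giving $\sum_{j=1}^{s_i}1/j$; this follows from a ball-size comparison, since in any rooted tree the number of nodes within distance $d$ of the root is at least $\min(d+1,s_i)$, attained by the path, and reciprocal distances weight small distances more heavily. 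Summing, $f_T(c)\geq\sum_i\sum_{j=1}^{s_i}1/j$, and minimizing the right-hand side under $\sum_i s_i=N-1$ and $s_i\leq\lfloor N/2\rfloor$ reduces—by subadditivity and concavity of the partial harmonic sums—to exactly two parts of sizes $\lfloor N/2\rfloor$ and $N-1-\lfloor N/2\rfloor$, whose value is $(N-1)c_N$ and which corresponds to the path. The equivalence then follows: if P2 holds, a connected acceptable graph of size $N$ forces $l_N\geq c_N$ for all large $N$, while conversely $l_N\geq c_N$ makes the connected graph $P_N$ acceptable.

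The main obstacle is precisely this extremal-tree statement, and within it the two minimization steps: showing that the path subtree minimizes the reciprocal-distance contribution at the centroid (the ball-size/stochastic-dominance comparison against the profile $\min(d+1,s_i)$) and that the resulting sum of partial harmonic numbers is minimized by the most unbalanced admissible split. Both become clean once set up, but they carry the real work; note in particular that the centroid constraint $s_i\leq\lfloor N/2\rfloor$ is exactly what pins the optimum to the balanced two-branch path rather than to a more star-like tree, and this is where the stated threshold $c_N$ comes from.
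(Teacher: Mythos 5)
Your proposal is correct, and its treatment of assertion 3 follows a genuinely different route from the paper's. Parts 1, 2 and 4 coincide with the paper's arguments (edge deletions only lengthen shortest paths and isolation is a composition of full-edge deletions; edgeless graphs have $Q=0$; and $Q\leq 1$ with equality exactly at star nodes, witnessed by the undirected star graph and the single node). For part 3, both proofs reduce to trees via spanning trees and $\mathcal{I}^{ud}_{e\_ del}$-monotonicity, but the paper evaluates closeness at the \emph{midpoint of a diameter path}: it shows every vertex lies within half the diameter of that midpoint, counts two on-path vertices at each distance $j$, bounds each off-path vertex's contribution below by $2/l^{G^T}_{vw}$, and finishes with separate parity cases. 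You instead evaluate at a \emph{centroid}: the subtrees hanging off it have sizes $s_i\leq\lfloor N/2\rfloor$ (Jordan's centroid theorem), each subtree's reciprocal-distance contribution is minimized by a hanging path (your ball-size comparison, made rigorous by Abel summation since $1/d$ is decreasing in $d$), and the resulting sum of partial harmonic numbers $\sum_i H_{s_i}$, with $H_m:=\sum_{j=1}^m 1/j$, is minimized under the cap by the split $\bigl(\lfloor N/2\rfloor,\,N-1-\lfloor N/2\rfloor\bigr)$. For this last reduction, note that the cleanest justification is majorization with zero-padding (using $H_0=0$ and Schur-concavity of $\sum_i H_{s_i}$), since subadditive merging alone can be blocked by the cap; your appeal to ``subadditivity and concavity'' covers this, but the majorization phrasing is the one that closes it. Your route treats both parities uniformly and explains structurally why the balanced two-branch path is extremal (the centroid cap is exactly what rules out star-like configurations), at the price of invoking two external lemmas; the paper's route is self-contained but requires the sub-case analysis of how vertices attach to the diameter path. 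One small point you share with the paper's own proof: for the sufficiency direction of P2 one also needs $Q(P_N)=c_N$, i.e.\ that the \emph{maximum} closeness on the path is attained at its center; this follows from Schur-concavity of $(a,b)\mapsto H_a+H_b$ under fixed $a+b$, and is worth stating explicitly.
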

\begin{proof}
    See the proof of Proposition \ref{prop:closeness} for 1. and 2. For 4., note that the star node $v^\ast$ in an undirected star graph $G^\ast$ of size $|\mathcal{V}_{G^\ast}|\geq 2$ comes with $\mathfrak{C}^{close}_{in}(v^\ast, G^\ast) = \mathfrak{C}^{close}_{out}(v^\ast, G^\ast) = 1$, and thus $Q(G^\ast) =1$.
    
   Regarding 3., it suffices again to restrict the discussion to tree graphs since $Q$ is $\mathcal{I}_{e\_ del}$-monotone. Consider a tree graph $G^T$ of size $N$, and let $v,w\in\mathcal{V}_{G^T}$ be two nodes (not necessarily unique) that come with a maximal distance $l^{G^T}_{vw} (= l^{G^T}_{wv})\leq N-1$ among all node pairs. Moreover, note that the path $p^{G^T}_{vw} := (v, v_2,\cdots , w)$ connecting $v$ and $w$ in $G^T$ is unique since a tree graph contains no cycles. Now we distinguish between two cases: 
    \begin{enumerate}
        \item Suppose $l^{G^T}_{vw}$ is even. Then the number of nodes on the path $p^{G^T}_{vw}$ is odd. Thus we find a node $u\in p^{G^T}_{vw}$ that lies in the center of the path connecting $v$ and $w$, coming with $l^{G^T}_{vu} (=l^{G^T}_{uv}) = l^{G^T}_{wu}  (=l^{G^T}_{uw}) = l^{G^T}_{vw}/2$. 
        Moreover, since $v$ and $w$ come with the maximal distance among all node pairs, we have $l^{G^T}_{uq}\leq l^{G^T}_{vw}/2$ for all nodes $q\in\mathcal{V}_{G^T}$:
        \begin{enumerate}
            \item The statement is clear for each node $q$ that lies on the path connecting $v$ and $w$. 
            \item Suppose the path $p^{G^T}_{vq}$ between $q$ and $v$ or $p^{G^T}_{wq}$ between $q$ and $w$ does not intersect with the path $p^{G^T}_{vw}$ connecting $v$ and $w$ (apart from the node $v$ or $w$). W.l.o.g. assume $p^{G^T}_{vq}\setminus\{v\}\cap p^{G^T}_{vw} = \emptyset$. Since all node pairs in $G^T$ are connected by exactly one path due to the absence of cycles in $G^T$, we thus have $l^{G^T}_{wq} = l^{G^T}_{vw} + l^{G^T}_{vq} > l^{G^T}_{vw}$. But this contradicts the assumption that  $v$ and $w$ come with the maximal distance among all node pairs. 
            \item For the last case, consider the possibility that the paths connecting node $q$ to $v$ and to $w$ depart from a node $r\in p^{G^T}_{vw}$. Then $l^{G^T}_{vq} = l^{G^T}_{vr} + l^{G^T}_{rq}$ and $l^{G^T}_{wq} = l^{G^T}_{wr} + l^{G^T}_{rq}$. We can now either have $r=u$, or if not, then for either $v$ or $w$, namely the node with a larger distance to $q$,  $u$ must lie on the path connecting $v$ and $q$, or $w$ and $q$, respectively. If we had the case that $l^{G^T}_{uq}> l^{G^T}_{vw}/2$, then again, since $G^T$ contains no cycles, the path between $v$ and $q$ or between $w$ and $q$ now must be larger than $l^{G^T}_{vw}$, contradicting our initial assumption.
        \end{enumerate}
        Now, since $u$ lies in the centre of the path between $v$ and $w$, for any distance $j\leq l^{G^T}_{vw}/2$, we find two different nodes $x,y$ on the path $p^{G^T}_{vw}$ with $l^{G^T}_{ux}, l^{G^T}_{uy} = j$. Moreover, due to  $l^{G^T}_{uq}\leq l^{G^T}_{vw}/2$ for all $q\in\mathcal{V}_{G^T}$ we obtain 
        \begin{equation*}
            \sum_{q\in\mathcal{V}_{G^T}\setminus\{ u\}}\frac{1}{l^{G^T}_{uq}}\geq \sum_{j=1}^{l^{G^T}_{vw}/2} 2\cdot\frac{1}{j} + \sum_{q\notin p^{G^T}_{vw}} \frac{1}{l^{G^T}_{vw}/2} = \sum_{j=1}^{l^{G^T}_{vw}/2} 2\cdot\frac{1}{j} + (N-l^{G^T}_{vw}-1)\cdot \frac{1}{l^{G^T}_{vw}/2}
        \end{equation*}
        If $N$ is odd, then $N-l^{G^T}_{vw}-1$ is even, and because of $l^{G^T}_{vw}\leq N-1$ we then we find 
         \begin{equation*}
            \sum_{q\in\mathcal{V}_{G^T}\setminus\{ u\}}\frac{1}{l^{G^T}_{uq}}\geq \sum_{j=1}^{(N-1)/2} 2\cdot\frac{1}{j}.
        \end{equation*}
        
        In the case that $N$ is even, we find that $N-l^{G^T}_{vw}-1$ is odd, thus
         \begin{equation*}
            \sum_{q\in\mathcal{V}_{G^T}\setminus\{ u\}}\frac{1}{l^{G^T}_{uq}}\geq \sum_{j=1}^{(N-2)/2} 2\cdot\frac{1}{j} + \frac{1}{l^{G^T}_{vw}/2} \geq \sum_{j=1}^{(N-2)/2} 2\cdot\frac{1}{j} + \frac{2}{N-1} \geq \sum_{j=1}^{(N-2)/2} 2\cdot\frac{1}{j} + \frac{2}{N}.
        \end{equation*}
    \item Suppose that $l^{G^T}_{vw}$ is odd. Then we find two nodes $u_v$ and $u_w$ in the center of the path between $v$ and $w$, where $l^{G^T}_{v u_v}, l^{G^T}_{w u_w} = (l_{vw}-1)/2$, and $l^{G^T}_{w u_v}, l^{G^T}_{v u_w} = (l_{vw}+1)/2$. Moreover, analogously to the considerations from 1., we then find that $l^{G^T}_{u_v q}, l^{G^T}_{u_w q} \leq (l^{G^T}_{vw}+1)/2 \leq N/2$ for all nodes $q\in\mathcal{V}_{G^T}$. Thus, for $u\in\{ u_v, u_w\}$ we have
       \begin{equation*}
            \sum_{q\in\mathcal{V}_{G^t}\setminus\{ u\}}\frac{1}{l^{G^T}_{uq}} = \sum_{j=1}^{(l^{G^T}_{vw}-1)/2} 2\cdot\frac{1}{j} + \frac{1}{(l^{G^T}_{vw}+1)/2} + \sum_{q\notin p^{G^T}_{vw}} \frac{1}{l^{G^T}_{uq}} \geq \sum_{j=1}^{(l^{G^T}_{vw}-1)/2} \frac{2}{j} + (N-l^{G^T}_{vw})\cdot \frac{1}{(l^{G^T}_{vw}+1)/2}.
        \end{equation*}  
        Now, if $N$ is odd, then we must have $l^{G^T}_{u_v q}, l^{G^T}_{u_w q} \leq (l^{G^T}_{vw}+1)/2 \leq (N-1)/2$, and    $N-l^{G^T}_{vw}$ is even, which together yields
         \begin{equation*}
            \sum_{q\in\mathcal{V}_{G^T}\setminus\{ u\}}\frac{1}{l^{G^T}_{uq}}\geq \sum_{j=1}^{(N-1)/2} \frac{2}{j}.
        \end{equation*}
        If $N$ is even, then $N-l^{G^T}_{vw}$ is odd, and therefore, we can estimate
        
          \begin{equation*}
            \sum_{q\in\mathcal{V}_{G^T}\setminus\{ u\}}\frac{1}{l^{G^T}_{uq}}\geq \sum_{j=1}^{(N-1)/2} \frac{2}{j} + \frac{1}{N/2} = \sum_{j=1}^{(N-1)/2} \frac{2}{j} + \frac{2}{N}.
        \end{equation*}
    \end{enumerate}
    Finally, for the line graph of size $N$ we indeed have  
     \begin{equation*}
            \sum_{q\in\mathcal{V}_{G^L}\setminus\{ u\}}\frac{1}{l^{G^L}_{uq}} = \sum_{j=1}^{(N-1)/2} \frac{2}{j}
        \end{equation*}
        if $N$ is odd (which then is equivalent to $l^{G^L}_{vw}$ being even). In the case that $N$ is even (and thus $l^{G^L}_{vw}$ is odd), then 
        \begin{equation*}
            \sum_{q\in\mathcal{V}_{G^L}\setminus\{ u\}}\frac{1}{l^{G^L}_{uq}}= \sum_{j=1}^{(N-1)/2} \frac{2}{j} + \frac{2}{N}.
        \end{equation*}

\end{proof}

\subsubsection{Control of the Spectral Radius of Undirected Graphs}\label{sec:specrad} 

Dynamic systems on networks are usually described by operators that depend on the adjacency matrix $A_G$ of the graph $G$. The linear analysis of steady states and stability properties of these states typically leads to an analysis of the spectral properties of the adjacency matrix. Of particular importance here is the largest eigenvalue $\lambda_1^G$ of $A_G$, called the \textit{spectral radius} of $G$. For details, we refer to Section 17.4 in \cite{Newman2018}. 

\begin{example}\label{ex:SIS}
    A commonly applied model for the propagation of contagious threats in (cyber) networks is the SIS (\textnormal{Susceptible-Infected-Susceptible}) Markov process, see Section 3.2 in \cite{Awiszus2023a}. The single node infection dynamics of the system can be described by a set of ordinary differential equations, according to Kolmogorov's forward equation. A major problem as regards the solvability of this system of ordinary differential equations is the fact that this system is not closed as higher order moments appear, cascading up to the network size. A simplified model can be obtained when assuming that all infection probabilities in the network are pairwise uncorrelated. Given a  network $G$ of size $N$ with enumerated nodes, this  so-called \textnormal{NIMFA approximation} or \textnormal{individual-based model} is governed by the set of equations
    \begin{equation}\label{eq:NIMFA}
\frac{d \mathbb{E}[I_i(t)]}{d t} = \tau\Big( \sum_{j=1}^N A_G(i,j) \big(1- \mathbb{E}[I_i(t)]\big) \mathbb{E}[I_j(t)]\Big) - \gamma \mathbb{E}[I_i(t)],\quad i=1,\ldots , N,
\end{equation}
where $\tau, \gamma >0$ are the infection and recovery rates, and $I_i(t) = 1$ if node $v_i$ is infected at time  $t$, and $I_i(t) = 0$  if node $v_i$ is susceptible at time $t$.
The infection probabilities of this model provide upper bounds for the actual infection probabilities and can therefore be used for a conservative estimation, see Theorem 3.3 in \cite{Kiss2017}.
The stability of steady states of system \eqref{eq:NIMFA}, in particular of the disease-free state $(I_1,\cdots , I_N) = (0,\cdots, 0)$, is now relevant in the analysis of the epidemic vulnerability of the network $G$. Linearizing the individual-based model from \eqref{eq:NIMFA} around the disease-free state induces the eigenvalue problem 
\begin{equation*}
    \det (\tau A_G -\gamma \mathbb{I}_N - \lambda \mathbb{I}_N) = 0
\end{equation*}
where $\mathbb{I}_N$ is the $N$-dimensional identity matrix and the largest eigenvalue of $\tau A_G - \gamma \mathbb{I}_N$ is given by $\tau \lambda_1^G-\gamma$. If $G$ is undirected and connected, it can be shown that if $\tau/\gamma < 1/\lambda_1^G$, then the disease free state is stable and no endemic state exists; if $\tau /\gamma > 1/\lambda_1^G$, then the disease-free state is unstable and there exists a unique endemic state, which is stable. For details, we refer to Section 3.4.4 in \cite{Kiss2017} and Section 17.4 in \cite{Mieghem2014}.
\end{example}

\begin{remark}
 In the setting of directed graphs, however, the information of the graph spectrum may have little or no relevance in terms of graph vulnerability. For example, the only element in the spectrum of both the edgeless and the directed star graph is =. Consequently, Properties P1 and P3 cannot jointly be satisfied in a directed graph setting when choosing $Q(G) =\lambda_1^G$ for acceptance set $\mathcal{A}$. Moreover, the spectrum of directed graphs can be complex due to the lack of the adjacency matrix' symmetry.
 \end{remark}
 The following definition is needed for the proof of the next proposition:
\begin{definition}\label{def:kelman}
Given a graph $G\in\mathcal{G}^{ud}$ and two nodes $v, u\in\mathcal{V}_G$, a \textit{Kelmans operation} in $G$ is a $\mathcal{I}_{shift}$-strategy where each undirected edge $\{(v, w), (w,v)\}\in\mathcal{E}_G$ is replaced by $\{(u,w), (w,u)\}$ if $w\neq u$ and $\{(u,w), (w,u)\}\cap\mathcal{E}_G=\emptyset$. 
\end{definition} 

\begin{proposition}\label{prop:spectr}
    Consider a set $\mathcal{A}\subset\mathcal{G}^{ud}$ as in \eqref{eq:accsetundir} with  $Q(G) = {\lambda}_1^G$. Then 
\begin{enumerate}
\item $Q$ is $\mathcal{I}$-monotone for any $\mathcal{I}\subset \mathcal{I}^{ud}_{e\_del}\cup \mathcal{I}^{ud}_{split}$. Hence $(\mathcal{A}, \mathcal{I}, \mathcal{C})$ is a DMFNR for any cost function for $(\mathcal{A}, \mathcal{I})$. 
\item $\mathcal{A}$ satisfies P1 if and only if $l_N\geq 0$ for all $N\geq 2$,
 \item $\mathcal{A}$ satisfies  P2 if and only if there is a $N_0$ such that $l_N\geq \max_{j\in \{ 1, \cdots, N\}} \cos(\pi j / (N+1))$ for all $N\geq N_0$. In particular, this is the case if there is a $N_0$ such that $l_N\geq 2$ for all $N\geq N_0$.
 \item $\mathcal{A}$ satisfies  P3 if and only if $l_N < \sqrt{(N-1)}$ for all $N\geq 1$.
\end{enumerate}
\end{proposition}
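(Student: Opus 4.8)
The strategy is to treat the four claims in turn, the crux being the monotonicity in part~1, from which the DMFNR conclusion follows by the monotonicity lemma (an $\mathcal{I}$-monotone $Q$ makes $\mathcal{I}$ risk-reducing, hence $(\mathcal{A},\mathcal{I},\mathcal{C})$ a DMFNR for every admissible cost function). Throughout I would use that $A_G$ is symmetric and entrywise non-negative, so by Perron--Frobenius $\lambda_1^G$ is the largest eigenvalue and admits a non-negative eigenvector, and that $\lambda_1$ is monotone under entrywise domination: if $0\le B\le C$ entrywise with $B,C$ symmetric, then $\lambda_1(B)\le\lambda_1(C)$, seen by testing the Rayleigh quotient of $C$ against the non-negative Perron vector of $B$.

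For part~1, edge deletion only zeroes out a symmetric pair of entries, so $A_{\kappa(G)}\le A_G$ entrywise and $\lambda_1^{\kappa(G)}\le\lambda_1^G$ is immediate. For a node split $H=\kappa(G)$ splitting $v$ into non-adjacent nodes $v,\tilde v$ whose neighbourhoods partition $\mathcal{N}^{G}_v$, I take the non-negative Perron vector $w$ of $A_H$ and define $u$ on $\mathcal{V}_G$ by $u_s=w_s$ for $s\ne v$ and $u_v=\sqrt{w_v^2+w_{\tilde v}^2}$. Since the common-node block of $A_G$ coincides with that of $A_H$, while $(A_G)_{vs}=(A_H)_{vs}+(A_H)_{\tilde v s}$ and $(A_H)_{v\tilde v}=0$, a direct expansion gives $u^\top A_G u - w^\top A_H w = 2\sum_s\big[(A_H)_{vs}(u_v-w_v)+(A_H)_{\tilde v s}(u_v-w_{\tilde v})\big]w_s\ge 0$, because $u_v\ge\max(w_v,w_{\tilde v})$ and every quantity is non-negative. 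As $\|u\|^2=\|w\|^2=1$, the Rayleigh bound yields $\lambda_1^G\ge u^\top A_G u\ge\lambda_1^H$, i.e.\ $Q(H)\le Q(G)$.

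Parts 2--4 reduce to computing $\lambda_1$ on extremal graphs and then invoking the monotonicity of part~1. For P1 the edgeless graph has $A=0$, hence $Q=0$, so P1 holds exactly when $l_N\ge0$ for all $N\ge2$. For P2, any connected $G$ of size $N$ contains a spanning tree $T$ with $A_T\le A_G$, so $\lambda_1^G\ge\lambda_1^T$; invoking that the path $P_N$ minimises the spectral radius among connected graphs of order $N$ (a classical fact that can be established via the Kelmans operation of Definition~\ref{def:kelman}, since that operation does not decrease $\lambda_1$, so every tree is reachable from the path with non-decreasing $\lambda_1$), and that $\lambda_1^{P_N}$ is the largest of the path eigenvalues $2\cos(\pi j/(N+1))$, $j=1,\dots,N$, one obtains that $\mathcal{A}$ contains a connected graph of size $N$ iff $l_N\ge\lambda_1^{P_N}$. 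This gives the stated equivalence, and the ``in particular'' follows from $\lambda_1^{P_N}<2$. For P3, the undirected star $K_{1,N-1}$ has $\lambda_1=\sqrt{N-1}$; since any graph with a star node reduces to this star by deleting the remaining edges and $Q$ is $\mathcal{I}^{ud}_{e\_del}$-monotone, the minimal value of $Q$ over size-$N$ graphs with a star node is $\sqrt{N-1}$, so excluding all of them is equivalent to $l_N<\sqrt{N-1}$ for all $N\ge1$ (the size-one case forcing $l_1<0$).

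The main obstacle is the node-splitting inequality in part~1 together with the extremality of the path in part~2: the former hinges on the precise test vector $u_v=\sqrt{w_v^2+w_{\tilde v}^2}$, chosen so that $\|u\|$ is preserved while the cross terms can only grow, and the latter rests on the non-trivial spectral-extremal property of the path, for which the Kelmans operation is the intended tool. The remaining steps---the spectra of $P_N$ and $K_{1,N-1}$ and the boundary cases in $N$---are routine.
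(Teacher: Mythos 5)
Your proposal is correct, and parts 2--4 follow essentially the same route as the paper: compute $Q$ on the extremal configurations (edgeless graph, path, star), invoke the $\mathcal{I}^{ud}_{e\_del}$-monotonicity of the spectral radius to reduce connected graphs to spanning trees and star-node graphs to the star $K_{1,N-1}$, and cite the classical fact that the path minimises $\lambda_1$ among connected graphs of a given order (the paper cites \cite{Dam2007} for this; your parenthetical sketch via Kelmans operations is not worked out, but since you invoke it as a known result, exactly as the paper does, this is not a gap). Both proofs also agree that the path eigenvalues are $2\cos(\pi j/(N+1))$, confirming that the missing factor $2$ in the statement of part 3 is a typo. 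Where you genuinely diverge from the paper is the node-splitting monotonicity in part 1. The paper argues structurally: from the split graph $H$ it shifts all edges at $\tilde v$ back to $v$, observes that this is a Kelmans operation (possible because $\mathcal{N}^H_v\cap\mathcal{N}^H_{\tilde v}=\emptyset$), which by \cite[Proposition 3.1.5]{Brouwer2012} does not decrease $\lambda_1$, and notes that the resulting graph is $G$ plus an isolated vertex, whence $\lambda_1^H\le\lambda_1^{G\,\dot\cup\,(\{\tilde v\},\emptyset)}=\lambda_1^G$. You instead give a direct variational argument: take the unit Perron vector $w$ of $A_H$ and test $A_G$ against $u$ with $u_v=\sqrt{w_v^2+w_{\tilde v}^2}$, $u_s=w_s$ otherwise; your expansion of $u^\top A_G u - w^\top A_H w$ is correct (it uses precisely that the $H$-neighbourhoods of $v$ and $\tilde v$ partition the $G$-neighbourhood of $v$, that $(A_H)_{v\tilde v}=0$, and that $w\ge 0$), and norm preservation gives $\lambda_1^G\ge u^\top A_G u\ge \lambda_1^H$. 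The paper's route is shorter given the cited machinery and makes the connection to Kelmans operations explicit; yours is self-contained and elementary, needing only Perron--Frobenius and the Rayleigh quotient, and it exhibits exactly where non-negativity of the eigenvector enters. Both are valid proofs of the proposition.
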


\begin{proof}
    \begin{enumerate}
        \item $Q$ is $\mathcal{I}^{ud}_{e\_ del}$-monotone according to \cite[Proposition 3.1.1]{Brouwer2012}. Standard results show that $Q$ is also $\mathcal{I}^{ud}_{split}$-monotone: Let $G\in\mathcal{G}^{ud}$, and $H=\kappa_{split}^{\mathcal{J}, v, \tilde{v}}(G)$ a network resulting from a split of node $v$ by adding a new node $\tilde{v}$ and rewiring edges contained in $\mathcal{J}$, where $(v,w)\in\mathcal{J}\Leftrightarrow (w,v)\in\mathcal{J}$. Then we can construct a new graph $H'$ from $H$ by a consecutive edge shift where all edges $(\tilde{v}, w)$ are replaced by $(v,w)$ and all $(w, \tilde{v})$ by $(w, v)$. This is a particular example of a  Kelmans operation, see Section 3.1.3 in \cite{Brouwer2012} for details, since $\mathcal{N}^H_v\cap\mathcal{N}^H_{\tilde{v}}=\emptyset$. Note that $H' = G\dot\cup (\{\tilde{v}\}, \emptyset)$ is a disjoint union of $G$ and the isolated node $\tilde{v}$. Therefore, we have  ${\lambda}_1^{H'} = {\lambda}_1^G$, and the spectral radius of a graph is non-decreasing under a Kelmans operation, see \cite[Proposition 3.1.5]{Brouwer2012}. This gives ${\lambda}_1^H\leq{\lambda}_1^{H'} = {\lambda}_1^G$.
    \item trivial
    \item Note that undirected line graphs of size $N$ have a minimal spectral radius among all connected graphs of this size in $\mathcal{G}^{ud}$, cf. \cite[Lemma 1]{Dam2007}. Their spectrum  consists of the eigenvalues $\lambda_j = 2\cos (\pi j / (N+1)) $, $j=1,\cdots, N$, see \cite[Section 1.4.4]{Brouwer2012}. 
    \item For an undirected star graph of size $N$, simple calculations show that the spectrum consists of the eigenvalues $\sqrt{N-1}$ and $-\sqrt{N-1}$, both with a multiplicity of one, and the eigenvalue 0 with multiplicity of $N-2$.
    \end{enumerate}
\end{proof}

\bibliographystyle{comnet}
\bibliography{ReferencesCyberProject}
%








\end{document}